\documentclass[12pt]{article}%
\usepackage{amssymb}
\usepackage{amsfonts}
\usepackage{amsmath}
\usepackage{amsthm}
\usepackage{mathtools}
\usepackage{bm}
\usepackage[nohead]{geometry}
\usepackage[singlespacing]{setspace}
\IfFileExists{footmisc.sty}{\usepackage[bottom]{footmisc}}{}
\usepackage{indentfirst}
\usepackage{graphicx}%
\usepackage{rotating}
\usepackage{placeins}
\usepackage{enumerate}
\usepackage{booktabs}
\usepackage{subcaption}
\usepackage[flushleft]{threeparttable}
\usepackage{float}
\usepackage{verbatim}
\usepackage{epstopdf}
\usepackage{pdflscape}
\usepackage{multirow}
\usepackage{enumitem}
\usepackage{microtype}
\usepackage{bibunits}
\usepackage{siunitx}
\defaultbibliographystyle{plainnat}
\defaultbibliography{ref}

\newtheorem{theorem}{Theorem}[section]
\newtheorem{lemma}[theorem]{Lemma}
\newtheorem{example}[theorem]{Example}
\newtheorem{proposition}[theorem]{Proposition}
\newtheorem{corollary}[theorem]{Corollary}

\newtheorem{assumption}{Assumption}
\theoremstyle{remark}
\newtheorem{remark}[theorem]{Remark}

\newcommand{\E}{\mathbb{E}}
\newcommand{\Var}{\mathrm{Var}}
\newcommand{\Cov}{\mathrm{Cov}}

\newcommand{\1}{\mathbf{1}}

\newcommand{\R}{\mathbb{R}}
\newcommand{\T}{\mathcal{T}}

\DeclareMathOperator*{\argmin}{arg\,min}

\providecommand{\abs}[1]{\lvert#1\rvert}

\usepackage[usenames,dvipsnames]{color}
\usepackage[round]{natbib}
\usepackage[hyperfootnotes=false]{hyperref}
\hypersetup{
  colorlinks,
  citecolor=Blue,
  linkcolor=Blue,
  urlcolor=Blue}

\numberwithin{equation}{section}

\begin{document}
\begin{bibunit}
\begin{spacing}{1.2}

\title{Tuning-Free Tests for Treatment Effect Heterogeneity\\
via Expectile Processes}

\author{Haokun Lu\thanks{Department of Business Statistics and Econometrics,
Guanghua School of Management, Peking University, Beijing, 100871, China. Email: \texttt{hklu25@stu.pku.edu.cn}.}\\Peking University
\and Xiaojun Song\thanks{Department of Business Statistics and Econometrics, Guanghua School of Management, Peking University, Beijing, 100871, China. Email: \texttt{sxj@gsm.pku.edu.cn}. This work was supported by the National Natural Science Foundation of China [Grant Numbers 72373007 and 72333001]. The author also gratefully acknowledges the research support from the Center for Statistical Science of Peking University, China, and the Key Laboratory of Mathematical Economics and Quantitative Finance (Peking University) of the Ministry of Education, China.}\\Peking University}
\maketitle
\vspace{-0.5cm}
\begin{abstract}
This paper develops tuning-free tests for treatment-effect heterogeneity. By characterizing the null hypothesis as constancy of the expectile treatment-effect curve, we construct test statistics as continuous functionals of the mean-centered sample expectile process. We establish a uniform Bahadur representation, derive the asymptotic properties of the test statistics under the null and fixed alternatives, and show nontrivial power against local alternatives approaching the null at the $n^{-1/2}$ rate. For feasible inference, we develop a fast multiplier bootstrap that consistently approximates the nonpivotal null distribution while capturing the first-order effect of estimating the unknown common shift. The procedure is tuning-free, requiring no density or score-function estimation, bandwidth selection, or repeated refitting. The framework further incorporates a pointwise-studentized refinement and a covariate adjustment that reduces asymptotic variation without altering the population discrepancy. Monte Carlo comparisons with existing methods show generally reliable size control, competitive power, and substantial computational savings. An application to a large-scale water-conservation experiment illustrates the proposed tests.

\par\vspace{0.5em}
\noindent\textbf{Keywords}: expectile processes;
treatment-effect heterogeneity;
multiplier bootstrap; studentization; covariate adjustment.

\noindent\textbf{JEL Classification Number:} C12; C14; C15.
\end{abstract}
\end{spacing}

\thispagestyle{empty}
\newpage
\normalsize

\section{Introduction}
\label{sec:intro}

Over the past several decades, the study of treatment-effect heterogeneity has become a central concern in causal inference and program evaluation \citep[see, e.g.,][]{heckman1997making,manski2004statistical,crump2008nonparametric,athey2016recursive,wager2018estimation}. An important motivation is that average treatment effects alone cannot fully characterize how an intervention changes the distribution of outcomes. Indeed, interventions with similar average effects may generate markedly different changes in dispersion and tail behavior. Such distributional differences can reveal economically meaningful heterogeneity in treatment responses, with potentially important implications for inequality and welfare \citep{abadie2002instrumental,bitler2006whatmean}. In this paper, we investigate treatment-effect heterogeneity from this distributional perspective. Let $F_d$ denote the distribution function of $Y(d)$, $d\in\{0,1\}$. We consider the null hypothesis
\begin{equation}
\label{eq:H0}
H_0:\quad
\exists\,\delta\in\R
\ \text{ such that }
F_1(y)=F_0(y-\delta)
\quad\text{for all }y\in\R.
\end{equation}

The null hypothesis in \eqref{eq:H0} allows the marginal outcome distributions to differ only by a common location shift, which necessarily equals the average treatment effect. If $\delta$ were known, testing the null would reduce to a classical two-sample problem after aligning the distributions, for which Kolmogorov--Smirnov, Cram\'er--von Mises, and permutation tests are readily available \citep{lehmann2005testing}. In practice, however, estimating $\delta$ alters the first-order behavior of the empirical CDF and quantile processes commonly used in this literature. Consequently, conventional critical values or naive permutation procedures that ignore this estimation effect can lead to invalid inference, even asymptotically, a consequence of the classical Durbin problem \citep{durbin1973weak,koenker2002inference}. The literature has developed two broad approaches to valid inference. The first restores asymptotic pivotality by removing the effect of nuisance-parameter estimation, notably through the martingale transformations of \citet{khmaladze1981martingale,khmaladze1993goodness}, which \citet{koenker2002inference} apply to quantile-regression-process inference, including tests of location-shift restrictions. The second uses resampling to obtain valid critical values, as in the recentered subsampling approach of \citet{chernozhukov2005inference} and the confidence-set randomization procedure of \citet{ding2016randomization}, which maximizes Fisher randomization $p$-values over candidate shifts. More recently, \citet{chung2021permutation,chung2025quantile} combine these two approaches, drawing on earlier work on permutation tests based on suitably pivotal statistics \citep{neuhaus1993conditional,janssen1997studentized,chung2013exact}. They apply Khmaladze transformations to CDF and quantile processes, respectively, and establish asymptotically valid permutation inference using the resulting pivotal statistics. These approaches provide well-established solutions to the inferential problem, but involve different practical considerations. Khmaladze-based procedures require nonparametric estimation of density and score-related functions, with bandwidth choices that can affect finite-sample performance, as documented by \citet{koenker2002inference}. Recentered subsampling requires selecting a subsample size and repeated estimation, which can be computationally demanding, while confidence-set randomization involves repeated testing over candidate shifts. These considerations leave room for a more direct and computationally efficient approach to valid inference, without the need for auxiliary tuning parameters.

In this paper, we develop such a procedure using expectiles, introduced by \citet{newey1987asymmetric}. Just as quantiles generalize the median through asymmetric absolute loss, expectiles generalize the mean through asymmetric squared loss, assigning different weights to positive and negative deviations. Like quantiles, the full expectile curve characterizes the underlying distribution and is equivariant to location shifts. Consequently, the null hypothesis in \eqref{eq:H0} is equivalent to constancy of the difference between the treatment and control expectile curves, which we call the expectile treatment-effect (ETE) curve. Since the expectile at $\tau=1/2$ coincides with the mean, the common shift under $H_0$ equals the ETE at this index. We therefore center the ETE curve at its mean coordinate to remove the unknown shift from the population restriction. A further advantage for inference comes from the expectile estimating equation. Whereas the derivative of the quantile estimating equation involves the unknown density at the corresponding quantile, the expectile counterpart depends only on the distribution function evaluated at the expectile \citep{newey1987asymmetric,holzmann2016expectile}. Under continuity at the relevant expectiles, this derivative is well defined and uniformly bounded away from zero on any fixed interior index interval. Consequently, all quantities entering the first-order representation can be estimated using sample expectiles and simple sample averages of weighted residuals and indicator functions, without auxiliary density or score-function estimation, nonparametric smoothing, or bandwidth selection. This makes tuning-free inference feasible.

Yet the density-free first-order structure of expectiles does not, by itself, resolve the Durbin problem. Although centering the ETE curve at its mean coordinate removes the unknown common shift from the population discrepancy, estimating that shift still affects the sample process at first order. We establish a uniform Bahadur representation for the centered sample ETE process that explicitly accounts for this estimation effect and show that the process converges weakly under the null to a generally nonpivotal Gaussian limit. Rather than transforming the process to achieve asymptotic pivotality, we retain the estimation effect and use the uniform representation to directly approximate the nonpivotal null distribution through a fast multiplier bootstrap. The influence-function contributions are computed once from the original sample, after which each bootstrap replication requires only random reweighting, without resampling observations or refitting expectile curves. We establish the asymptotic validity of the bootstrap and the consistency of the resulting Kolmogorov--Smirnov and Cram\'er--von Mises tests against fixed departures from constancy. We also characterize the tests' behavior under local alternatives approaching the null at the $n^{-1/2}$ rate, showing how the resulting deterministic drift in the limiting Gaussian process yields nontrivial local power.

We further develop two refinements to improve the practical usefulness of the proposed tests. First, we introduce pointwise studentization to account for differences in sampling variability across expectile indices, which may improve power when sampling precision varies substantially along the expectile curve, as can occur with skewed or heavy-tailed outcomes. Second, since empirical researchers often have access to rich baseline information, we introduce a covariate adjustment that exploits pretreatment covariates to reduce sampling variation without altering the population discrepancy. Under the null, the adjustment reduces the asymptotic covariance kernel in the positive-semidefinite order, without requiring a correctly specified outcome regression model. Both refinements preserve the tuning-free nature of the procedure and can be accommodated within the same fast multiplier bootstrap framework.

The remainder of the paper is organized as follows. Section~\ref{sec:framework} introduces the testing problem and the centered expectile process. Section~\ref{sec:asymptotic} establishes the asymptotic null distribution and investigates the power properties of the proposed tests against fixed and local alternatives. Section~\ref{sec:inference} develops the multiplier bootstrap procedure and pointwise-studentized tests, followed by the covariate adjustment in Section~\ref{sec:covariate-adjustment}. Section~\ref{sec:simu} presents Monte Carlo experiments evaluating size, power, and computational performance relative to existing methods across a broad range of data-generating processes. Section~\ref{sec:empirical} applies the proposed tests to the large-scale water-conservation experiment of \citet{ferraro2013using} to examine treatment-effect heterogeneity under different interventions. Section~\ref{sec:con} concludes. Proofs and additional simulation results are collected in the Online Appendix.  Finally, an R package implementing the proposed tests is publicly available at \url{https://github.com/HaokunLU722/embt}.

\textsc{Notation}. For an index set $\mathcal A$, let $\ell^\infty(\mathcal A)$ denote the Banach space of bounded real-valued functions on $\mathcal A$, equipped with the supremum norm $\|f\|_\infty=\sup_{a\in\mathcal A}|f(a)|$. We write $X_n\rightsquigarrow X$ for weak convergence of random elements in $\ell^\infty(\mathcal A)$ in the Hoffmann--J{\o}rgensen sense; see, for example, Definition 1.3.3 of \citet{vdvw2023}. We write $\xrightarrow{p}$ for convergence in probability. The norm $\|\cdot\|$ is Euclidean for finite-dimensional vectors, and stochastic-order statements for random functions are understood with respect to the supremum norm.

\section{Test Formulation}
\label{sec:framework}

\subsection{Setup}
\label{subsec:framework-null}

Consider a randomized experiment with binary treatment indicator $D_i\in\{0,1\}$ and potential outcomes $Y_i(1)$ and $Y_i(0)$. The observed outcome is $Y_i=D_iY_i(1)+(1-D_i)Y_i(0)$. Let $F_d$ denote the marginal distribution function of $Y_i(d)$, $d\in\{0,1\}$. Under random assignment independent of the potential outcomes, these distributions are identified from the observed outcomes in the respective treatment arms. Our question is whether treatment changes the outcome distribution in ways that cannot be explained by a common shift, as formulated by $H_0$ in \eqref{eq:H0}. We write $H_1$ for its complement.

To relate this question to treatment-effect heterogeneity at the individual level, consider the stronger hypothesis that treatment has the same effect on every unit:
\begin{equation}
\label{eq:H0S}
H_0^S:\qquad Y_i(1)-Y_i(0)=\delta\quad\text{for all }i\quad\text{for some }\delta\in\mathbb R.
\end{equation}
Since only one potential outcome is observed for each unit, individual treatment effects are not identified from the observed data, reflecting the fundamental problem of causal inference \citep{holland1986statistics}. For a prespecified $\delta=\delta_0$, \eqref{eq:H0S} is sharp in Fisher's sense; when $\delta$ is unknown, randomization inference must account for the unknown common effect, as in \citet{ding2016randomization}. The stronger null $H_0^S$ implies $H_0$, but the converse need not hold, as individual treatment effects may vary even when the marginal outcome distributions differ only by a common shift. Our tests thus examine treatment-effect heterogeneity through its implications for the outcome distributions.

For the asymptotic analysis, we adopt the following independent two-sample superpopulation framework:
\begin{equation}\label{eq:samples}
Y_1^{(1)},\ldots,Y_{n_1}^{(1)} \stackrel{\mathrm{iid}}{\sim}F_1,
\qquad
Y_1^{(0)},\ldots,Y_{n_0}^{(0)} \stackrel{\mathrm{iid}}{\sim}F_0,
\end{equation}
where the two samples are mutually independent.

\subsection{Expectile Formulation of the Null}
\label{subsec:expectiles}

We now reformulate the null hypothesis \eqref{eq:H0} in terms of expectiles. To begin, we briefly review the expectile functional and its interpretation. For $\tau\in(0,1)$ and a random variable $Y$ with finite second moment, the $\tau$-expectile is defined as the unique minimizer of the asymmetric least-squares criterion
\begin{equation}\label{eq:expectile-loss}
e(\tau)
=
\argmin_{e\in\R}
\E\!\left[
\left|\tau-\1\{Y\le e\}\right|(Y-e)^2
\right].
\end{equation}
The index $\tau$ controls the relative weight placed on positive and negative residuals: when $\tau>1/2$, observations above $e$ receive greater weight, whereas when $\tau<1/2$, observations below $e$ receive greater weight. At $\tau=1/2$, the criterion reduces, up to a constant factor, to ordinary squared loss, so that $e(1/2)=\E Y$ \citep{newey1987asymmetric}. Equivalently, $e(\tau)$ is characterized by the first-order condition
\begin{equation}\label{eq:psi-def}
\E\psi_\tau\{Y,e(\tau)\}=0,
\qquad
\psi_\tau(y,e)
=
2\abs{\tau-\1\{y\le e\}}(y-e).
\end{equation}
Expectiles are translation equivariant: for any $b\in\R$, $e_{Y+b}(\tau)=e_Y(\tau)+b$.

Let $e_d(\tau)$ denote the $\tau$-expectile of $Y(d)$, $d\in\{0,1\}$, and define the \emph{expectile treatment-effect} (ETE) curve
\begin{equation}\label{eq:ETE-def}
\Delta_E(\tau)
=
e_1(\tau)-e_0(\tau),
\qquad
\tau\in(0,1),
\end{equation}
following the terminology of \citet{stahlschmidt2014expectile}. The following result gives the expectile characterization of the location-shift null.

\begin{proposition}
\label{prop:full-curve}
Suppose that $F_0$ and $F_1$ have finite first moments. Then $H_0$ holds if and only if the ETE curve is constant on $(0,1)$; equivalently, there exists $\delta\in\R$ such that $\Delta_E(\tau)=\delta$ for every $\tau\in(0,1)$.
\end{proposition}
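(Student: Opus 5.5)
The forward direction follows from translation equivariance. The converse reduces the claim to the fact that the expectile curve determines the distribution. I will show this via the first-order condition \eqref{eq:psi-def} and the integrated distribution function $\pi(e)=\E(e-Y)^+=\int_{-\infty}^e F(y)\,dy$. Throughout, finite first moments guarantee that $\E(Y-e)^+$ and $\pi(e)$ are finite. They also make the first-order condition well defined, so $e(\tau)$ can be taken as its unique root.

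\emph{($\Rightarrow$).} Under $H_0$, $Y(1)$ has the same distribution as $Y(0)+\delta$. Translation equivariance then gives $e_1(\tau)=e_0(\tau)+\delta$ for all $\tau$, so $\Delta_E\equiv\delta$.

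\emph{($\Leftarrow$).} Suppose $\Delta_E\equiv\delta$, and set $\tilde Y=Y(0)+\delta$ with distribution $\tilde F(y)=F_0(y-\delta)$. By equivariance, $Y(1)$ and $\tilde Y$ have identical expectile curves, and it remains to show $F_1=\tilde F$.

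For a generic $Y$ with mean $\mu$, the condition $\E\psi_\tau\{Y,e\}=0$ rewrites as
\begin{equation*}
\tau\,\rho(e)=(1-\tau)\,\pi(e),\qquad \rho(e)=\E(Y-e)^+=\pi(e)-e+\mu .
\end{equation*}
Let $a<b$ be the essential infimum and supremum of $Y$; the degenerate case $a=b$ is trivial. For every $e\in(a,b)$ both $\pi(e)$ and $\rho(e)$ are strictly positive. Hence
\begin{equation*}
\tau(e)=\frac{\pi(e)}{\pi(e)+\rho(e)}\in(0,1),
\end{equation*}
and by uniqueness $e$ is exactly the $\tau(e)$-expectile. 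The map $e\mapsto\tau(e)$ is strictly increasing, since $\pi$ is nondecreasing, $\rho$ is nonincreasing, and at least one of them changes strictly on $(a,b)$. It follows that the expectile curve is a bijection from $(0,1)$ onto $(a,b)$ with inverse $\tau(\cdot)$.

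Now compare $Y(1)$ and $\tilde Y$. Equal curves give the same range $(a,b)$ and the same inverse $\tau(\cdot)$ on $(a,b)$. Taking $\tau=1/2$ gives equal means, $\mu_1=\tilde\mu=:\mu$. Equality of the ratios $\pi_1/(\pi_1-e+\mu)$ and $\tilde\pi/(\tilde\pi-e+\mu)$ then yields, after cross-multiplication,
\begin{equation*}
(\pi_1(e)-\tilde\pi(e))(\mu-e)=0 .
\end{equation*}
Thus $\pi_1=\tilde\pi$ on $(a,b)\setminus\{\mu\}$, and by continuity of $\pi$ on all of $(a,b)$. Outside $(a,b)$ both functions are determined by the common support and mean: $\pi(e)=0$ for $e\le a$ and $\pi(e)=e-\mu$ for $e\ge b$. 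Hence $\pi_1=\tilde\pi$ on $\R$. Since $F$ is the right derivative of the convex function $\pi$, we get $F_1=\tilde F$, which is $H_0$.

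The main obstacle is the inversion step: verifying that every $e$ in the open support interval is attained as an expectile, that the expectile map is injective, and that the common range forces the same $a,b$. The explicit formula for $\tau(e)$ handles all three at once, so I would present it first. The recovery of $F$ from $\pi$ is then routine.
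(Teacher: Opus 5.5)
Your proof is correct and follows the same overall strategy as the paper. The forward direction is translation equivariance. For the converse, you invert the expectile curve to recover a convex integral transform of $F$, which you then differentiate.

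The routes differ in the inversion step.

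\begin{itemize}
\item The paper works with $M_F(x)=\E|Y-x|$ through the identity $e(\tau)-\mu=(2\tau-1)M_F\{e(\tau)\}$. To invert it, the paper must prove separately that $\tau\mapsto e(\tau)$ is strictly increasing and continuous, and that it tends to the essential infimum and supremum at the ends. It then recovers both $F(x)$ and $F(x-)$ from the one-sided derivatives of $M_F$.
\item You write the inverse down explicitly, $\tau(e)=\pi(e)/\{\pi(e)+\rho(e)\}$. This gives attainment of every $e\in(a,b)$ and injectivity without any continuity or limit argument.
\end{itemize}

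Since $M_F=\pi+\rho=2\pi-e+\mu$, your cross-multiplied identity is the paper's formula $M_F(x)=(x-\mu)/\{2\tau(x)-1\}$ rewritten. Using only the right derivative of $\pi$ suffices because $F$ is right-continuous. Your inversion is more economical. The paper's route produces, along the way, the strong-monotonicity bound that it reuses in Lemma~\ref{lem:appendix-regularity}.

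Two details need a line each.

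First, you claim that equal curves force the same interval $(a,b)$. This needs the range of the curve to be exactly $(a,b)$, but the explicit formula only gives $(a,b)\subseteq\{e(\tau):\tau\in(0,1)\}$. For the reverse inclusion, take nondegenerate $Y$ and $e\le a$. Then $\pi(e)=0<\mu-e=\rho(e)$, so $\tau\rho(e)=(1-\tau)\pi(e)$ fails for every $\tau\in(0,1)$. Symmetrically, for $e\ge b$ one has $\rho(e)=0<e-\mu=\pi(e)$. Hence $e(\tau)\in(a,b)$ and $\tau\{e(\tau)\}=\tau$, which completes the bijection.

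Second, ``the degenerate case is trivial'' should say why. A nondegenerate law has a strictly increasing curve, so a constant curve forces both laws to be degenerate at the same point.
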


Let $\mu_d=\E\{Y(d)\}$. Since $e_d(1/2)=\mu_d$, the constant ETE under $H_0$ equals $\Delta_E(1/2)=\mu_1-\mu_0$. We therefore center the ETE curve at its $1/2$ coordinate and define
\begin{equation}\label{eq:r-def}
r(\tau)
=
\Delta_E(\tau)-\Delta_E(1/2)
=
e_1(\tau)-e_0(\tau)-(\mu_1-\mu_0).
\end{equation}
Under $H_0$, $r(\tau)=0$ for every $\tau\in(0,1)$. This motivates a process-based test of the null. We work with a fixed closed interior interval $\mathcal T=[\tau_L,\tau_U]\subset(0,1)$ containing $1/2$, and base inference on global functionals of a suitably normalized sample analogue of $r(\tau)$ over $\mathcal T$. The next subsection defines the sample process and the corresponding test statistics.

\subsection{The Expectile Process and Test Statistics}
\label{subsec:sample-process}

For $d\in\{0,1\}$, define the sample $\tau$-expectile $\widehat e_d(\tau)$ as
\begin{equation}\label{eq:sample-expectile}
\widehat e_d(\tau)
=
\argmin_{e\in\R}
\frac{1}{n_d}\sum_{i=1}^{n_d}
\left|\tau-\1\{Y_i^{(d)}\le e\}\right|
\left(Y_i^{(d)}-e\right)^2.
\end{equation}
Equivalently, $\widehat e_d(\tau)$ is the unique solution to ${n_d}^{-1}\sum_{i=1}^{n_d} \psi_\tau\{Y_i^{(d)},\widehat e_d(\tau)\}=0.$ The sample ETE curve is then $\widehat\Delta_E(\tau) = \widehat e_1(\tau)-\widehat e_0(\tau)$. Let $\bar Y_d={n_d}^{-1}\sum_{i=1}^{n_d}Y_i^{(d)}$. Since $\widehat e_d(1/2)=\bar Y_d$, the sample analogue of the centered ETE curve is
\[
\widehat r(\tau)
=
\widehat\Delta_E(\tau)-\widehat\Delta_E(1/2)
=
\{\widehat e_1(\tau)-\widehat e_0(\tau)\}
-(\bar Y_1-\bar Y_0).
\]

Write $n=n_1+n_0$, $\lambda_n=n_1/n$, and $N=n_1n_0/n$.  Using the usual two-sample normalization, define the centered ETE process
\begin{equation}\label{eq:S-def}
S_n(\tau)
=
\sqrt{N}\,\widehat r(\tau),
\qquad \tau\in\T.
\end{equation}
By construction, $S_n(1/2)\equiv0$ in every finite sample.

To measure departures of the centered ETE curve from zero over $\T$, we use the Kolmogorov--Smirnov and Cram\'er--von Mises functionals
\begin{equation}\label{eq:test-stats}
KS_n
=
\sup_{\tau\in\T}|S_n(\tau)|,
\qquad
CvM_n
=
\int_{\T}S_n^2(\tau)\,d\tau.
\end{equation}
Both tests reject $H_0$ for sufficiently large values of their respective statistics. The next section develops the asymptotic theory underlying these statistics.
\section{Asymptotic Theory}
\label{sec:asymptotic}

 For arm $d\in\{0,1\}$, let $\E_d$ denote expectation under $F_d$, let $\E_{n,d}f=n_d^{-1}\sum_{i=1}^{n_d}f(Y_i^{(d)})$, and define $\mathbb G_{n,d}f=\sqrt{n_d}(\E_{n,d}-\E_d)f$. We begin with two regularity conditions governing the sampling scheme and the marginal outcome distributions.

\begin{assumption}\label{ass:sampling}
The samples in \eqref{eq:samples} are independent, observations are i.i.d. within each arm, $n_1\to\infty$, $n_0\to\infty$, and $\lambda_n\to\lambda\in(0,1)$.
\end{assumption}

\begin{assumption}\label{ass:regularity}
For each $d\in\{0,1\}$, $\E_d Y^2<\infty$.  The CDF $F_d$ is continuous on an open neighborhood of $\mathcal E_d(\T)=\{e_d(\tau):\tau\in\T\}$.
\end{assumption}

Assumption~\ref{ass:sampling} is standard for two-sample asymptotics. Assumption~\ref{ass:regularity} reflects the main regularity trade-off of expectile-based inference. Expectiles are moment-based functionals: a finite first moment is required for their definition, and the second-moment condition here supports root-$n$ Gaussian inference. This is the price paid relative to quantile methods, whose underlying distributional functionals remain well defined without such moment restrictions and are consequently more robust to heavy tails; see \citet{newey1987asymmetric}. Beyond this moment requirement, however, we impose only local continuity of the CDF on a neighborhood of the relevant expectile range, which is sufficient for the uniform first-order linearization used below; see \citet{holzmann2016expectile}. No density is required to exist, let alone to be estimated. This feature, already emphasized by \citet{newey1987asymmetric}, makes expectile-based inference particularly convenient for empirical implementation: the first-order covariance structure can be directly estimated from sample moments. This contrasts with existing tests of treatment-effect heterogeneity. The CDF-based procedure of \citet{chung2021permutation} relies on smooth absolute-continuity conditions and feasible estimation of density and score objects for its martingale transformation. The quantile-based procedure of \citet{chung2025quantile} likewise requires positive densities at the relevant quantiles and uniformly consistent estimation of density and score functions; see also \citet{koenker2002inference} for the underlying quantile-process theory. Thus, the expectile formulation trades stronger moment requirements for a tuning-free inference, i.e., it requires no smoothing-based nonparametric nuisance estimation, such as density, sparsity, or score estimation, and hence no associated tuning-parameter choice.

The absence of density terms is apparent from the Jacobian of the expectile score. For each arm, recalling the score $\psi_\tau$ defined in \eqref{eq:psi-def}, we have
\begin{equation}
    \label{eq:score_jacobian}
    D_d(\tau)
=
-\left.
\frac{\partial}{\partial e}
\E_d\psi_\tau(Y,e)
\right|_{e=e_d(\tau)}
=
2\left[\tau+(1-2\tau)F_d\{e_d(\tau)\}\right]=2\E_d\abs{\tau-\1\{Y\le e_d(\tau)\}}.
\end{equation}
The influence function of the $\tau$-expectile is then
\begin{equation}\label{eq:phi-def}
\phi_{d,\tau}(y)
=
\frac{\psi_\tau\{y,e_d(\tau)\}}{D_d(\tau)},
\qquad
\E_d\phi_{d,\tau}=0.
\end{equation}
Since $\T=[\tau_L,\tau_U]$ is a fixed interior interval, $\inf_{\tau\in\T}D_d(\tau) \ge 2\min\{\tau_L,1-\tau_U\} >0$, so the denominator in \eqref{eq:phi-def} is uniformly bounded away from zero and $\phi_{d,\tau}$ is well defined uniformly over $\T$. With these definitions in hand, the following lemma provides the uniform Bahadur representation of the sample expectile curve that underlies the asymptotic analysis of the centered ETE process.

\begin{lemma}\label{lem:bahadur}
Under Assumption~\ref{ass:regularity}, for each $d\in\{0,1\}$,
\begin{equation}\label{eq:bahadur}
\widehat e_d(\tau)-e_d(\tau)
=
\E_{n,d}\phi_{d,\tau}
+
R_{d,n_d}(\tau),
\qquad
\sup_{\tau\in\T}|R_{d,n_d}(\tau)|
=
o_p(n_d^{-1/2}).
\end{equation}
Moreover, $R_{d,n_d}(1/2)=0$, and $\{\phi_{d,\tau}:\tau\in\T\}$ is Donsker under $F_d$ with an envelope bounded by $C(1+|Y|)$ for some constant $C$.
\end{lemma}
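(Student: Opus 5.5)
We argue arm by arm, suppressing $d$. The plan is Z-estimator linearization: first prove uniform consistency of $\widehat e(\tau)$, then use the exact identity $\E_n\psi_\tau\{Y,\widehat e(\tau)\}=0$ and expand around $e(\tau)$ through the population map $M(\tau,e)=\E\psi_\tau(Y,e)$, controlling the empirical-process remainder by stochastic equicontinuity.

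\textbf{Step 1 (Donsker class and envelope).} Write
\[
\psi_\tau(y,e)=2\{(1-\tau)(y-e)\1\{y\le e\}+\tau(y-e)\1\{y>e\}\}.
\]
For $(\tau,e)\in\T\times[a,b]$ with $[a,b]$ compact, the class $\{\psi_\tau(\cdot,e)\}$ is built from finitely many products and sums of the VC classes $\{\1\{y\le e\}\}$ and $\{y-e\}$, multiplied by the bounded scalars $\tau,1-\tau$. It is therefore VC-type with envelope $2(|y|+\max(|a|,|b|))$, which is square integrable, so it is Donsker. Expectiles are continuous and monotone in $\tau$, so $\mathcal E(\T)$ is compact and $D(\tau)\ge 2\min\{\tau_L,1-\tau_U\}$. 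In addition, $\tau\mapsto 1/D(\tau)$ is bounded and continuous; it is Lipschitz in $\tau$ and continuous in $e$ by continuity of $F$. Hence $\{\phi_\tau:\tau\in\T\}$ is Donsker (a bounded deterministic multiple of a Donsker class indexed continuously), with envelope $C(1+|Y|)$.

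\textbf{Step 2 (uniform consistency).} The map $e\mapsto \E_n\psi_\tau(Y,e)$ is strictly decreasing, with slope at most $-2\min(\tau,1-\tau)$. By the uniform LLN over the Glivenko--Cantelli class of Step 1, evaluated at $e(\tau)\pm\varepsilon$, we have $\E_n\psi_\tau(Y,e(\tau)+\varepsilon)<0<\E_n\psi_\tau(Y,e(\tau)-\varepsilon)$ uniformly in $\tau$ with probability tending to one. Monotonicity then gives $\sup_\tau|\widehat e(\tau)-e(\tau)|\le\varepsilon$.

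\textbf{Step 3 (linearization).} Starting from $0=\E_n\psi_\tau(\widehat e)$, write
\[
0=n^{-1/2}\mathbb G_n\psi_\tau(\cdot,\widehat e)+M(\tau,\widehat e).
\]
Stochastic equicontinuity of the Donsker class, combined with the $L^2$ continuity
\[
\E\{\psi_\tau(Y,e')-\psi_\tau(Y,e)\}^2\to0 \quad\text{as } e'\to e,
\]
uniformly in $\tau$, yields
\[
\mathbb G_n\psi_\tau(\cdot,\widehat e)=\mathbb G_n\psi_\tau(\cdot,e(\tau))+o_p(1)
\]
uniformly in $\tau$. For the deterministic part we use
\[
\partial_eM(\tau,e)=-2\{\tau+(1-2\tau)F(e)\},
\]
which holds whenever $F$ is continuous at $e$, since the kink of $\psi_\tau$ contributes zero there. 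The mean value theorem gives $M(\tau,\widehat e)=-\tilde D(\tau)\{\widehat e-e(\tau)\}$, where $\tilde D$ is evaluated at an intermediate point. Uniform continuity of $F$ on a compact neighborhood of $\mathcal E(\T)$ and Step 2 give $\sup_\tau|\tilde D-D|=o_p(1)$. Solving first shows $\sup_\tau|\widehat e-e|=O_p(n^{-1/2})$. Substituting back, $\widehat e-e=\E_n\phi_\tau+R_n$ with $\sup|R_n|=o_p(n^{-1/2})$.

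\textbf{Step 4 (remainder at $1/2$).} At $\tau=1/2$ we have $\psi_{1/2}(y,e)=y-e$, $D=1$, $\widehat e(1/2)=\bar Y$ and $e(1/2)=\mu$. Hence $\E_n\phi_{1/2}=\bar Y-\mu$ exactly, and $R_n(1/2)=0$.

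The main obstacle is the uniform-in-$\tau$ control in Step 3 using only continuity of $F$ and no density. Differentiability of $M$ in $e$ must be justified through the kink argument, and the uniformity of the mean-value slope relies on uniform continuity of $F$ on a compact neighborhood; this is why continuity on an open neighborhood of $\mathcal E(\T)$ is assumed.
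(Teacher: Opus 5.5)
Your proof is correct, but it takes the generic Z-estimator route rather than the paper's exact-decomposition route.

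\textbf{The paper's route.} The paper never proves consistency separately. Averaging the strong-monotonicity inequality for $\psi_\tau$ gives $2c_{\T}|\widehat e_\tau-e_\tau|\le|\E_n\psi_\tau(Y,e_\tau)|$, which yields the uniform $O_p(n^{-1/2})$ rate in one line. So your Step~2 and the ``solving first'' part of Step~3 are not needed.

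It then uses the pathwise identity $\psi_\tau(y,e)=\psi_\tau(y,e_\tau)-2w_\tau(y)(e-e_\tau)+r_\tau(y,e)$ and solves the empirical score equation with the empirical slope $D_n(\tau)=2\E_n w_\tau$. That slope is controlled by the Dvoretzky--Kiefer--Wolfowitz inequality. The crossing remainder is bounded by $2|\widehat e_\tau-e_\tau|\,|F_n(\widehat e_\tau)-F_n(e_\tau)|$, which is $o_p(n^{-1/2})$ by the same inequality and the modulus of continuity of $F$.

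No asymptotic equicontinuity is needed beyond $\sup_\tau|\E_n\psi_\tau(Y,e_\tau)|=O_p(n^{-1/2})$. The Donsker property comes from Lipschitz-in-parameter bracketing rather than VC-type entropy.

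\textbf{Your route.} You split $0=\E_n\psi_\tau(Y,\widehat e_\tau)$ into two pieces. The empirical-process increment is removed by asymptotic equicontinuity together with $|\psi_\tau(y,e)-\psi_\tau(y,e')|\le 2|e-e'|$. The population term is handled by the density-free derivative $\partial_e\E\psi_\tau(Y,e)=-2\{\tau+(1-2\tau)F(e)\}$ and a mean-value step. This makes transparent exactly where continuity of $F$ enters. It also applies to any score that is Lipschitz in the parameter and has a continuously differentiable population map.

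\textbf{What each buys.} The paper's version gives quantitative control, because its remainder is explicitly the estimation error times a crossing mass. Three things follow:
\begin{itemize}
\item It immediately delivers the sharper $O_p(n^{-1})$ remainder under a bounded density, which the text after the lemma claims.
\item The same calculation is reused essentially verbatim for the triangular-array expansion along the local-alternative path.
\item It is reused again for the linearization of $\widehat m_d$ in the studentization section.
\end{itemize}
Your equicontinuity argument yields only $o_p(n^{-1/2})$ unless you add a modulus bound for the empirical-process increment.
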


The $o_p(n_d^{-1/2})$ remainder in Lemma~\ref{lem:bahadur} is sufficient for the first-order asymptotics of $S_n(\tau)$ developed below. Moreover, if $F_d$ additionally admits a density bounded on a neighborhood of $\mathcal E_d(\T)$, the proof of Lemma~\ref{lem:bahadur} further shows that the remainder sharpens to $O_p(n_d^{-1})$ uniformly over $\T$. We next combine these arm-specific expansions to obtain the null limit of the centered ETE process.

\subsection{Null Asymptotics}
\label{subsec:asymptotic-limit}

Since $\bar Y_d-\mu_d=\E_{n,d}(Y-\mu_d)$, the influence function of the mean-centered expectile $e_d(\tau)-\mu_d$ is
\begin{equation}\label{eq:xi-def}
\xi_{d,\tau}(y)
=
\phi_{d,\tau}(y)-(y-\mu_d).
\end{equation}
In particular, $\E_d\xi_{d,\tau}=0$ and $\xi_{d,1/2}\equiv0$.

\begin{theorem}
\label{thm:centered-limit}
Under Assumptions~\ref{ass:sampling}--\ref{ass:regularity}, uniformly in $\tau\in\T$,
\begin{equation}\label{eq:S-linear}
S_n(\tau)
=
\sqrt{N}r(\tau)
+
\sqrt{1-\lambda_n}\,\mathbb G_{n,1}\xi_{1,\tau}
-
\sqrt{\lambda_n}\,\mathbb G_{n,0}\xi_{0,\tau}
+o_p(1).
\end{equation}
Under $H_0$,
\begin{equation}\label{eq:S-null-limit}
S_n\rightsquigarrow\mathbb G_{\xi}
\qquad\text{in }\ell^\infty(\T),
\end{equation}
where $\mathbb G_{\xi}$ is a centered Gaussian process with covariance kernel $\Gamma(\tau,\tau') = \E_d\{\xi_{d,\tau}\xi_{d,\tau'}\}$, $d\in\{0,1\}$.
\end{theorem}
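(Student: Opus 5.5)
Starting from Lemma~\ref{lem:bahadur}, I would build the expansion arm by arm. For each $d$, since $R_{d,n_d}(1/2)=0$ and $\phi_{d,1/2}(y)=y-\mu_d$ (because $D_d(1/2)=1$ and $\psi_{1/2}(y,e)=y-e$), we have $\bar Y_d-\mu_d=\E_{n,d}\phi_{d,1/2}$ exactly. Subtracting this from the expansion of $\widehat e_d(\tau)-e_d(\tau)$ gives
\[
\{\widehat e_d(\tau)-\bar Y_d\}-\{e_d(\tau)-\mu_d\}=\E_{n,d}\xi_{d,\tau}+R_{d,n_d}(\tau),
\]
uniformly in $\tau\in\T$. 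Because $\E_d\xi_{d,\tau}=0$, we can write $\E_{n,d}\xi_{d,\tau}=n_d^{-1/2}\mathbb G_{n,d}\xi_{d,\tau}$. Taking the difference over arms yields
\[
\widehat r(\tau)-r(\tau)=n_1^{-1/2}\mathbb G_{n,1}\xi_{1,\tau}-n_0^{-1/2}\mathbb G_{n,0}\xi_{0,\tau}+R_{1,n_1}(\tau)-R_{0,n_0}(\tau).
\]
Multiplying by $\sqrt N$ and using $\sqrt{N/n_1}=\sqrt{n_0/n}=\sqrt{1-\lambda_n}$ and $\sqrt{N/n_0}=\sqrt{\lambda_n}$ gives \eqref{eq:S-linear}. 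The remainder is $\sqrt N\sup_\tau|R_{d,n_d}|=\sqrt{N/n_d}\,o_p(1)=o_p(1)$, since $N/n_d\le1$; this uses Assumption~\ref{ass:sampling}.

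For the null limit, note that $r\equiv0$ under $H_0$ by Proposition~\ref{prop:full-curve}, so $S_n$ equals the two empirical-process terms plus a uniformly negligible remainder. The class $\{\xi_{d,\tau}:\tau\in\T\}$ is Donsker because it is the difference of the Donsker class $\{\phi_{d,\tau}\}$ from Lemma~\ref{lem:bahadur} and a single square-integrable function $y-\mu_d$. Hence $\mathbb G_{n,d}\xi_{d,\cdot}\rightsquigarrow\mathbb G_d$, a tight centered Gaussian process with kernel $\E_d\xi_{d,\tau}\xi_{d,\tau'}$. The two arms are independent, so the pair converges jointly in $\ell^\infty(\T)^2$ to independent limits. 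Since $\lambda_n\to\lambda$, Slutsky's lemma and the continuous mapping theorem give
\[
S_n\rightsquigarrow\sqrt{1-\lambda}\,\mathbb G_1-\sqrt{\lambda}\,\mathbb G_0,
\]
whose kernel is $(1-\lambda)\Gamma_1+\lambda\Gamma_0$, where $\Gamma_d(\tau,\tau')=\E_d\xi_{d,\tau}\xi_{d,\tau'}$.

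The remaining step, and the only nonroutine one, is to show $\Gamma_1=\Gamma_0$ under $H_0$; this is what makes the kernel independent of $\lambda$ and of the arm. Under $H_0$, $Y(1)\overset{d}{=}Y(0)+\delta$, and translation equivariance gives $e_1(\tau)=e_0(\tau)+\delta$ and $\mu_1=\mu_0+\delta$. Then:
\begin{itemize}
\item $\psi_\tau(y+\delta,e_1(\tau))=\psi_\tau(y,e_0(\tau))$;
\item $D_1(\tau)=D_0(\tau)$, since $F_1\{e_1(\tau)\}=F_0\{e_0(\tau)\}$;
\item consequently $\xi_{1,\tau}(y+\delta)=\xi_{0,\tau}(y)$.
\end{itemize}
It follows that $\E_1\xi_{1,\tau}\xi_{1,\tau'}=\E_0\xi_{0,\tau}\xi_{0,\tau'}=\Gamma(\tau,\tau')$, and the limiting kernel is $(1-\lambda)\Gamma+\lambda\Gamma=\Gamma$, as stated in \eqref{eq:S-null-limit}. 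I expect this invariance check to be the main point to verify carefully; the rest follows directly from Lemma~\ref{lem:bahadur} and standard empirical-process results.
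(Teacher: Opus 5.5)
Your proposal is correct and follows essentially the same route as the paper: the arm-wise Bahadur expansion from Lemma~\ref{lem:bahadur}, subtracting the exact mean expansion to obtain the $\xi_{d,\tau}$ influence functions, and the $\sqrt{N/n_d}$ rescaling to get \eqref{eq:S-linear}. The null limit likewise matches, using Donsker plus independence for the joint limit and the translation-equivariance identity $\xi_{1,\tau}(y+\delta)=\xi_{0,\tau}(y)$ to collapse $(1-\lambda)\Gamma_1+\lambda\Gamma_0$ to $\Gamma$. Getting the Donsker property of $\{\xi_{d,\tau}\}$ from that of $\{\phi_{d,\tau}\}$ plus the single square-integrable function $y-\mu_d$ is a valid substitute for citing Lemma~\ref{lem:appendix-regularity}(iii) directly.
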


Under $H_0$, translation equivariance makes the two arm-specific influence-function laws identical, so the covariance contributions weighted by $1-\lambda$ and $\lambda$ sum to the common kernel $\Gamma$.  Hence the null limit is independent of the limiting treatment fraction.  By the continuous mapping theorem, we obtain the asymptotic null distributions of the proposed test statistics.

\begin{corollary}\label{cor:functionals}
Under the conditions of Theorem~\ref{thm:centered-limit} and $H_0$,
\begin{equation*}
    KS_n
\rightsquigarrow
\sup_{\tau\in\T}|\mathbb G_{\xi}(\tau)|,\quad CvM_n
\rightsquigarrow
\int_\T\mathbb G_{\xi}^2(\tau)\,d\tau.
\end{equation*}
\end{corollary}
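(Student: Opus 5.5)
The plan is to apply the continuous mapping theorem to the weak convergence $S_n\rightsquigarrow\mathbb G_\xi$ in $\ell^\infty(\T)$ established in Theorem~\ref{thm:centered-limit}. Define the two maps $g_{KS},g_{CvM}:\ell^\infty(\T)\to\R$ by $g_{KS}(f)=\sup_{\tau\in\T}|f(\tau)|$ and $g_{CvM}(f)=\int_\T f^2(\tau)\,d\tau$. Then $KS_n=g_{KS}(S_n)$ and $CvM_n=g_{CvM}(S_n)$, and the corollary follows once each map is shown to be continuous at every point in the support of $\mathbb G_\xi$. The Hoffmann--J{\o}rgensen version of the continuous mapping theorem (e.g.\ Theorem 1.3.6 of \citet{vdvw2023}) only needs continuity on a set of full limiting probability.

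For $g_{KS}$ the argument is immediate from the reverse triangle inequality, $|g_{KS}(f)-g_{KS}(h)|\le\|f-h\|_\infty$, so $g_{KS}$ is Lipschitz on all of $\ell^\infty(\T)$. For $g_{CvM}$ I will bound
\[
|g_{CvM}(f)-g_{CvM}(h)|\le\int_\T|f-h|\,|f+h|\,d\tau\le(\tau_U-\tau_L)\|f-h\|_\infty\bigl(\|f\|_\infty+\|h\|_\infty\bigr),
\]
which gives local Lipschitz continuity, hence continuity everywhere.

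The main obstacle, though a mild one, is that the integral must be well defined on $\ell^\infty(\T)$, where elements need not be measurable in $\tau$. I will handle this by restricting to a separable subset carrying the limit. By Lemma~\ref{lem:bahadur} the class $\{\xi_{d,\tau}\}$ is Donsker, so $\mathbb G_\xi$ can be taken to have sample paths that are uniformly continuous with respect to the intrinsic $L^2$ semimetric. That semimetric is continuous in $\tau$, because $\tau\mapsto\xi_{d,\tau}$ is $L^2$-continuous by dominated convergence under the envelope $C(1+|Y|)$ together with continuity of $e_d(\cdot)$ and of $F_d$ near $\mathcal E_d(\T)$. So $\mathbb G_\xi$ concentrates on $C(\T)$, a separable closed subset on which $g_{CvM}$ is well defined and continuous. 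The paths of $S_n$ themselves are measurable in $\tau$: $\widehat e_d(\cdot)$ is monotone and continuous in $\tau$, so $CvM_n$ is well defined.

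Finally, applying the continuous mapping theorem to $S_n\rightsquigarrow\mathbb G_\xi$ yields $KS_n\rightsquigarrow\sup_{\tau\in\T}|\mathbb G_\xi(\tau)|$ and $CvM_n\rightsquigarrow\int_\T\mathbb G_\xi^2(\tau)\,d\tau$. Because these limits are real-valued random variables, weak convergence here is ordinary convergence in distribution.
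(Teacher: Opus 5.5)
Your proposal is correct and follows the paper's proof: you apply the continuous mapping theorem to $S_n\rightsquigarrow\mathbb G_\xi$ from Theorem~\ref{thm:centered-limit}, using that the supremum functional is Lipschitz and that the quadratic functional satisfies the bound $|\T|\,\|f-g\|_\infty(\|f\|_\infty+\|g\|_\infty)$. Your remarks on measurability (that $S_n$ has continuous paths in $\tau$ and that $\mathbb G_\xi$ concentrates on $C(\T)$, so the integral is well defined) are a sound addition on a point the paper leaves implicit.
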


\begin{remark}[The Durbin problem]
The mean centering removes the unknown location shift from the population restriction, but not the first-order effect of estimating that shift. Under $H_0$, the common shift $\delta=\mu_1-\mu_0$ is estimated by $\bar Y_1-\bar Y_0$, giving rise to the terms $-(y-\mu_d)$ in \eqref{eq:xi-def}. If $\delta$ were known and the sample ETE curve were instead centered at this known value, these terms would disappear. This estimation effect should be distinguished from nonpivotality: even with a known shift, the limiting Gaussian process would generally have a covariance kernel that depends on the underlying outcome distribution. Rather than removing the estimation effect through a transformation, Section~\ref{sec:inference} develops a multiplier bootstrap that directly approximates the nonpivotal null distribution while retaining the full contribution from estimating $\delta$.
\end{remark}

\subsection{Asymptotic Power}
\label{subsec:asymptotic-power}
We now investigate the power properties of the centered-process tests, first under fixed alternatives and then under local departures from the null.  Under fixed alternatives, the uniform representation in Theorem~\ref{thm:centered-limit} implies that the rescaled centered process consistently recovers the population discrepancy $r$.  The resulting limits for the two global statistics are given below.

\begin{proposition}
\label{prop:fixed}
Suppose Assumptions~\ref{ass:sampling}--\ref{ass:regularity} hold.  Then
\[
\frac{KS_n}{\sqrt N}
\xrightarrow p
\sup_{\tau\in\T}|r(\tau)|,
\qquad
\frac{CvM_n}{N}
\xrightarrow p
\int_\T r^2(\tau)\,d\tau.
\]
In particular, if $r\not\equiv0$ on $\T$, both statistics diverge in probability.
\end{proposition}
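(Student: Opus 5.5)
The plan is to divide the linear representation \eqref{eq:S-linear} of Theorem~\ref{thm:centered-limit} by $\sqrt N$ and show that everything except $r$ vanishes uniformly on $\T$. Note that \eqref{eq:S-linear} was established under Assumptions~\ref{ass:sampling}--\ref{ass:regularity} only, without invoking $H_0$. Dividing by $\sqrt N$ gives, uniformly in $\tau\in\T$,
\[
\widehat r(\tau)-r(\tau)
=
N^{-1/2}\Bigl\{\sqrt{1-\lambda_n}\,\mathbb G_{n,1}\xi_{1,\tau}-\sqrt{\lambda_n}\,\mathbb G_{n,0}\xi_{0,\tau}\Bigr\}+o_p(N^{-1/2}).
\]
By Lemma~\ref{lem:bahadur}, $\{\phi_{d,\tau}:\tau\in\T\}$ is Donsker under $F_d$. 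Adding the single square-integrable function $y\mapsto y-\mu_d$ preserves the Donsker property, so $\{\xi_{d,\tau}:\tau\in\T\}$ is Donsker. Hence $\sup_{\tau}|\mathbb G_{n,d}\xi_{d,\tau}|=O_p(1)$ for each arm. Since $N\to\infty$ under Assumption~\ref{ass:sampling}, this yields $\sup_{\tau\in\T}|\widehat r(\tau)-r(\tau)|=O_p(N^{-1/2})=o_p(1)$.

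The KS limit then follows from the reverse triangle inequality:
\[
\Bigl|\sup_\T|\widehat r|-\sup_\T|r|\Bigr|\le\|\widehat r-r\|_\infty .
\]
For CvM we use
\[
\Bigl|\int_\T\widehat r^2-\int_\T r^2\Bigr|\le|\T|\,\|\widehat r-r\|_\infty\bigl(\|\widehat r-r\|_\infty+2\|r\|_\infty\bigr).
\]
This requires $\|r\|_\infty<\infty$, which holds because $\tau\mapsto e_d(\tau)$ is continuous and monotone on the compact interval $\T$; it is in any case bounded by $e_d(\tau_U)-e_d(\tau_L)$. Since $KS_n/\sqrt N=\sup_\T|\widehat r|$ and $CvM_n/N=\int_\T\widehat r^2$, both displayed convergences follow. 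Measurability of $\int_\T\widehat r^2$ is not an issue, because $\widehat r$ is continuous in $\tau$.

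For the divergence claim, suppose $r\not\equiv0$ on $\T$. Then $\sup_\T|r|>0$. In addition, $\int_\T r^2>0$, because $r$ is continuous: continuity of $F_d$ near $\mathcal E_d(\T)$ makes $e_d(\cdot)$ continuous, indeed strictly increasing and continuous by the implicit function argument behind Lemma~\ref{lem:bahadur}. Consequently $KS_n=\sqrt N\{\sup|r|+o_p(1)\}\to\infty$ and $CvM_n=N\{\int r^2+o_p(1)\}\to\infty$ in probability.

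The only point needing care is that Theorem~\ref{thm:centered-limit}'s expansion must hold off the null. If one prefers not to rely on this, the alternative is to argue directly from Lemma~\ref{lem:bahadur}, which is arm-specific and null-free. It gives $\sup_\T|\widehat e_d-e_d|\le\sup_\T|\E_{n,d}\phi_{d,\tau}|+o_p(n_d^{-1/2})=O_p(n_d^{-1/2})$ by the Donsker property, together with $\bar Y_d-\mu_d=O_p(n_d^{-1/2})$. Combining these yields $\|\widehat r-r\|_\infty=o_p(1)$ directly.
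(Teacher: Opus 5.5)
Your proposal is correct and follows essentially the same route as the paper: it uses \eqref{eq:S-linear}, which holds without $H_0$, plus the Donsker property to get $\sup_{\T}|\widehat r-r|=o_p(1)$, then continuity of the sup and quadratic functionals, and continuity of $r$ for divergence. One cosmetic fix: since $1/2\in\T$, the correct bound is $\|r\|_\infty\le\sum_{d}\{e_d(\tau_U)-e_d(\tau_L)\}$, and continuity of $e_d$ comes from the strong-monotonicity (Lipschitz) bound on the score rather than from continuity of $F_d$.
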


Thus, the tests are consistent against any fixed alternative in $H_1$ for which $r\not\equiv0$ on $\T$.  We next turn to Pitman-type local alternatives that approach the null at the parametric rate $N^{-1/2}$.  Let $\varepsilon_0$ and $\varepsilon_1$ be independent copies of a mean-zero random variable $\varepsilon\sim F$.  Under the location-shift null, a convenient representation of the two marginal outcomes is
\[
Y^{(0)}=\mu+\varepsilon_0,
\qquad
Y^{(1)}=\mu+\delta+\varepsilon_1.
\]
To generate local departures from the null while holding the shift $\delta$ fixed, we perturb the treated innovation. Let $b:\R\to\R$ be a nondecreasing and globally Lipschitz perturbation direction and, for $s$ near zero, define the local perturbation map $T_s(x)=x+s b(x)$. We impose $\E b(\varepsilon)=0$, so that $\E T_s(\varepsilon)=0$ for every $s$.  This normalization isolates the shape component of the perturbation: any constant component of $b$ would generate only an additional location shift and can therefore be absorbed into $\delta$.  We further assume $\Var\{b(\varepsilon)\}>0$ to exclude the trivial perturbation.

For fixed $\kappa\ne0$, set $s_n=\kappa/\sqrt N$ and consider
\begin{equation}\label{eq:local-outcome-path}
H_{1n}(b,\kappa):
\qquad
Y_n^{(0)}\stackrel d=\mu+\varepsilon_0,
\qquad
Y_n^{(1)}\stackrel d=
\mu+\delta+T_{s_n}(\varepsilon_1).
\end{equation}
For all sufficiently large $n$, $T_{s_n}$ is strictly increasing.  Hence \eqref{eq:local-outcome-path} departs from the null only through an $N^{-1/2}$ perturbation of the shape of the treated marginal distribution.

To characterize local power, we first determine the first-order response of the expectile curve to this perturbation.  Let $e_s(\tau)$ denote the $\tau$-expectile of $T_s(\varepsilon)$, so that $e_0(\tau)=e(\tau)$, and let $D(\tau)$ denote the baseline score Jacobian defined in \eqref{eq:score_jacobian}.  Define
\begin{equation}\label{eq:hb-def}
h_b(\tau)
:=
\left.
\frac{\partial e_s(\tau)}{\partial s}
\right|_{s=0}
=
\frac{
2\E\!\left[
\abs{\tau-\1\{\varepsilon\le e(\tau)\}}b(\varepsilon)
\right]
}{
D(\tau)
}.
\end{equation}
Thus $h_b(\tau)$ is the directional derivative of the expectile curve along the perturbation path.  The second equality follows by differentiating the population expectile first-order condition with respect to $s$.  Equivalently, uniformly over $\T$, $e_s(\tau)=e(\tau)+s h_b(\tau)+o(|s|)$.  Under $H_{1n}(b,\kappa)$, let $r_n(\tau):=e_{s_n}(\tau)-e(\tau)$ denote the centered population ETE curve; the equality follows because $\E T_{s_n}(\varepsilon)=0$.  The following theorem translates this first-order response into the local drift of the centered ETE process.
\begin{theorem}
\label{thm:local}
Suppose Assumptions~\ref{ass:sampling} and \ref{ass:regularity} hold. Under $H_{1n}(b,\kappa)$,
\begin{equation}\label{eq:path-drift}
\sup_{\tau\in\T}
\left|\sqrt N\,r_n(\tau)-\kappa h_b(\tau)\right|
\to0,
\end{equation}
and
\begin{equation}\label{eq:S-local}
S_n\rightsquigarrow\mathbb G_\xi+\kappa h_b
\qquad\text{in }\ell^\infty(\T),
\end{equation}
where $\mathbb G_\xi$ is the null Gaussian process in Theorem~\ref{thm:centered-limit}. Moreover, $h_b(\tau)<0$ for $\tau<1/2$ and $h_b(\tau)>0$ for $\tau>1/2$.
\end{theorem}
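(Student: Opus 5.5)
The proof has three parts: the deterministic drift \eqref{eq:path-drift}, the process limit \eqref{eq:S-local} under the triangular array, and the sign of $h_b$.

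\textbf{Deterministic drift.} By translation equivariance, under $H_{1n}(b,\kappa)$ we have $e_1(\tau)=\mu+\delta+e_{s_n}(\tau)$ and $e_0(\tau)=\mu+e(\tau)$. Since $\E T_{s_n}(\varepsilon)=0$, the treated mean is $\mu+\delta$, so $r_n(\tau)=e_{s_n}(\tau)-e(\tau)$. To show $\sup_\T|e_s(\tau)-e(\tau)-sh_b(\tau)|=o(|s|)$, I would work with the map $G(s,e,\tau)=\E\psi_\tau\{T_s(\varepsilon),e\}$.

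\begin{itemize}
\item For $s$ small, $T_s$ is strictly increasing and continuous. Hence $\{T_s(\varepsilon)\le e\}=\{\varepsilon\le T_s^{-1}(e)\}$.
\item Because $b$ is Lipschitz, $|T_s(\varepsilon)-\varepsilon|\le |s|C(1+|\varepsilon|)$.
\item A crude bound first gives $\sup_\T|e_s-e|=O(|s|)$. It comes from the strict monotonicity of $e\mapsto G$, whose slope is at least $2\min\{\tau_L,1-\tau_U\}$, together with $|G(s,e,\tau)-G(0,e,\tau)|\le C|s|$.
\item Then expand $0=G(s,e_s,\tau)-G(0,e,\tau)$. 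This splits into $\E[2|\tau-\1\{\varepsilon\le e\}|\,sb(\varepsilon)]$, minus $D(\tau)(e_s-e)$, plus remainders.
\item The remainders involve indicator flips on the event that $\varepsilon$ lies within $O(|s|)(1+|\varepsilon|)$ of $e(\tau)$. On that event the residual $|T_s(\varepsilon)-e_s|$ is itself $O(|s|)$, so each remainder is $O(|s|)\cdot P(\text{event})$.
\item By continuity of $F$ near $\mathcal E(\T)$, uniformly continuous on a compact neighborhood, $P(\text{event})\to0$ uniformly in $\tau$. The remainders are therefore $o(|s|)$ uniformly.
\item Dividing by $D(\tau)$ gives the expansion. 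Setting $s=s_n=\kappa/\sqrt N$ yields \eqref{eq:path-drift}.
\end{itemize}

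\textbf{Process limit.} Theorem~\ref{thm:centered-limit} gives $S_n-\sqrt N r_n$ as the linear term plus $o_p(1)$, but it was proved for a fixed $F_1$. Here the treated law drifts, so I would rerun Lemma~\ref{lem:bahadur} along the array $F_{1,n}$, the law of $\mu+\delta+T_{s_n}(\varepsilon)$.

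\begin{itemize}
\item The only ingredients of that lemma are the uniform square-integrable envelope $C(1+|Y|)$, the VC-type structure of $\{\psi_\tau(\cdot,e)\}$, and equicontinuity of $e\mapsto\E\psi_\tau$. All three hold uniformly in $n$: $T_{s_n}(\varepsilon)$ has uniformly bounded second moments, and the CDFs $F_{1,n}=F\circ T_{s_n}^{-1}$ are uniformly equicontinuous near the expectiles.
\item Next, $\mathbb G_{n,1}\xi^{(n)}_{1,\tau}$ must be compared with a process under $F$. Use the coupling $Y^{(1)}_i=\mu+\delta+T_{s_n}(\varepsilon_{1i})$.
\item Under this coupling, $\sup_\tau\|\xi^{(n)}_{1,\tau}\{T_{s_n}(\varepsilon)\}-\xi_{1,\tau}(\varepsilon)\|_{L^2}\to0$. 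This uses Lipschitz continuity in $y$ and $e$ away from the kink, plus the same small-probability indicator argument.
\item Asymptotic equicontinuity of the Donsker class under $F$ then gives $\sup_\T|\mathbb G_{n,1}(\xi^{(n)}-\xi)|=o_p(1)$.
\item The limit is therefore the null limit $\mathbb G_\xi$ plus the drift $\kappa h_b$.
\end{itemize}

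This uniform-in-$n$ Bahadur step along the drifting array is the main technical obstacle. I would handle it with the coupling above rather than appealing to contiguity; a Le Cam third-lemma route would require a density for $F$, which is not assumed.

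\textbf{Sign of $h_b$.} Write $w_\tau(\varepsilon)=|\tau-\1\{\varepsilon\le e(\tau)\}|$. Using $\E b(\varepsilon)=0$,
\[
\E[w_\tau b]=\E[(w_\tau-c)b]
\]
for any constant $c$.

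\begin{itemize}
\item For $\tau>1/2$, $w_\tau$ equals $1-\tau$ on $\{\varepsilon\le e(\tau)\}$ and $\tau$ on $\{\varepsilon>e(\tau)\}$. This is nondecreasing in $\varepsilon$ and nonconstant, since $0<F\{e(\tau)\}<1$ for interior expectiles.
\item Let $\varepsilon'$ be an independent copy of $\varepsilon$. Chebyshev's association inequality gives
\[
\Cov(w_\tau,b)=\tfrac12\E[(w_\tau(\varepsilon)-w_\tau(\varepsilon'))(b(\varepsilon)-b(\varepsilon'))]\ge0,
\]
because both $w_\tau$ and $b$ are nondecreasing, so each product in the expectation is nonnegative.
\item Strictness requires $b$ to be nonconstant on both sides of $e(\tau)$ in a suitable sense. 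Concretely, $\E[b\mid\varepsilon>e(\tau)]>\E[b\mid\varepsilon\le e(\tau)]$.
\item Suppose instead the two conditional means were equal. Monotonicity then forces $b$ to be a.s. constant, equal to $b(e(\tau))$, on the support of $\varepsilon$. That contradicts $\Var\{b(\varepsilon)\}>0$.
\item Since $D>0$, this gives $h_b(\tau)>0$ for $\tau>1/2$.
\item For $\tau<1/2$, $w_\tau$ is nonincreasing, and the same argument gives $h_b(\tau)<0$.
\end{itemize}
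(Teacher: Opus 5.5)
Your proposal is correct and follows essentially the same route as the paper. For the drift, both use a crude $O(|s|)$ bound from strong monotonicity of the score, followed by an expansion with an $o(|s|)$ crossing remainder. For the process limit, both use the coupling $T_{s_n}(\varepsilon_i)$ with a triangular-array Bahadur representation, uniform $L_2$ stability, and asymptotic equicontinuity in a fixed $F$-Donsker class. For the sign, both compare the conditional means of $b(\varepsilon)$ on either side of $e(\tau)$. The only cosmetic difference is that you package the sign step through Chebyshev's association inequality, whereas the paper uses the identity $\E[w_\tau(\varepsilon)b(\varepsilon)]=(1-2\tau)\E[b(\varepsilon)\1\{\varepsilon\le e(\tau)\}]$ directly; since $w_\tau$ is two-valued, both reduce to the same strict inequality.
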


Thus, under the local alternatives, the centered ETE process converges to the null Gaussian process shifted by the deterministic drift $\kappa h_b$. By the continuous mapping theorem,
\[
KS_n
\rightsquigarrow
\sup_{\tau\in\T}
\abs{\mathbb G_\xi(\tau)+\kappa h_b(\tau)},
\qquad
CvM_n
\rightsquigarrow
\int_\T
\{\mathbb G_\xi(\tau)+\kappa h_b(\tau)\}^2\,d\tau.
\]
Hence, under local alternatives approaching the null at the parametric rate $N^{-1/2}$, both statistics have nondegenerate shifted limits, yielding nontrivial local power against $H_{1n}(b,\kappa)$.  The following scale perturbation provides a simple illustration of the resulting drift.

\begin{example}[Local scale perturbation]\label{ex:local-scale-path}
Take $b(x)=x$, so that $T_{s_n}(x)=(1+\kappa/\sqrt N)x$.  For all sufficiently large $n$, positive scale and translation equivariance imply that the control and treatment expectiles are $\mu+e(\tau)$ and $\mu+\delta+(1+\kappa/\sqrt N)e(\tau)$, respectively.  Since the perturbation leaves the location shift equal to $\delta$, $\sqrt N\,r_n(\tau)=\kappa e(\tau)$, and hence $h_b(\tau)=e(\tau)$.  The local drift is therefore proportional to the baseline expectile curve. For $\kappa>0$, the scale perturbation pushes lower expectiles downward and upper expectiles upward while leaving the mean coordinate fixed; for $\kappa<0$, these movements are reversed.
\end{example}

\section{Tuning-Free Inference}
\label{sec:inference}

As discussed above, the null limit in Theorem~\ref{thm:centered-limit} is generally nonpivotal. Nevertheless, the uniform Bahadur representation provides a direct route to feasible inference, as the relevant influence-function components can be estimated from simple sample analogues. To obtain feasible critical values, in this section, we develop a fast, tuning-free multiplier bootstrap that approximates the unknown null distribution through random reweighting of these estimated components, and introduce a pointwise-studentized refinement within the same framework.

\subsection{Multiplier Bootstrap}
\label{subsec:multiplier}

We first construct feasible counterparts of the influence functions in Section~\ref{sec:asymptotic}.  For arm $d\in\{0,1\}$, let $\widehat F_d$ denote the empirical CDF and define
\begin{align}
\widehat D_d(\tau)
&=
2\left[
\tau+(1-2\tau)\widehat F_d\{\widehat e_d(\tau)\}
\right],
\label{eq:Dhat}\\
\widehat\xi_{d,i,\tau}
&=
\frac{
\psi_\tau\{Y_i^{(d)},\widehat e_d(\tau)\}
}{
\widehat D_d(\tau)
}
-
\{Y_i^{(d)}-\bar Y_d\}.
\label{eq:xihat}
\end{align}
Thus, $\widehat\xi_{d,i,\tau}$ is the direct plug-in analogue of $\xi_{d,\tau}$ in \eqref{eq:xi-def}.  In particular, $\E_{n,d}\widehat\xi_{d,\tau}=0$ for every $\tau\in\T$, and $\widehat\xi_{d,i,1/2}=0$.

For each arm, let $\{V_i^{(d)}\}$ be i.i.d. multipliers, independent across arms and of the data, with mean zero, unit variance, and bounded support. A convenient choice is the two-point distribution of \citet{mammen1993bootstrap}, under which $V=1-a$ with probability $a/\sqrt5$ and $V=a$ with probability $1-a/\sqrt5$, where $a=(1+\sqrt5)/2$. Using the feasible influence values $\widehat\xi_{d,i,\tau}$ in \eqref{eq:xihat}, define the multiplier process
\begin{equation}\label{eq:Sstar-def}
S_n^*(\tau)
=
\sqrt{1-\lambda_n}\frac1{\sqrt{n_1}}
\sum_{i=1}^{n_1}
V_i^{(1)}\widehat\xi_{1,i,\tau}
-
\sqrt{\lambda_n}\frac1{\sqrt{n_0}}
\sum_{j=1}^{n_0}
V_j^{(0)}\widehat\xi_{0,j,\tau},
\qquad \tau\in\T.
\end{equation}

For each draw, construct the bootstrap analogues of the two test statistics,
\begin{equation}\label{eq:bootstrap-stats}
KS_n^*
=
\sup_{\tau\in\T}|S_n^*(\tau)|,
\qquad
CvM_n^*
=
\int_\T \{S_n^*(\tau)\}^2\,d\tau.
\end{equation}
Repeating the multiplier draw $B$ times yields $\{KS_{n,b}^*,CvM_{n,b}^*\}_{b=1}^B$.  Let $\widehat c_{KS,1-\alpha}^*$ and $\widehat c_{CvM,1-\alpha}^*$ denote the corresponding empirical $(1-\alpha)$-quantiles.  The KS and CvM tests reject $H_0$ when, respectively, $KS_n>\widehat c_{KS,1-\alpha}^*$ and $CvM_n>\widehat c_{CvM,1-\alpha}^*$.

Since all quantities entering $\widehat\xi_{d,i,\tau}$ are computed once from the original sample and held fixed across multiplier draws, the bootstrap calibration reduces to repeatedly reweighting the precomputed influence values in \eqref{eq:Sstar-def}.  In particular, no expectile or other nuisance quantity needs to be re-estimated across draws. Write $\rightsquigarrow^*$ in probability for weak convergence under the multiplier law conditional on the original two-sample data $\{Y_i^{(1)}\}_{i=1}^{n_1}\cup\{Y_j^{(0)}\}_{j=1}^{n_0}$.  The following theorem establishes the validity of the bootstrap approximation underlying the preceding procedure.

\begin{theorem}
\label{thm:multiplier-fclt}
Suppose Assumptions~\ref{ass:sampling} and \ref{ass:regularity} hold. Under $H_0$ or the sequence of local alternatives $H_{1n}(b,\kappa)$,
\begin{equation*}
S_n^*
\rightsquigarrow^*
\mathbb G_\xi
\qquad
\text{in probability in }\ell^\infty(\T),
\end{equation*}
where $\mathbb G_\xi$ is the centered Gaussian process in Theorem~\ref{thm:centered-limit}.

Under the fixed alternative $H_1$,
\begin{equation*}
S_n^*
\rightsquigarrow^*
\mathbb G_\xi^{A}
\qquad
\text{in probability in }\ell^\infty(\T),
\end{equation*}
where $\mathbb G_\xi^{A}$ is a centered Gaussian process with covariance kernel
\begin{equation*}
\Gamma_A(\tau,\tau')
=
(1-\lambda)\E_1\{\xi_{1,\tau}\xi_{1,\tau'}\}
+
\lambda\E_0\{\xi_{0,\tau}\xi_{0,\tau'}\}.
\end{equation*}
\end{theorem}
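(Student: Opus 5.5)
We split $S_n^*$ into an oracle multiplier process plus an estimation error. Define
\[
\widetilde S_n^*(\tau)=\sqrt{1-\lambda_n}\,n_1^{-1/2}\sum_{i=1}^{n_1}V_i^{(1)}\xi_{1,\tau}(Y_i^{(1)})-\sqrt{\lambda_n}\,n_0^{-1/2}\sum_{j=1}^{n_0}V_j^{(0)}\xi_{0,\tau}(Y_j^{(0)}),
\]
which uses the true influence functions.

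\textbf{Step 1: the oracle process.} By Lemma~\ref{lem:bahadur}, each class $\{\phi_{d,\tau}:\tau\in\T\}$ is Donsker under $F_d$ with square-integrable envelope $C(1+|Y|)$. Adding the single function $y-\mu_d$ preserves both properties, so $\{\xi_{d,\tau}:\tau\in\T\}$ is Donsker with envelope $C'(1+|Y|)$. The conditional multiplier central limit theorem (van der Vaart and Wellner, Theorem 2.9.6/2.9.7) applies because the multipliers are i.i.d., mean zero, unit variance and bounded, hence $\|V\|_{2,1}<\infty$. It gives, for each arm separately, $n_d^{-1/2}\sum_i V_i^{(d)}\xi_{d,\tau}(Y_i^{(d)})\rightsquigarrow^*\mathbb G_d$ in probability, where $\mathbb G_d$ has kernel $\E_d\xi_{d,\tau}\xi_{d,\tau'}$. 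The arms are independent and $\lambda_n\to\lambda$, so $\widetilde S_n^*\rightsquigarrow^*$ a centered Gaussian process with kernel $\Gamma_A$. Under $H_0$, translation equivariance gives $e_1(\tau)=e_0(\tau)+\delta$ and $F_1(y)=F_0(y-\delta)$, so $D_1=D_0$ and $\xi_{1,\tau}(y+\delta)=\xi_{0,\tau}(y)$. The two kernels therefore coincide and $\Gamma_A=(1-\lambda)\Gamma+\lambda\Gamma=\Gamma$.

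\textbf{Step 2: the estimation error.} We show $\sup_\tau|S_n^*-\widetilde S_n^*|=o_p(1)$ conditionally. It suffices to show, for each arm, that $\E^*\sup_\tau|n_d^{-1/2}\sum_i V_i(\widehat\xi_{d,i,\tau}-\xi_{d,\tau}(Y_i))|^2\to 0$ in outer probability. Write $\psi_\tau(y,e)=2|\tau-\1\{y\le e\}|(y-e)$. Then
\[
\widehat\xi_{d,i,\tau}-\xi_{d,\tau}(Y_i)=\frac{\psi_\tau(Y_i,\widehat e_d)-\psi_\tau(Y_i,e_d)}{\widehat D_d}+\psi_\tau(Y_i,e_d)\Big(\frac1{\widehat D_d}-\frac1{D_d}\Big)+(\bar Y_d-\mu_d).
\]
We treat the three pieces in turn.

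\emph{Constant piece.} For a constant $c_n$, $n_d^{-1/2}\sum_iV_i c_n=c_n\,O_p^*(1)$, and $c_n=\bar Y_d-\mu_d=o_p(1)$, so this piece is negligible.

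\emph{Jacobian piece.} The second piece equals $(1/\widehat D_d-1/D_d)$ times the oracle multiplier process for $\psi$. That process is $O_p^*(1)$ uniformly in $\tau$ by the Donsker property. We then need $\sup_\tau|\widehat D_d-D_d|\to_p0$. This follows from uniform consistency $\sup_\tau|\widehat e_d-e_d|=O_p(n^{-1/2})$ (Lemma~\ref{lem:bahadur}), Glivenko--Cantelli for $\widehat F_d$, continuity of $F_d$ near $\mathcal E_d(\T)$ (uniform continuity on a compact neighborhood), and $D_d\ge 2\min\{\tau_L,1-\tau_U\}$.

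\emph{Residual piece.} For the first piece, consider the class $\{\psi_\tau(\cdot,e)-\psi_\tau(\cdot,e_d(\tau)):\tau\in\T,|e-e_d(\tau)|\le\eta\}$, which is VC-type with envelope $C(1+|Y|)$. The bound $|\psi_\tau(y,e)-\psi_\tau(y,e')|\le 2|e-e'|+2|y-e'|\1\{y\text{ between }e,e'\}$ shows that its $L_2$ diameter tends to $0$ as $\eta\to0$, by continuity of $F_d$ and the finite second moment. Asymptotic equicontinuity of the multiplier empirical process (again from the multiplier CLT) then yields negligibility. We also need the mean term $\sqrt{n_d}\,\E_{n,d}$ to vanish after multiplication by $V$. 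It does, because the multiplier process has mean zero and the equicontinuity bound controls its supremum directly.

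\textbf{Step 3: the three regimes.} Combining the steps gives the claim under $H_0$ and under the fixed alternative $H_1$, where Assumption~\ref{ass:regularity} holds for $F_1$ itself. Under $H_{1n}(b,\kappa)$ the treated law varies with $n$, so we need triangular-array versions of Steps 1 and 2. The map $T_{s_n}$ is Lipschitz with constant $1+O(s_n)$. We couple $Y_n^{(1)}=\mu+\delta+T_{s_n}(\varepsilon)$ with the null variable. The Bahadur/Donsker argument is then uniform over $s$ in a neighborhood of $0$, with envelope $C(1+|\varepsilon|)$, and $\E\xi^{(s)}_{\tau}\xi^{(s)}_{\tau'}\to\Gamma(\tau,\tau')$ by dominated convergence. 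This uses $e_s\to e$ and continuity of the CDF of $T_s(\varepsilon)$. We would then invoke the multiplier CLT for classes changing with $n$ (van der Vaart and Wellner, Theorem 2.11.23 style) to get the limit $\mathbb G_\xi$.

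\textbf{Main obstacle.} The hardest step should be Step 2's equicontinuity argument for the indicator-driven part of $\psi_\tau(\cdot,\widehat e)$ when only continuity of $F_d$, and no density, is available, combined with making it uniform along the local-alternative path. We would first try a bracketing-entropy bound, exploiting the monotone structure in $e$ together with the $L_2$ modulus from continuity.
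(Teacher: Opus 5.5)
Your proposal is correct and follows essentially the paper's route: the same oracle/feasible split, the same three-term expansion of $\widehat\xi_{d,i,\tau}-\xi_{d,\tau}(Y_i^{(d)})$, the conditional multiplier CLT (Theorem~2.9.6 of van der Vaart and Wellner) for the oracle process, translation equivariance to obtain $\Gamma_A=\Gamma$ under $H_0$, and, along $H_{1n}(b,\kappa)$, writing the treated sample as $T_{s_n}(\varepsilon_i)$ with $\varepsilon_i$ drawn from the fixed baseline law; the only difference is that the paper handles the three error terms jointly, via a Hoeffding-type conditional sub-Gaussian chaining bound over a finite-dimensional Lipschitz-parametric class with empirical-$L_2$ radius $\delta_{d,n}=o_p(1)$, rather than by your term-by-term equicontinuity argument. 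Two simplifications are worth noting: the obstacle you anticipate does not arise, since $e\mapsto\psi_\tau(y,e)$ is continuous and piecewise linear with slopes $-2\tau$ and $-2(1-\tau)$, so $|\psi_\tau(y,e)-\psi_\tau(y,e')|\le 2|e-e'|$ for every $y$ (your own bound already gives $4|e-e'|$, because $|y-e'|\le|e-e'|$ on the crossing event), and continuity of $F_d$ plays no role in that step; and along the local path no changing-class multiplier CLT is needed, because the functions $\xi_{s_n,\tau}\circ T_{s_n}$ lie in a fixed $F$-Donsker class and converge to $\xi_\tau$ in $L_2(F)$ uniformly in $\tau$, so conditional asymptotic equicontinuity lets you replace the treated oracle multiplier process by its baseline counterpart, which is exactly what the paper does.
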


Under $H_0$, Theorem~\ref{thm:multiplier-fclt} shows that the multiplier process converges conditionally to the same Gaussian process $\mathbb G_\xi$ as the observed process in Theorem~\ref{thm:centered-limit}.  The resulting multiplier critical values therefore yield asymptotically valid KS and CvM tests.  Along the local alternatives $H_{1n}(b,\kappa)$, the multiplier process continues to reproduce the centered Gaussian component, while Theorem~\ref{thm:local} shows that the observed process acquires the deterministic drift $\kappa h_b$.  Hence, multiplier calibration preserves the nontrivial local-power limits characterized in Section~\ref{sec:asymptotic}.

Under any fixed alternative in $H_1$, the second part of Theorem~\ref{thm:multiplier-fclt} implies that the multiplier statistics, and hence their critical values, remain stochastically bounded, although their limiting covariance kernel generally differs from that under $H_0$. By contrast, Proposition~\ref{prop:fixed} shows that $KS_n$ and $CvM_n$ diverge whenever $r\not\equiv0$ on $\T$.  Consequently, the multiplier-bootstrap tests are consistent against any fixed alternative in $H_1$ for which $r\not\equiv0$ on $\T$.

% =====================================================================
\subsection{A Studentized Refinement}
\label{subsec:studentization}

The asymptotic theory in Section~\ref{sec:asymptotic}, together with the multiplier bootstrap developed in the preceding subsection, already provides feasible inference for the centered-process KS and CvM tests.  We now consider an alternative way of weighting information across expectile coordinates. Since the pointwise variability of $S_n(\tau)$ may vary over $\T$, global criteria can place substantial weight on relatively noisy coordinates. A natural approach is therefore to standardize each coordinate by its pointwise standard error, yielding a studentized process.  This inverse-variance principle is familiar in process-based testing; for example, \citet{chernozhukov2005inference} recommend inverse pointwise covariance weighting for quantile-process tests and refer to it as Anderson--Darling-type weighting.\footnote{In the classical one-sample goodness-of-fit problem, the Anderson--Darling criterion weights the squared Brownian-bridge process by $[t(1-t)]^{-1}$, the inverse of its pointwise variance.  Our studentized CvM statistic follows the same principle, while the studentized KS statistic applies the corresponding standard-error normalization inside the supremum.}  The following stylized example illustrates how such reweighting can affect power.

\begin{example}[Variance-dominating noise]\label{ex:variance-dominance}
Let $Z_1,Z_2$ be independent standard normal variables and suppose that the signal $\Delta\ne0$ enters only the first coordinate.  Consider
\[
T_\rho=\max\{|Z_1+\Delta|,\rho|Z_2|\},
\qquad
T_S=\max\{|Z_1+\Delta|,|Z_2|\},
\]
where each statistic is calibrated at level $\alpha$ using its own null law ($\Delta=0$).  The second coordinate contains no signal, but its variance in $T_\rho$ increases with $\rho^2$.  As $\rho\to\infty$, the rejection probability of $T_\rho$ converges to $\alpha$, whereas that of $T_S$ remains strictly above $\alpha$.  The same conclusion holds for the quadratic criteria $(Z_1+\Delta)^2+\rho^2Z_2^2$ and $(Z_1+\Delta)^2+Z_2^2$.
\end{example}

Intuitively, the example replaces the continuum of expectile coordinates by two points.  In the full process, the same consideration arises when sampling variability differs across $\tau$: standardizing pointwise variability changes how the KS and CvM criteria aggregate signal and noise over $\T$. In the expectile setting, such standardization is particularly convenient because the relevant pointwise variances can be estimated directly from sample moments and empirical CDF values, without nonparametric estimation. There is, however, one complication.  By construction, $S_n(1/2)\equiv0$, and its pointwise variance also vanishes at the mean coordinate.  Direct studentization of the centered process therefore produces an apparent $0/0$ singularity at $\tau=1/2$. We next show that this singularity can be resolved by an exact factorization of the mean-centered expectile.

Write $c_\tau=2\tau-1$ and define $m_d(\tau)=\E_d|Y-e_d(\tau)|$. The expectile first-order condition
gives $e_d(\tau)-\mu_d=c_\tau m_d(\tau)$. Thus, the mean-centered expectile vanishes at $\tau=1/2$ through the known factor $c_\tau$. The sample expectile likewise satisfies $\widehat e_d(\tau)-\bar Y_d=c_\tau\widehat m_d(\tau)$, where $\widehat m_d(\tau)=\E_{n,d}|Y-\widehat e_d(\tau)|$. Let $q(\tau)=m_1(\tau)-m_0(\tau)$ and $A_n(\tau)=\sqrt N\{\widehat m_1(\tau)-\widehat m_0(\tau)\}$. The same factorization extends to the first-order representation.  The following proposition identifies the corresponding influence function.

\begin{proposition}
\label{prop:factorization}
Let $a_d(\tau)=1-2F_d\{e_d(\tau)\}$ and define
\[
\chi_{d,\tau}(y)
=
\frac{
|y-e_d(\tau)|-m_d(\tau)-a_d(\tau)(y-\mu_d)
}{
D_d(\tau)
}.
\]
Under Assumption~\ref{ass:regularity}, uniformly over $\tau\in\T$,
\begin{equation}\label{eq:m-linearization}
\widehat m_d(\tau)-m_d(\tau)
=
\E_{n,d}\chi_{d,\tau}
+
o_p(n_d^{-1/2}).
\end{equation}

For its feasible counterpart, let $\widehat a_d(\tau) =1-2\widehat F_d\{\widehat e_d(\tau)\}$ and define
\[
\widehat\chi_{d,i,\tau}
=
\frac{
|Y_i^{(d)}-\widehat e_d(\tau)|
-\widehat m_d(\tau)
-\widehat a_d(\tau)(Y_i^{(d)}-\bar Y_d)
}{
\widehat D_d(\tau)
}.
\]
Then $\xi_{d,\tau}=c_\tau\chi_{d,\tau}$ and $\widehat\xi_{d,i,\tau} =c_\tau\widehat\chi_{d,i,\tau}$, with $\E_{n,d}\widehat\chi_{d,\tau}=0$. Consequently, $r(\tau)=c_\tau q(\tau)$ and $S_n(\tau)=c_\tau A_n(\tau)$.
\end{proposition}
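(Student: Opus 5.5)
The plan is to prove the algebraic identities first, since they hold exactly in population and in every sample, and then obtain the linearization \eqref{eq:m-linearization} by dividing the already available Bahadur representation by $c_\tau$ wherever $\tau\ne 1/2$. The case $\tau=1/2$ needs separate treatment.

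\emph{Factorization.} Use $|\tau-\1\{y\le e\}|(y-e)=\tfrac12\{(y-e)+(2\tau-1)|y-e|\}$. This holds because for $y>e$ the left side is $\tau(y-e)$, and for $y\le e$ it is $(1-\tau)(y-e)=-(1-\tau)|y-e|$. Substituting into the population first-order condition \eqref{eq:psi-def} gives $\mu_d-e_d(\tau)+c_\tau m_d(\tau)=0$, and the same step with $\E_{n,d}$ gives $\widehat e_d(\tau)-\bar Y_d=c_\tau\widehat m_d(\tau)$. Subtracting across arms yields $r=c_\tau q$ and $S_n=c_\tau A_n$.

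For the influence functions, write $\psi_\tau(y,e)=(y-e)+c_\tau|y-e|$. Then
\[
D_d\xi_{d,\tau}=(\mu_d-e_d)+c_\tau|y-e_d|-(D_d-1)(y-\mu_d).
\]
From \eqref{eq:score_jacobian}, $D_d-1=2\tau-1+2(1-2\tau)F_d(e_d)=c_\tau a_d$. Combining this with $\mu_d-e_d=-c_\tau m_d$ gives $\xi_{d,\tau}=c_\tau\chi_{d,\tau}$. The identical computation with $\widehat F_d$, $\widehat e_d$, $\widehat m_d$ and $\bar Y_d$ gives $\widehat\xi_{d,i,\tau}=c_\tau\widehat\chi_{d,i,\tau}$. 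The identity $\E_{n,d}\widehat\chi_{d,\tau}=0$ is immediate, since the sample mean of $|Y_i-\widehat e_d|$ is $\widehat m_d$ and the sample mean of $Y_i-\bar Y_d$ is zero.

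\emph{Linearization.} Lemma~\ref{lem:bahadur} and $\bar Y_d-\mu_d=\E_{n,d}(Y-\mu_d)$ give
\[
c_\tau\{\widehat m_d-m_d\}=(\widehat e_d-\bar Y_d)-(e_d-\mu_d)=c_\tau\E_{n,d}\chi_{d,\tau}+R_{d,n_d}(\tau).
\]
Dividing by $c_\tau$ is harmless only when $|\tau-1/2|\ge\eta$ for a fixed $\eta>0$. Near $1/2$ this division fails, and handling that region is the main obstacle.

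For the region near $1/2$ I will linearize $\widehat m_d$ directly. Decompose
\[
\widehat m_d(\tau)-m_d(\tau)=(\E_{n,d}-\E_d)|Y-e_d(\tau)|+(\E_{n,d}-\E_d)\bigl[|Y-\widehat e_d|-|Y-e_d|\bigr]+\bigl[g_d(\widehat e_d)-g_d(e_d)\bigr],
\]
where $g_d(e)=\E_d|Y-e|$.

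The second term is $o_p(n_d^{-1/2})$ uniformly over $\T$. The class $\{|y-e|:e\}$ is VC with a Lipschitz envelope in $e$, and $\sup_\tau|\widehat e_d-e_d|=O_p(n_d^{-1/2})$ by Lemma~\ref{lem:bahadur}, so asymptotic equicontinuity applies.

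For the third term, $g_d'(e)=2F_d(e)-1=-a_d$ at $e=e_d$, and this derivative is continuous near $\mathcal E_d(\T)$ by Assumption~\ref{ass:regularity}. Hence by uniform continuity the term equals $-a_d(\tau)\{\widehat e_d-e_d\}+o_p(n_d^{-1/2})$, which by Lemma~\ref{lem:bahadur} is $-a_d\E_{n,d}\phi_{d,\tau}+o_p(n_d^{-1/2})$.

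It remains to check that the resulting leading term equals $\E_{n,d}\chi_{d,\tau}$. The direct expansion gives $(\E_{n,d}-\E_d)|Y-e_d|-a_d\E_{n,d}\phi_{d,\tau}$. The target is $\E_{n,d}\chi_{d,\tau}$, which is $D_d^{-1}$ times $(\E_{n,d}-\E_d)|Y-e_d|-a_d\E_{n,d}(Y-\mu_d)$. Substituting $D_d\phi=(\mu_d-e_d)+c_\tau|y-e_d|-(D_d-1)(y-\mu_d)$ and using $D_d-1=c_\tau a_d$ together with $1-c_\tau a_d^2$ relations, the identity reduces to a pointwise algebraic check. I will verify this check explicitly, using $D_d=1+c_\tau a_d$.

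This direct argument actually covers all of $\T$, so the $c_\tau$-division step serves only as a consistency check. The direct route is what I will rely on.
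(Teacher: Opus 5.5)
Your factorization argument is correct, and your linearization is essentially the paper's. The paper also expands $\widehat m_d(\tau)$ directly over all of $\T$, never dividing by $c_\tau$. It reaches the same leading term $(\E_{n,d}-\E_d)|Y-e_d(\tau)|-a_d(\tau)\{\widehat e_d(\tau)-e_d(\tau)\}$ and then invokes Lemma~\ref{lem:bahadur}.

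The only difference is the middle step:
\begin{itemize}
\item The paper uses the exact identity $|y-\widehat e|-|y-e|=(\widehat e-e)\{2\1(y\le e)-1\}+2(\widehat e-y)\{\1(y\le\widehat e)-\1(y\le e)\}$. It bounds the second piece by $2|\widehat e-e|$ times the empirical crossing mass already controlled in the proof of Lemma~\ref{lem:bahadur}. It then replaces $2\widehat F_d\{e_d(\tau)\}-1$ by $-a_d(\tau)$ via DKW.
\item You split instead into an empirical-process increment, handled by asymptotic equicontinuity of $\{y\mapsto|y-e|\}$, and a population increment, handled by differentiability of $g_d(e)=\E_d|Y-e|$.
\end{itemize}
Both are valid. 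The paper's version reuses a bound it already has, while yours follows the generic stochastic-equicontinuity template.

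The step that fails as written is the deferred algebraic check. The expression you substitute for $D_d\phi_{d,\tau}$, namely $(\mu_d-e_d)+c_\tau|y-e_d|-(D_d-1)(y-\mu_d)$, is actually $D_d\xi_{d,\tau}$. The correct expression is $D_d\phi_{d,\tau}(y)=(y-\mu_d)+(\mu_d-e_d)+c_\tau|y-e_d|$. With your version, the leading term would differ from $\E_{n,d}\chi_{d,\tau}$ by $a_d(\tau)\E_{n,d}(Y-\mu_d)$. That term is of order $n_d^{-1/2}$ and not negligible when $a_d(\tau)\neq0$, and no relation involving $1-c_\tau a_d^2$ removes it.

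The repair is one line. By the identity you already proved, $\phi_{d,\tau}(y)=c_\tau\chi_{d,\tau}(y)+(y-\mu_d)$. By the definition of $\chi_{d,\tau}$, $(\E_{n,d}-\E_d)|Y-e_d|-a_d\E_{n,d}(Y-\mu_d)=D_d\E_{n,d}\chi_{d,\tau}$. So the leading term equals $D_d\E_{n,d}\chi_{d,\tau}-c_\tau a_d\E_{n,d}\chi_{d,\tau}=\E_{n,d}\chi_{d,\tau}$, because $D_d-c_\tau a_d=1$. This is the pointwise identity $|y-e(\tau)|-m(\tau)-a(\tau)\phi_\tau(y)=\chi_\tau(y)$ that the paper verifies using $D=1+c_\tau a$ and $e-\mu=c_\tau m$.
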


Proposition~\ref{prop:factorization} shows that the apparent singularity at the mean coordinate is generated entirely by the known factor $c_\tau$.  We therefore studentize the regular process $A_n$ rather than divide $S_n$ directly by its vanishing pointwise standard error.  Define the asymptotic pointwise variance of $A_n(\tau)$ and its sample analogue by
\begin{equation}\label{eq:student-variance}
v_A^2(\tau)
=
(1-\lambda)\E_1\chi_{1,\tau}^2
+
\lambda\E_0\chi_{0,\tau}^2,
\qquad
\widehat v_{A,n}^2(\tau)
=
(1-\lambda_n)\E_{n,1}\widehat\chi_{1,\tau}^2
+
\lambda_n\E_{n,0}\widehat\chi_{0,\tau}^2.
\end{equation}
The studentized process and the corresponding global statistics are
\begin{equation}\label{eq:U-def}
U_n(\tau)
=
\frac{A_n(\tau)}{\widehat v_{A,n}(\tau)},
\qquad
KS_n^{\mathrm{stud}}
=
\sup_{\tau\in\T}|U_n(\tau)|,
\qquad
CvM_n^{\mathrm{stud}}
=
\int_\T U_n^2(\tau)\,d\tau.
\end{equation}
For $\tau\neq1/2$, $|U_n(\tau)|$ and $U_n^2(\tau)$ coincide with the usual absolute and squared pointwise-standardized versions of $S_n(\tau)$; the factorization extends the same normalization through the mean coordinate.

Under Assumptions~\ref{ass:sampling} and \ref{ass:regularity}, $\widehat v_{A,n}^2$ is uniformly consistent for $v_A^2$, which is finite and continuous on $\T$.  For studentization, we additionally require $\inf_{\tau\in\T}v_A^2(\tau)>0$.  Under $H_0$, translation equivariance makes the two arm-specific $\chi$ processes identical in distribution, so $v_A^2(\tau)=\E_d\chi_{d,\tau}^2$ for $d\in\{0,1\}$, and the allocation fraction drops out.

\begin{corollary}
\label{cor:studentized}
Suppose Assumptions~\ref{ass:sampling} and \ref{ass:regularity} hold and $\inf_{\tau\in\T}v_A^2(\tau)>0$.  Under $H_0$,
\begin{equation}\label{eq:U-null-limit}
U_n
\rightsquigarrow
\mathbb U
:=
\frac{\mathbb G_\chi}{v_A}
\qquad
\text{in }\ell^\infty(\T),
\end{equation}
where $\mathbb G_\chi$ is a centered Gaussian process with covariance kernel $K(\tau,\tau') = \E_d\{\chi_{d,\tau}\chi_{d,\tau'}\}$, $d\in\{0,1\}$. Consequently, $KS_n^{\mathrm{stud}} \rightsquigarrow \sup_{\tau\in\T}|\mathbb U(\tau)|$ and $CvM_n^{\mathrm{stud}} \rightsquigarrow \int_\T \mathbb U^2(\tau)\,d\tau$.
\end{corollary}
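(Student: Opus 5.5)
Under $H_0$ we have $r\equiv0$, so $q\equiv0$ on $\T$. The plan has three parts: first obtain a uniform linear representation for $A_n$, then prove weak convergence of its leading term, and finally handle the denominator and apply the continuous mapping theorem.

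\textbf{Step 1 (linearization).} By Proposition~\ref{prop:factorization}, applied arm by arm with $q=0$ under $H_0$,
\[
A_n(\tau)=\sqrt{1-\lambda_n}\,\mathbb G_{n,1}\chi_{1,\tau}-\sqrt{\lambda_n}\,\mathbb G_{n,0}\chi_{0,\tau}+o_p(1)
\]
uniformly in $\tau\in\T$. The remainder terms are $\sqrt N\,o_p(n_d^{-1/2})=o_p(1)$ because $N/n_d\to 1-\lambda$ or $\lambda$, which is bounded by Assumption~\ref{ass:sampling}.

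\textbf{Step 2 (weak convergence of the leading term).} I need each class $\{\chi_{d,\tau}:\tau\in\T\}$ to be $F_d$-Donsker. I would establish this directly, using three facts:
\begin{itemize}
\item $\tau\mapsto e_d(\tau)$ is continuous and monotone on $\T$, so $\{|y-e_d(\tau)|\}$ is a VC-type class of functions Lipschitz in the parameter $e$, with envelope $|y|+\sup_\T|e_d|$.
\item The coefficients $m_d,a_d,D_d$ are deterministic bounded functions with $D_d\ge 2\min\{\tau_L,1-\tau_U\}$.
\item $y-\mu_d$ is a single square-integrable function.
\end{itemize}
Products and sums of these with bounded deterministic coefficients preserve the Donsker property. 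The envelope $C(1+|Y|)$ is square integrable under Assumption~\ref{ass:regularity}.

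Continuity of $\tau\mapsto\chi_{d,\tau}$ in $L_2(F_d)$ follows from continuity of $e_d$ and, via Assumption~\ref{ass:regularity}, of $F_d\{e_d(\tau)\}$. Since the two samples are independent, the two-sample combination converges to a centered Gaussian process with kernel
\[
(1-\lambda)\E_1\chi_{1,\tau}\chi_{1,\tau'}+\lambda\E_0\chi_{0,\tau}\chi_{0,\tau'}.
\]
Under $H_0$, translation equivariance gives $e_1(\tau)=e_0(\tau)+\delta$, $\mu_1=\mu_0+\delta$, $m_1=m_0$, $a_1=a_0$ and $D_1=D_0$. Hence $\chi_{1,\tau}(Y^{(1)})\stackrel d=\chi_{0,\tau}(Y^{(0)})$ jointly in $\tau$, the kernel collapses to $K$, and $A_n\rightsquigarrow\mathbb G_\chi$ with uniformly continuous sample paths.

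\textbf{Step 3 (the denominator).} I would prove $\sup_\T|\widehat v_{A,n}^2-v_A^2|\xrightarrow p0$; this is the main obstacle. Write $\widehat\chi_{d,i,\tau}-\chi_{d,\tau}(Y_i^{(d)})$ as a sum of the following pieces:
\begin{itemize}
\item coefficient errors $\widehat m_d-m_d$, $\widehat a_d-a_d$, $\widehat D_d-D_d$, $\bar Y_d-\mu_d$, each multiplied by terms with envelope $C(1+|Y|)$;
\item the term $|Y-\widehat e_d|-|Y-e_d|$, which is bounded by $|\widehat e_d-e_d|$.
\end{itemize}
Uniform consistency of $\widehat e_d$ follows from Lemma~\ref{lem:bahadur}. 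Uniform consistency of $\widehat F_d\{\widehat e_d(\tau)\}$ follows from Glivenko--Cantelli together with continuity of $F_d$ near $\mathcal E_d(\T)$ and uniform continuity of $F_d$ on a compact neighbourhood. Then Cauchy--Schwarz with $\E_{n,d}(1+|Y|)^2=O_p(1)$ gives $\sup_\T\E_{n,d}(\widehat\chi_{d,\tau}-\chi_{d,\tau})^2=o_p(1)$. Finally, a Glivenko--Cantelli argument for the class of squares $\{\chi_{d,\tau}^2\}$, whose envelope is integrable, yields the claim.

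\textbf{Conclusion.} Since $\inf_\T v_A^2>0$, the map $(f,g)\mapsto f/g$ is continuous at points with $\inf g>0$. By Slutsky's lemma in $\ell^\infty(\T)$, $U_n=A_n/\widehat v_{A,n}\rightsquigarrow\mathbb G_\chi/v_A$. The sup and integral functionals are continuous on $\ell^\infty(\T)$, so the continuous mapping theorem gives the stated limits for $KS_n^{\mathrm{stud}}$ and $CvM_n^{\mathrm{stud}}$.
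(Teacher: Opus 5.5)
Your proposal is correct and follows essentially the same route as the paper: linearize $A_n$ via Proposition~\ref{prop:factorization}, obtain $A_n\rightsquigarrow\mathbb G_\chi$ from the Donsker property of $\{\chi_{d,\tau}:\tau\in\T\}$ together with translation equivariance under $H_0$, prove $\sup_{\T}|\widehat v_{A,n}^2-v_A^2|=o_p(1)$ from empirical-$L_2$ closeness of $\widehat\chi_{d,\tau}$ to $\chi_{d,\tau}$ plus a Glivenko--Cantelli argument for $\{\chi_{d,\tau}^2\}$, and finish with functional Slutsky and continuous mapping. The differences are cosmetic: you get the Donsker property from VC-type and convex-hull closure rather than the paper's Lipschitz-parametric bracketing bound. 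One small slip is that $r\equiv0$ alone does not force $q(1/2)=0$, because $c_{1/2}=0$; your own observation in Step 2 that $m_1=m_0$ under $H_0$ already fixes this.
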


For feasible inference, we use the same multipliers as in Section~\ref{subsec:multiplier} and define the studentized multiplier process
\begin{equation}\label{eq:student-multiplier}
U_n^*(\tau)=\frac{1}{\widehat v_{A,n}(\tau)}\left\{\sqrt{1-\lambda_n}\frac{1}{\sqrt{n_1}}\sum_{i=1}^{n_1}V_i^{(1)}\widehat\chi_{1,i,\tau}-\sqrt{\lambda_n}\frac{1}{\sqrt{n_0}}\sum_{j=1}^{n_0}V_j^{(0)}\widehat\chi_{0,j,\tau}\right\},\qquad \tau\in\T.
\end{equation}
Under $H_0$, the same conditional multiplier argument, together with uniform consistency of $\widehat v_{A,n}$, yields $U_n^*\rightsquigarrow^*\mathbb U$ in probability. Applying the KS and CvM functionals to $U_n^*$ therefore provides asymptotically valid bootstrap critical values for the studentized tests. To further see how this reweighting affects power, we return to the local alternatives in Section~\ref{subsec:asymptotic-power}.  Recall that $e_s(\tau)$ denotes the $\tau$-expectile of $T_s(\varepsilon)$.  Define the corresponding absolute-deviation curve by $m_s(\tau)=\E|T_s(\varepsilon)-e_s(\tau)|$, so that $m_0(\tau)=m(\tau)$.  Its first-order response to the perturbation is
\begin{equation}\label{eq:gb-def}
g_b(\tau)
:=
\left.
\frac{\partial m_s(\tau)}{\partial s}
\right|_{s=0}
=
-\frac{
2\E[b(\varepsilon)\1\{\varepsilon\le e(\tau)\}]
}{
D(\tau)
}.
\end{equation}
Along the perturbation path, $e_s(\tau)-\E T_s(\varepsilon)=c_\tau m_s(\tau)$.  Differentiating at $s=0$ and using $\E b(\varepsilon)=0$ gives $h_b(\tau)=c_\tau g_b(\tau)$.  Consequently, under the local alternatives $H_{1n}(b,\kappa)$,
\[
U_n
\rightsquigarrow
\frac{\mathbb G_\chi+\kappa g_b}{v_A}
\qquad
\text{in }\ell^\infty(\T).
\]
Thus studentization leaves the $N^{-1/2}$ local detection rate unchanged but reweights the deterministic drift across expectile coordinates: the centered process has drift $\kappa c_\tau g_b(\tau)$, whereas the studentized process has drift $\kappa g_b(\tau)/v_A(\tau)$. For the local scale perturbation in Example~\ref{ex:local-scale-path}, $b(x)=x$ and $T_s(x)=(1+s)x$.  Positive scale equivariance implies $m_s(\tau)=(1+s)m(\tau)$ for $s$ near zero, and hence $g_b(\tau)=m(\tau)$.  The centered and studentized local drifts therefore reduce to $\kappa c_\tau m(\tau)$ and $\kappa m(\tau)/v_A(\tau)$, respectively.

These calculations highlight the potential power gains from studentization. By placing expectile coordinates on a common pointwise noise scale, studentization can improve sensitivity when sampling variability is uneven across $\T$, preventing relatively noisy coordinates from receiving disproportionate weight in the global statistic.  Because global power also depends on how the signal aligns with the variance and dependence structure across $\T$, however, studentization need not improve power for every alternative, and no uniform dominance over the raw tests can hold across all data-generating processes.  Overall, this section provides tuning-free multiplier inference for the centered-process tests together with a studentized refinement.  Section~\ref{sec:simu} examines the finite-sample size and power of both versions across a range of designs.

% =====================================================================
\section{Covariate Adjustment}
\label{sec:covariate-adjustment}
% =====================================================================

In modern randomized experiments, researchers often have access to a rich set of pretreatment covariates in addition to treatment assignment and outcomes. Although random assignment balances these characteristics on average, chance imbalances generally remain in realized samples.  When pretreatment covariates are predictive of outcomes, such information can be used to absorb otherwise extraneous sampling variation and improve the precision of treatment-effect inference \citep{zhang2008improving,li2020rerandomization}.  This argument extends naturally to tests of treatment-effect heterogeneity: removing outcome variation explained by pretreatment covariates can make treatment-effect variation more readily apparent \citep{ding2016randomization}.  In this section, we develop a covariate adjustment for the centered expectile process using a fixed, finite-dimensional, prespecified covariate basis.

\begin{assumption}
\label{ass:covariate}
Let $Z=z(X)\in\mathbb R^p$, where $p$ is fixed and $z(\cdot)$ is prespecified.  Suppose $\E\|Z\|^2<\infty$ and $\Sigma_Z:=\E[(Z-\E Z)(Z-\E Z)^\top]$ is positive definite.
\end{assumption}

A natural choice of $Z$ is $X$ itself, although $Z$ may also include indicators for categorical variables, low-order transformations, or prespecified interactions.  Write $\widetilde Z=Z-\E Z$.  Random assignment ensures balance of pretreatment covariates in the population, but their sample means need not coincide in a realized experiment.  Writing $\bar Z_d=\E_{n,d}Z$, we summarize this chance imbalance by
\begin{equation}\label{eq:ca-imbalance}
B_n
=
\sqrt N(\bar Z_1-\bar Z_0).
\end{equation}

The idea of covariate adjustment is to use the realized covariate imbalance to account for sampling variation in $S_n(\tau)$ without changing the population discrepancy $r(\tau)$.  Under random assignment, $B_n$ is centered and $B_n=O_p(1)$.  Hence, for any deterministic $a(\tau)\in\mathbb R^p$, the adjusted process $S_n(\tau)-a(\tau)^\top B_n$ has the same deterministic component $\sqrt N\,r(\tau)$ as $S_n(\tau)$, while its first-order sampling variation may differ.  We therefore seek the choice of $a(\tau)$ that minimizes the pointwise asymptotic variance.

This choice follows from the usual linear projection argument.  With a slight abuse of notation, let $\E_d$ below denote expectation under the joint distribution of $(Y(d),Z)$. Using the influence representation of $S_n(\tau)$ in \eqref{eq:S-linear}, define the limiting covariance between the covariate imbalance and the first-order sampling error of the centered expectile process by
\[
C_\xi(\tau)
:=
\lim_{n\to\infty}
\Cov\!\left\{
B_n,\,
S_n(\tau)-\sqrt N\,r(\tau)
\right\}
=
(1-\lambda)\E_1\{\widetilde Z\,\xi_{1,\tau}\}
+
\lambda\E_0\{\widetilde Z\,\xi_{0,\tau}\}.
\]
Since $\Var(B_n)\to\Sigma_Z$, the usual least-squares projection formula implies that the variance-minimizing coefficient among adjustments of the form $a(\tau)^\top B_n$ is $\gamma(\tau):=\Sigma_Z^{-1}C_\xi(\tau)$.  Thus, $\gamma(\tau)^\top B_n$ is the asymptotic linear projection of the first-order sampling error on the realized covariate imbalance, so that subtracting it removes the corresponding linearly predictable component.

The population projection has a direct sample analogue.  Let $\widetilde Z_{d,i}=Z_{d,i}-\bar Z_d$ and define
\begin{align}
\widehat\Sigma_Z
&=
(1-\lambda_n)\E_{n,1}
(\widetilde Z_1\widetilde Z_1^\top)
+
\lambda_n\E_{n,0}
(\widetilde Z_0\widetilde Z_0^\top),
\nonumber\\
\widehat C_\xi(\tau)
&=
(1-\lambda_n)\E_{n,1}
\{\widetilde Z_1\widehat\xi_{1,\tau}\}
+
\lambda_n\E_{n,0}
\{\widetilde Z_0\widehat\xi_{0,\tau}\},
\qquad
\widehat\gamma(\tau)
=
\widehat\Sigma_Z^{-1}\widehat C_\xi(\tau).
\label{eq:ca-feasible}
\end{align}
The covariate-adjusted process is
\begin{equation}\label{eq:ca-process}
S_n^{\mathrm{adj}}(\tau)
=
S_n(\tau)-\widehat\gamma(\tau)^\top B_n,
\qquad \tau\in\T,
\end{equation}
with corresponding feasible influence values $\widehat\xi_{d,i,\tau}^{\mathrm{adj}} = \widehat\xi_{d,i,\tau} -\widehat\gamma(\tau)^\top\widetilde Z_{d,i}$. The adjusted KS and CvM statistics are constructed from $S_n^{\mathrm{adj}}$ in the same way as their unadjusted counterparts. The following theorem shows that the adjustment preserves the population discrepancy while reducing the covariance of the expectile process.

\begin{theorem}
\label{thm:covariate-adjustment}
Suppose Assumptions~\ref{ass:sampling}--\ref{ass:covariate} hold.  Then, uniformly over $\tau\in\T$,
\begin{equation}\label{eq:ca-linearization}
S_n^{\mathrm{adj}}(\tau)
=
\sqrt N\,r(\tau)
+
\sqrt{1-\lambda_n}\,
\mathbb G_{n,1}\xi_{1,\tau}^{\mathrm{adj}}
-
\sqrt{\lambda_n}\,
\mathbb G_{n,0}\xi_{0,\tau}^{\mathrm{adj}}
+
o_p(1),
\end{equation}
where $\xi_{d,\tau}^{\mathrm{adj}}(y,z) = \xi_{d,\tau}(y) - \gamma(\tau)^\top(z-\E Z)$. Consequently, under $H_0$,
\[
S_n^{\mathrm{adj}}
\rightsquigarrow
\mathbb G_{\mathrm{adj}}
\qquad\text{in }\ell^\infty(\T),
\]
where $\mathbb G_{\mathrm{adj}}$ is a centered Gaussian process with covariance kernel
\begin{equation}\label{eq:ca-kernel}
\Gamma_{\mathrm{adj}}(\tau,s)
=
\Gamma(\tau,s)
-
C_\xi(\tau)^\top\Sigma_Z^{-1}C_\xi(s),
\end{equation}
and $\Gamma$ is the covariance kernel in Theorem~\ref{thm:centered-limit}.  In particular, $\Gamma_{\mathrm{adj}}\preceq\Gamma$ in the positive-semidefinite kernel order.
\end{theorem}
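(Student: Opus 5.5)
We start from the uniform linearization \eqref{eq:S-linear} of Theorem~\ref{thm:centered-limit} and write the imbalance as
\[
B_n=\sqrt{1-\lambda_n}\,\mathbb G_{n,1}\widetilde Z-\sqrt{\lambda_n}\,\mathbb G_{n,0}\widetilde Z .
\]
This identity holds because $\sqrt N/\sqrt{n_1}=\sqrt{1-\lambda_n}$, $\sqrt N/\sqrt{n_0}=\sqrt{\lambda_n}$, and the population means of $Z$ coincide across arms under random assignment. By the multivariate CLT, $B_n=O_p(1)$.

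If we show $\sup_{\tau\in\T}\|\widehat\gamma(\tau)-\gamma(\tau)\|=o_p(1)$, then $\widehat\gamma(\tau)^\top B_n=\gamma(\tau)^\top B_n+o_p(1)$ uniformly. Substituting the display for $B_n$ and combining with \eqref{eq:S-linear} then gives \eqref{eq:ca-linearization} directly, since $\gamma(\tau)^\top\mathbb G_{n,d}\widetilde Z=\mathbb G_{n,d}\{\gamma(\tau)^\top\widetilde Z\}$.

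The main step is therefore uniform consistency of $\widehat\gamma$. We have $\widehat\Sigma_Z\to\Sigma_Z$ by the law of large numbers, and $\Sigma_Z$ is positive definite by Assumption~\ref{ass:covariate}, so it suffices to show $\sup_\tau\|\widehat C_\xi(\tau)-C_\xi(\tau)\|=o_p(1)$. Decompose $\E_{n,d}\{\widetilde Z_d\widehat\xi_{d,\tau}\}$ into three pieces:
\begin{enumerate}[label=(\roman*)]
\item $\E_{n,d}\{\widetilde Z\xi_{d,\tau}\}$, which converges uniformly by a Glivenko--Cantelli argument. The class $\{z\,\xi_{d,\tau}(y)\}$ is a product of a fixed function with the Donsker (hence GC) class from Lemma~\ref{lem:bahadur}, with integrable envelope $C\|\widetilde Z\|(1+|Y|)$ by Cauchy--Schwarz and the second moments.
\item The effect of replacing $\E Z$ by $\bar Z_d$, which is negligible because $\E_{n,d}\xi_{d,\tau}\to0$ uniformly.
\item The estimation error $\widehat\xi_{d,i,\tau}-\xi_{d,\tau}(Y_i)$, controlled through $\sup_\tau|\widehat e_d-e_d|=o_p(1)$ (Lemma~\ref{lem:bahadur}), $\sup_\tau|\widehat D_d-D_d|=o_p(1)$, and $\bar Y_d\to\mu_d$.
\end{enumerate}
For (iii), $\psi_\tau(y,e)$ is Lipschitz in $e$ with constant at most $2$, and $\inf D_d>0$. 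Hence the pointwise difference is bounded by $o_p(1)\cdot(1+|Y_i|)$, and Cauchy--Schwarz bounds its product with $\widetilde Z_{d,i}$.

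The consistency of $\widehat D_d$ is the delicate point, because the indicator sits inside $\widehat F_d\{\widehat e_d(\tau)\}$. We handle it by the uniform Glivenko--Cantelli theorem for $\widehat F_d$ together with uniform continuity of $F_d$ on a compact neighborhood of $\mathcal E_d(\T)$ (Assumption~\ref{ass:regularity}) and uniform consistency of $\widehat e_d$.

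Under $H_0$, $r\equiv0$. The two empirical-process terms are independent and each is asymptotically tight, because the classes $\{\xi^{\mathrm{adj}}_{d,\tau}\}$ are Donsker: they are sums of a Donsker class and finite-dimensional linear functions $\gamma(\tau)^\top\widetilde z$, with $\gamma$ bounded on $\T$. Their sum therefore converges to a centered Gaussian process whose covariance is $(1-\lambda)\E_1\{\xi^{\mathrm{adj}}_{1,\tau}\xi^{\mathrm{adj}}_{1,s}\}+\lambda\E_0\{\cdots\}$.

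To expand this covariance, note that $\E_d\{\widetilde Z\widetilde Z^\top\}=\Sigma_Z$ in both arms, since $Z$ is pretreatment. Expanding gives
\[
\Gamma(\tau,s)-\gamma(\tau)^\top C_\xi(s)-C_\xi(\tau)^\top\gamma(s)+\gamma(\tau)^\top\Sigma_Z\gamma(s),
\]
where the first term is $\Gamma$ because, under $H_0$, both arm-specific $\xi$-kernels equal $\Gamma$. Substituting $\gamma=\Sigma_Z^{-1}C_\xi$ yields \eqref{eq:ca-kernel}.

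Finally, for any finite collection $\tau_1,\dots,\tau_k$ and weights $w\in\R^k$,
\[
\sum_{j,l} w_jw_l\,C_\xi(\tau_j)^\top\Sigma_Z^{-1}C_\xi(\tau_l)=\Big\|\Sigma_Z^{-1/2}\sum_j w_jC_\xi(\tau_j)\Big\|^2\ge0 .
\]
Hence $\Gamma-\Gamma_{\mathrm{adj}}$ is positive semidefinite, which establishes $\Gamma_{\mathrm{adj}}\preceq\Gamma$.
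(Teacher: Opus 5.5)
Your proposal is correct and follows essentially the same route as the paper. You linearize $B_n$ arm by arm, obtain $\sup_{\tau\in\T}\|\widehat\gamma(\tau)-\gamma(\tau)\|=o_p(1)$ from the law of large numbers for $\widehat\Sigma_Z$, a Glivenko--Cantelli argument, and the empirical-$L_2$ bound on $\widehat\xi_{d,\tau}-\xi_{d,\tau}$, and then finish with the Donsker property, the same covariance expansion, and the quadratic-form argument for $\Gamma_{\mathrm{adj}}\preceq\Gamma$. The only cosmetic difference is that the paper removes the $\bar Z_d$-versus-$\E Z$ centering exactly via $\E_{n,d}\widehat\xi_{d,\tau}=0$, whereas you bound it as an asymptotically negligible term.
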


The positive-semidefinite ordering in Theorem~\ref{thm:covariate-adjustment} implies that covariate adjustment leaves the population discrepancy $r$ unchanged while reducing asymptotic variation in every finite linear contrast of the process.  At any fixed expectile index, this yields a direct local-power ordering for the corresponding Wald-type test: under $N^{-1/2}$ local alternatives, adjustment preserves the local signal while reducing its asymptotic variance, thereby increasing the noncentrality parameter and hence local asymptotic power whenever the covariates are informative. For the global KS and CvM statistics considered here, however, the comparison is more subtle.  Covariance ordering implies greater concentration of centered Gaussian limits over common symmetric convex sets, including the supremum- and quadratic-norm acceptance regions underlying KS- and CvM-type statistics \citep{anderson1955integral}.  Because the adjusted and unadjusted tests are each calibrated using their own covariance-dependent null distribution, this ordering alone does not imply a uniform power ranking. Related work on CvM-type omnibus tests shows that local power depends critically on the interaction between the alternative direction and the covariance structure of the limiting process; see \citet{neuhaus1976asymptotic}. Nonetheless, covariate adjustment preserves the same deterministic signal while removing covariate-predictable sampling variation, suggesting power gains in many empirically relevant settings.  The simulation study below examines these gains.  The multiplier bootstrap and studentized refinement from Section~\ref{sec:inference} carry over directly to the adjusted process, and the parallel formulas are omitted for brevity.

% =====================================================================
\section{Monte Carlo Experiments}
\label{sec:simu}

In this section, we conduct Monte Carlo experiments to assess the finite-sample size and power of the proposed expectile multiplier tests relative to existing confidence-set randomization, recentered subsampling, and martingale-transformed permutation procedures. We further investigate the effects of pointwise studentization and covariate adjustment and compare computational costs across methods. Following the location-scale design of \citet{koenker2002inference}, also adopted in subsequent studies \citep{chernozhukov2005inference,chung2021permutation,chung2025quantile}, we generate potential outcomes according to
\begin{equation}\label{eq:simulation-dgp}
Y_i(0)=\varepsilon_i,
\qquad
Y_i(1)=1+(1+\sigma_\delta)\varepsilon_i.
\end{equation}
The parameter $\sigma_\delta$ controls the degree of heterogeneity, with $\sigma_\delta=0$ corresponding to the null and $\sigma_\delta\in\{0.2,0.5\}$ representing two levels of scale heterogeneity. We draw the innovations $\varepsilon_i$ independently from four distributions: $N(0,1)$, $t_5$, $\mathrm{LN}(0,1)$, and the product-normal distribution $Z_{1i}Z_{2i}$, where $Z_{1i}$ and $Z_{2i}$ are independent standard normal variables. The $t_5$ and lognormal distributions allow us to examine heavy tails and skewness, respectively. In particular, the product-normal distribution has a continuous CDF but a density that diverges to infinity as $x\to0$. This makes it a useful stress test for our procedure, which requires only finite second moments and CDF continuity near the relevant expectiles, without the density regularity conditions often imposed by existing methods.

To examine distributional departures beyond the scale changes in \eqref{eq:simulation-dgp}, we further consider normal-to-$t_5$ alternatives, both with and without variance matching. Draw $U_i\sim U(0,1)$ and set
\begin{equation}\label{eq:simulation-shape-dgp}
Y_i(0)=\Phi^{-1}(U_i),
\qquad
Y_i(1)=1+aF_{t_5}^{-1}(U_i),
\qquad
a\in\left\{\sqrt{\frac{3}{5}},1\right\}.
\end{equation}
The choice $a=\sqrt{3/5}$ matches the marginal variances, isolating differences in distributional shape, whereas $a=1$ allows both shape and scale to differ. In both cases, the marginal means differ by one.

Throughout the Monte Carlo experiments, we consider balanced treatment allocation ($n_1/n=1/2$) with total sample sizes $n\in\{100,200,400,800\}$. We also examine unbalanced allocation ($n_1/n=1/3$) with $n\in\{150,300,600\}$, for which the results are reported in Section \ref{app:additional-monte-carlo} of the Online Appendix. Each design uses 1,000 Monte Carlo replications, and all tests are conducted at the $5\%$ nominal level. We compare the proposed expectile multiplier tests with existing procedures using the following implementations:

\begin{itemize}[leftmargin=1.95cm,labelsep=0.4em,itemsep=0.25em,
  topsep=0.4em,parsep=0pt]

\item[\textbf{eMBT}]
The proposed expectile multiplier-bootstrap tests, comprising KS and CvM statistics and their pointwise-studentized counterparts (S--KS and S--CvM). All four variants use 81 equally spaced expectile indices over $[0.10,0.90]$ and 999 Mammen multiplier draws.

\item[\textbf{FRT-CI}]
The Fisher randomization test with a confidence interval proposed by \citet{ding2016randomization}. The procedure maximizes the randomization $p$-value over a $(1-\gamma)$ confidence interval for the unknown treatment effect and adds $\gamma$ to obtain the adjusted $p$-value. We set $\gamma=0.001$ and use 999 permutations.

\item[\textbf{RS}]
The subsampling test of \citet{chernozhukov2005inference}, based on the recentered quantile treatment-effect process. We use 250 subsamples of size $b=\lfloor 20+n^{1/4}\rfloor$ and the quantile grid $\{0.05,0.10,\ldots,0.95\}$, with pointwise standard-error weighting and the authors' finite-sample critical-value adjustment.

\item[\textbf{D-mtPT}]
The CDF-based martingale-transformed permutation test of \citet{chung2021permutation}, implemented using \texttt{PT.Khmaladze.fit} with the default \texttt{akj.control} settings and 999 permutations.

\item[\textbf{Q-mtPT}]
The quantile-based martingale-transformed permutation test of \citet{chung2025quantile}, implemented using \texttt{PTQTE.Khmaladze.fit} with 999 permutations over the quantile grid $\{0.10,0.15,\ldots,0.90\}$.\footnote{Both procedures are implemented using \texttt{RATest} version 0.1.12.}
\end{itemize}

Table~\ref{tab:balanced-size} reports empirical rejection frequencies under the null. The four eMBT variants exhibit stable size control across all four innovation distributions, including the product-normal design with an unbounded density. Although some mild overrejection occurs at $n=100$, particularly under lognormality, all eMBT rejection frequencies lie between 0.038 and 0.065 for $n\ge 200$, with comparable performance for the centered and studentized tests. This stability is particularly encouraging given that the multiplier procedure requires neither density nor score estimation. Among the comparison procedures, FRT-CI tends to be conservative, consistent with its confidence-set construction, while RS exhibits more variation across distributions. D-mtPT and Q-mtPT are reasonably well sized under normal and $t_5$ innovations, with somewhat elevated rejection frequencies under product-normality and more pronounced overrejection under lognormality, especially in smaller samples.

Table~\ref{tab:balanced-power} reports empirical rejection frequencies under the alternatives. Overall, the eMBT tests compare favorably with existing procedures, frequently achieving the highest empirical power. For the scale alternatives in Panels A and B, the proposed tests generally outperform the comparison procedures under normal and $t_5$ innovations, particularly at moderate sample sizes. Under lognormality, eMBT remains competitive, especially with studentization, although D-mtPT and Q-mtPT often exhibit higher rejection frequencies. These comparisons should be interpreted in light of their substantial size distortions under the null (Table~\ref{tab:balanced-size}). The advantage of eMBT is particularly pronounced under product-normal innovations, where FRT-CI exhibits little power against the scale alternatives and D-mtPT has relatively low rejection frequencies, while the proposed tests consistently perform best. Panels C and D extend the comparison to normal-to-$t_5$ alternatives with and without variance matching. Q-mtPT achieves the highest empirical power in the variance-matched design, whereas eMBT is the most powerful across all sample sizes when the alternative also involves a scale difference. These results suggest that expectile-based tests are sensitive to both scale and shape departures, with particularly strong power against scale changes, while quantile-based procedures may be more sensitive to certain variance-preserving shape alternatives.

\begin{table}[htbp]
\centering
\begin{threeparttable}
\caption{Empirical size under balanced assignment}
\label{tab:balanced-size}
\small
\setlength{\tabcolsep}{3.5pt}
\renewcommand{\arraystretch}{1.04}
\begin{tabular*}{\textwidth}{@{\extracolsep{\fill}}llrrrrrrrr@{}}
\toprule
&&\multicolumn{4}{c}{eMBT}&
\multicolumn{4}{c}{Comparison tests}\\
\cmidrule(lr){3-6}\cmidrule(lr){7-10}
Innovation & $n$ & KS & CvM & S--KS & S--CvM &
FRT-CI & RS & D-mtPT & Q-mtPT\\
\midrule
\multirow{4}{*}{Normal}
&100&0.069&0.068&0.068&0.063&0.015&0.056&0.055&0.042\\
&200&0.048&0.051&0.053&0.051&0.035&0.075&0.039&0.044\\
&400&0.045&0.052&0.052&0.052&0.030&0.088&0.051&0.049\\
&800&0.038&0.043&0.044&0.044&0.040&0.074&0.045&0.041\\
\addlinespace[2pt]
\multirow{4}{*}{$t_5$}
&100&0.048&0.053&0.063&0.056&0.015&0.045&0.043&0.045\\
&200&0.043&0.057&0.058&0.061&0.029&0.041&0.036&0.055\\
&400&0.047&0.055&0.051&0.056&0.026&0.047&0.048&0.039\\
&800&0.045&0.052&0.042&0.050&0.025&0.031&0.050&0.045\\
\addlinespace[2pt]
\multirow{4}{*}{Lognormal}
&100&0.088&0.066&0.088&0.064&0.034&0.027&0.191&0.200\\
&200&0.054&0.053&0.057&0.045&0.038&0.023&0.161&0.136\\
&400&0.061&0.048&0.051&0.047&0.041&0.029&0.122&0.107\\
&800&0.047&0.047&0.050&0.049&0.044&0.039&0.094&0.063\\
\addlinespace[2pt]
\multirow{4}{*}{Product-normal}
&100&0.055&0.052&0.073&0.048&0.032&0.046&0.075&0.064\\
&200&0.059&0.053&0.065&0.053&0.030&0.036&0.073&0.073\\
&400&0.059&0.054&0.057&0.057&0.049&0.039&0.072&0.069\\
&800&0.046&0.048&0.053&0.051&0.048&0.031&0.063&0.073\\
\bottomrule
\end{tabular*}
\begin{tablenotes}[flushleft]
\footnotesize
\item Notes: Entries report rejection frequencies at the $5\%$
nominal level based on 1,000 Monte Carlo replications of
\eqref{eq:simulation-dgp} with $\sigma_\delta=0$ and $n_1=n_0=n/2$.
\end{tablenotes}
\end{threeparttable}
\end{table}

\begin{table}[htbp]
\centering
\begin{threeparttable}
\caption{Empirical power under balanced assignment}
\label{tab:balanced-power}
\small
\setlength{\tabcolsep}{3.5pt}
\renewcommand{\arraystretch}{0.9}
\begin{tabular*}{\textwidth}{@{\extracolsep{\fill}}llrrrrrrrr@{}}
\toprule
&&\multicolumn{4}{c}{eMBT}&
\multicolumn{4}{c}{Comparison tests}\\
\cmidrule(lr){3-6}\cmidrule(lr){7-10}
Innovation & $n$ & KS & CvM & S--KS & S--CvM &
FRT-CI & RS & D-mtPT & Q-mtPT\\
\midrule
\multicolumn{10}{l}{\textit{Panel A: $\sigma_\delta=0.2$}}\\[2pt]
\multirow{4}{*}{Normal}
&100&0.220&0.239&0.235&0.232&0.049&0.144&0.082&0.125\\
&200&0.359&0.401&0.372&0.393&0.112&0.295&0.159&0.214\\
&400&0.649&0.692&0.662&0.678&0.226&0.511&0.334&0.400\\
&800&0.932&0.957&0.944&0.953&0.535&0.803&0.627&0.703\\
\addlinespace[2pt]
\multirow{4}{*}{$t_5$}
&100&0.130&0.144&0.158&0.158&0.028&0.084&0.085&0.094\\
&200&0.201&0.261&0.264&0.284&0.070&0.147&0.133&0.195\\
&400&0.364&0.462&0.467&0.494&0.133&0.231&0.217&0.358\\
&800&0.616&0.751&0.756&0.794&0.317&0.465&0.445&0.603\\
\addlinespace[2pt]
\multirow{4}{*}{Lognormal}
&100&0.099&0.083&0.103&0.085&0.054&0.050&0.294&0.270\\
&200&0.109&0.113&0.141&0.136&0.118&0.067&0.387&0.349\\
&400&0.128&0.136&0.178&0.161&0.180&0.092&0.452&0.360\\
&800&0.144&0.195&0.281&0.253&0.326&0.185&0.551&0.494\\
\addlinespace[2pt]
\multirow{4}{*}{Product-normal}
&100&0.082&0.096&0.126&0.099&0.040&0.063&0.082&0.091\\
&200&0.134&0.175&0.174&0.179&0.032&0.079&0.070&0.120\\
&400&0.224&0.272&0.275&0.297&0.032&0.120&0.061&0.198\\
&800&0.402&0.510&0.501&0.547&0.038&0.240&0.097&0.311\\
\midrule
\multicolumn{10}{l}{\textit{Panel B: $\sigma_\delta=0.5$}}\\[2pt]
\multirow{4}{*}{Normal}
&100&0.725&0.791&0.756&0.787&0.219&0.535&0.202&0.470\\
&200&0.959&0.983&0.968&0.977&0.577&0.864&0.495&0.801\\
&400&1.000&1.000&0.999&1.000&0.935&0.996&0.815&0.980\\
&800&1.000&1.000&1.000&1.000&1.000&1.000&0.995&1.000\\
\addlinespace[2pt]
\multirow{4}{*}{$t_5$}
&100&0.425&0.544&0.575&0.587&0.128&0.315&0.202&0.375\\
&200&0.722&0.843&0.856&0.879&0.350&0.552&0.410&0.688\\
&400&0.947&0.988&0.989&0.993&0.739&0.882&0.734&0.953\\
&800&0.998&1.000&1.000&1.000&0.985&0.998&0.975&1.000\\
\addlinespace[2pt]
\multirow{4}{*}{Lognormal}
&100&0.178&0.195&0.244&0.230&0.173&0.144&0.560&0.568\\
&200&0.253&0.307&0.407&0.373&0.390&0.246&0.698&0.712\\
&400&0.348&0.448&0.654&0.568&0.681&0.500&0.881&0.899\\
&800&0.575&0.720&0.923&0.856&0.951&0.880&0.991&0.990\\
\addlinespace[2pt]
\multirow{4}{*}{Product-normal}
&100&0.285&0.334&0.386&0.362&0.032&0.197&0.127&0.244\\
&200&0.501&0.618&0.628&0.651&0.037&0.283&0.122&0.425\\
&400&0.783&0.880&0.875&0.904&0.059&0.529&0.158&0.636\\
&800&0.972&0.991&0.992&0.996&0.057&0.851&0.256&0.904\\
\midrule
\multicolumn{10}{l}{\textit{Panel C: variance-matched normal-to-$t_5$ alternative}}\\[2pt]
\multirow{4}{*}{Normal/$t_5$}
&100&0.075&0.091&0.102&0.096&0.038&0.077&0.020&0.107\\
&200&0.072&0.092&0.109&0.108&0.073&0.138&0.030&0.139\\
&400&0.082&0.128&0.143&0.155&0.183&0.231&0.064&0.290\\
&800&0.089&0.174&0.239&0.250&0.370&0.408&0.127&0.490\\
\midrule
\multicolumn{10}{l}{\textit{Panel D: unstandardized normal-to-$t_5$ alternative}}\\[2pt]
\multirow{4}{*}{Normal/$t_5$}
&100&0.166&0.168&0.174&0.157&0.027&0.094&0.056&0.079\\
&200&0.313&0.333&0.307&0.314&0.039&0.152&0.087&0.095\\
&400&0.609&0.636&0.581&0.568&0.065&0.222&0.128&0.150\\
&800&0.904&0.897&0.877&0.855&0.096&0.374&0.257&0.271\\
\bottomrule
\end{tabular*}
\begin{tablenotes}[flushleft]
\footnotesize
\item Notes: Entries report rejection frequencies at the $5\%$
nominal level based on 1,000 Monte Carlo replications under
balanced assignment, using \eqref{eq:simulation-dgp} (Panels A--B)
and \eqref{eq:simulation-shape-dgp} (Panels C--D).
\end{tablenotes}
\end{threeparttable}
\end{table}

Table~\ref{tab:balanced-power} also shows how pointwise studentization affects power. While its impact is modest under normal innovations, studentization often yields substantial gains under $t_5$, product-normal, and lognormal innovations, as well as under the variance-matched normal-to-$t_5$ alternative. Although some power losses occur, particularly in Panel D, they are generally limited relative to the gains observed elsewhere, and the overall results favor studentization. To further illustrate these differences, Figure~\ref{fig:studentization-diagnostic} plots the pointwise standard deviation $v_A(\tau)$ of the regular process $A_n(\tau)$ and the standardized scale signal $s(\tau)=m(\tau)/v_A(\tau)$, where $m(\tau)=\E|\varepsilon-e_\varepsilon(\tau)|$, each normalized by its value at $\tau=1/2$. Under normality, both profiles are relatively flat. Under $t_5$ and product-normal innovations, pointwise variability increases toward the tails while the standardized signal weakens. Under lognormality, the profiles exhibit a pronounced asymmetry: the pointwise standard deviation rises sharply toward the upper expectiles, whereas the standardized scale signal declines over the same range. These patterns indicate that coordinates with greater sampling variability need not carry correspondingly stronger standardized signals. Thus, by placing expectile coordinates on a common noise scale, studentization reduces the relative influence of noisy regions and can enhance the power of global tests against scale departures.

\begin{figure}[t]
\centering
\includegraphics[width=0.9\textwidth]{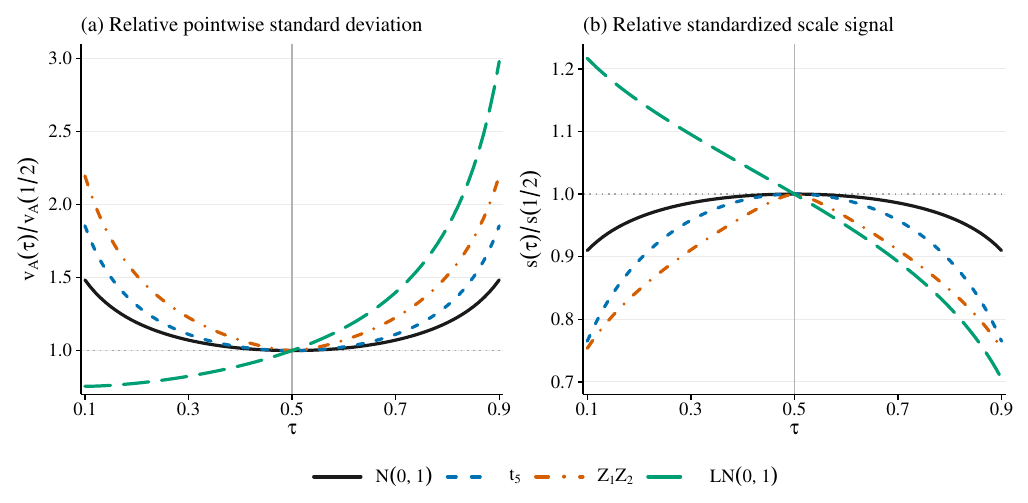}
\caption{Population diagnostics for pointwise studentization.
Panel (a) shows the pointwise standard deviation $v_A(\tau)$
of the regular expectile process, and Panel (b) shows the
standardized scale signal $s(\tau)=m(\tau)/v_A(\tau)$.
Both profiles are normalized by their values at $\tau=1/2$.}
\label{fig:studentization-diagnostic}
\end{figure}

We next examine the finite-sample benefits of covariate adjustment using a design with a binary pretreatment covariate. Specifically, let $X_i\sim\operatorname{Bernoulli}(1/2)$ and $\varepsilon_i\sim N(0,1)$ be independent, and set
\begin{equation}\label{eq:simulation-covariate-dgp}
\sigma(X_i)=
\begin{cases}
1/\sqrt{5},&X_i=0,\\
3/\sqrt{5},&X_i=1,
\end{cases}
\qquad
Y_i(0)=\sigma(X_i)\varepsilon_i,
\qquad
Y_i(1)=1+(1+\gamma)\sigma(X_i)\varepsilon_i.
\end{equation}
The conditional scale is three times larger for $X_i=1$ than for $X_i=0$, while $\E\{\sigma^2(X_i)\}=1$. When $\gamma=0$, the treatment effect is constant at one, so the null holds. Positive values of $\gamma$ introduce a treatment-induced scale change while preserving the average treatment effect of one. Using $z(X)=X$ as the adjustment basis, we compare each eMBT statistic with its covariate-adjusted counterpart under the same simulation settings.

\begin{table}[htbp]
\centering
\begin{threeparttable}
\caption{Comparison of eMBT tests with and without covariate adjustment}
\label{tab:covariate-adjustment}
\small
\renewcommand{\arraystretch}{1.0}
\begin{tabular*}{\textwidth}{@{\extracolsep{\fill}}lrrrrrrrr@{}}
\toprule
&\multicolumn{4}{c}{Unadjusted eMBT}&
 \multicolumn{4}{c}{Covariate-adjusted eMBT}\\
\cmidrule(lr){2-5}\cmidrule(lr){6-9}
$n$ & KS & CvM & S--KS & S--CvM & KS & CvM & S--KS & S--CvM\\
\midrule
\multicolumn{9}{l}{\textit{Panel A: $\gamma=0$ (size)}}\\[2pt]
100&0.071&0.068&0.091&0.068&0.066&0.068&0.086&0.072\\
200&0.050&0.051&0.068&0.049&0.053&0.052&0.065&0.054\\
400&0.056&0.053&0.056&0.048&0.053&0.057&0.060&0.061\\
800&0.047&0.045&0.043&0.042&0.057&0.058&0.050&0.055\\
\midrule
\multicolumn{9}{l}{\textit{Panel B: $\gamma=0.2$}}\\[2pt]
100&0.153&0.163&0.180&0.164&0.173&0.194&0.214&0.202\\
200&0.189&0.231&0.227&0.243&0.219&0.293&0.293&0.315\\
400&0.337&0.405&0.397&0.429&0.380&0.488&0.487&0.534\\
800&0.603&0.687&0.671&0.711&0.689&0.784&0.799&0.834\\
\midrule
\multicolumn{9}{l}{\textit{Panel C: $\gamma=0.5$}}\\[2pt]
100&0.398&0.488&0.496&0.509&0.447&0.548&0.563&0.613\\
200&0.679&0.778&0.771&0.808&0.760&0.875&0.878&0.907\\
400&0.945&0.972&0.974&0.986&0.966&0.995&0.994&0.996\\
800&0.999&1.000&1.000&1.000&1.000&1.000&1.000&1.000\\
\bottomrule
\end{tabular*}
\begin{tablenotes}[flushleft]
\footnotesize
\item Notes: Entries are rejection frequencies at the $5\%$ nominal level
based on 1,000 Monte Carlo replications of
\eqref{eq:simulation-covariate-dgp}, with $n_1=n_0=n/2$.   The
adjustment basis is $z(X)=X$.  S denotes pointwise studentization.  
\end{tablenotes}
\end{threeparttable}
\end{table}

The results are summarized in Table~\ref{tab:covariate-adjustment}. Covariate adjustment has little effect on empirical size but yields clear improvements in power under both alternatives. These gains are evident across all four statistics, with and without pointwise studentization. For example, at $n=800$ and $\gamma=0.2$, the empirical power of CvM increases from 0.687 to 0.784, while that of S--CvM rises from 0.711 to 0.834. Recall that, in this design, the baseline covariate affects the conditional scales of the potential outcomes but not their conditional means. The power gains thus reflect its predictive content for the expectile influence functions rather than the outcome means, consistent with the variance reduction established in Theorem~\ref{thm:covariate-adjustment}.

\begin{table}[htbp]
\centering
\begin{threeparttable}
\caption{Mean single-core computation time (seconds)}
\label{tab:computation-time}
\small
\setlength{\tabcolsep}{9pt}
\renewcommand{\arraystretch}{1.05}
\begin{tabular}{lrrrr}
\toprule
Method & $n=100$ & $n=200$ & $n=400$ & $n=800$ \\
\midrule
eMBT            & 0.010 & 0.017 & 0.031 & 0.060 \\
FRT-CI          & 12.178 & 13.873 & 17.098 & 24.206 \\
RS              & 0.229 & 0.232 & 0.228 & 0.235 \\
D-mtPT          & 0.295 & 0.377 & 0.620 & 1.356 \\
Q-mtPT          & 5.008 & 6.740 & 10.263 & 19.164 \\
\bottomrule
\end{tabular}
\begin{tablenotes}[flushleft]
\footnotesize
\item Notes: Entries are mean elapsed seconds over 100 independently generated
datasets for each balanced normal-null design.  The eMBT row reports a joint
call computing all four reported eMBT variants.  Computations use one core of
an Apple M4 Pro chip with 24 GB of memory.
\end{tablenotes}
\end{threeparttable}
\end{table}

Finally, we compare the computational costs of the proposed tests with those of the competing procedures. The results are reported in Table~\ref{tab:computation-time}. Computing all four eMBT variants jointly takes only 0.010 seconds at $n=100$ and 0.060 seconds at $n=800$, which reflects the computational simplicity of our procedure: it requires no nonparametric nuisance estimation, and the multiplier bootstrap reweights precomputed influence-function quantities without repeated re-estimation.  In contrast, the competing procedures involve more computationally intensive steps, including subsampling, repeated permutation calculations, and nonparametric nuisance estimation. At $n=800$, their computation times range from 0.235 seconds for RS to 24.206 seconds for FRT-CI. Taken together, the simulations show that eMBT achieves generally reliable size control and competitive power across distributions with substantially different tail behavior, skewness, and density characteristics. Covariate adjustment can further enhance power by exploiting baseline information without altering the population discrepancy being tested. Together with its tuning-free construction and computationally efficient multiplier bootstrap, these properties make eMBT a practical and flexible tool for empirical studies of treatment-effect heterogeneity.

\FloatBarrier

% =====================================================================
\section{Empirical Application}
\label{sec:empirical}

In this section, we revisit the large-scale randomized field experiment of \citet{ferraro2013using},\footnote{The data and original replication files are publicly available through Harvard Dataverse at \url{https://doi.org/10.7910/DVN1/22633}.} conducted in partnership with the Cobb County Water System in metropolitan Atlanta to examine how nonpecuniary incentives affect residential water consumption. More than 100,000 households were randomly assigned within meter-reading routes to a control group or one of three interventions. The first, \textit{Technical Advice}, provided practical water-saving tips. The second, \textit{Weak Social Norm}, supplemented these tips with an appeal to households' responsibility to conserve water. The third, \textit{Strong Social Norm}, further provided personalized information on each household's previous water consumption relative to the county median and its percentile rank among county residents.

The experiment has since provided evidence on both the effectiveness of these interventions and heterogeneity in household responses. \citet{ferraro2013using} find that technical advice alone has little impact, whereas prosocial appeals and personalized social comparisons produce progressively larger average water savings, with the latter particularly effective among households with higher baseline consumption. Building on these findings, \citet{ferraro2013heterogeneous} examine subgroup and quantile treatment effects, observing pronounced variation under the strong social-norm intervention but weaker evidence for the other messages. More recently, \citet{oka2026regression} show that the strong intervention shifts probability mass from higher to more moderate levels of water consumption. Despite these findings, a key question about treatment-effect heterogeneity remains formally untested: that is, whether each intervention changes the marginal distribution of water consumption only through an unknown common shift or also alters its shape.

To investigate this question, we apply the proposed eMBT tests to each intervention relative to the control group. Following \citet{ferraro2013using}, the outcome is total household water consumption during the summer of 2007, measured in thousands of gallons and constructed from July--October billing records.\footnote{We follow the sample exclusions in \citet{ferraro2013using} for discontinued water service, unusable billing records, and catastrophic leaks. The resulting sample reproduces the group sizes and mean outcomes reported in their Table~1. Due to billing lags, July--October records correspond to June--September consumption.} The resulting sample contains 71,643 control households, 11,675 households assigned to Technical Advice, 11,675 assigned to Weak Social Norm, and 11,676 assigned to Strong Social Norm. For each comparison, we implement the KS and CvM tests and their pointwise-studentized versions, both with and without covariate adjustment. For adjustment, we use the two pretreatment water-consumption measures included in Model A of \citet{ferraro2013using}, covering the previous watering season and spring 2007. All $p$-values are computed using 9,999 Mammen multiplier bootstrap draws.

\begin{table}[t]
\centering
\begin{threeparttable}
\caption{Tests for treatment-effect heterogeneity in household water consumption}
\label{tab:water-application}
\small
\setlength{\tabcolsep}{4pt}
\begin{tabular*}{\textwidth}{@{\extracolsep{\fill}}lcccccccc@{}}
\toprule
& \multicolumn{4}{c}{Unadjusted} & \multicolumn{4}{c}{Covariate adjusted} \\
\cmidrule(lr){2-5}\cmidrule(lr){6-9}
Intervention & KS & CvM & S--KS & S--CvM
             & KS & CvM & S--KS & S--CvM \\
\midrule
Technical Advice
  & 0.3442 & 0.4216 & 0.4882 & 0.6040
  & 0.4147 & 0.5060 & 0.6174 & 0.7705 \\
Weak Social Norm
  & 0.4853 & 0.2361 & 0.0630 & 0.1084
  & 0.4986 & 0.1906 & \textbf{0.0131} & \textbf{0.0453} \\
Strong Social Norm
  & \textbf{0.0002} & \textbf{0.0001} & \textbf{0.0001} & \textbf{0.0001}
  & \textbf{0.0001} & \textbf{0.0001} & \textbf{0.0001} & \textbf{0.0001} \\
\bottomrule
\end{tabular*}
\begin{tablenotes}[flushleft]
\footnotesize
\item Notes:  Entries are $p$-values based on 9,999 Mammen multiplier bootstrap draws. S denotes pointwise studentization; covariate adjustment uses the two pretreatment water-consumption measures described in the text.
Bold entries indicate $p\leq 0.05$. 
\end{tablenotes}
\end{threeparttable}
\end{table}

Table~\ref{tab:water-application} reports the results of testing for treatment-effect heterogeneity under each intervention. For Technical Advice, none of the tests provides significant evidence of heterogeneity. By contrast, all eight tests reject for Strong Social Norm, with $p$-values no greater than 0.0002, providing strong evidence of heterogeneous responses to personalized social comparisons. The evidence for Weak Social Norm is less uniform. The unadjusted KS test yields a $p$-value of 0.4853, which falls to 0.0630 with pointwise studentization and further to 0.0131 when covariate adjustment is also applied. The covariate-adjusted S--CvM test also rejects at the 5\% level, whereas neither unstudentized test rejects, even after adjustment. We view these findings as providing suggestive evidence of treatment-effect heterogeneity under Weak Social Norm, although less robust than that for Strong Social Norm. As shown in Section~\ref{sec:simu}, studentization and covariate adjustment often improve power by accounting for pointwise variability and exploiting baseline information, respectively, which may help explain the different rejection outcomes. Notably, in our analysis, covariate adjustment yields median pointwise variance reductions of approximately 43--47\% across the three comparisons. Similar precision gains from regression adjustment are reported by \citet{oka2026regression} in their analysis of the same experiment.

To better understand these differences, Figure~\ref{fig:water-application} plots the mean-anchored ETE curves for the three interventions, with and without covariate adjustment. For Strong Social Norm, the unadjusted curve declines sharply, from approximately 1.0 thousand gallons at $\tau=0.1$ to $-2.1$ at $\tau=0.9$. Since the curves measure deviations from the average treatment effect, this decline indicates smaller water-use reductions at lower expectiles and larger reductions at higher ones. Equivalently, after aligning the two distributions at their means, lower expectiles move upward and upper expectiles move downward, with a larger relative movement at the upper end. The resulting $0.1$--$0.9$ inter-expectile spread is approximately 3.1 thousand gallons narrower under treatment. Weak Social Norm exhibits a similar but less pronounced pattern, whereas Technical Advice shows no comparable contraction. These patterns remain largely unchanged after covariate adjustment. This evidence is consistent with \citet[Table~3]{ferraro2013using}, who report substantially larger water savings among households with higher baseline consumption (2.35 versus 0.57 thousand gallons), and complements the distributional changes documented by \citet{oka2026regression}. One economic interpretation, consistent with the mechanism evidence in \citet{ferraro2013heterogeneous}, is that personalized social comparisons make conservation norms more salient to high-use households, which may also have greater scope to reduce discretionary consumption, particularly outdoor water use. The contrast across interventions suggests that personalized social comparisons can both lower average water demand and narrow the gap between higher and lower consumption levels.

\begin{figure}[t]
\centering
\includegraphics[width=0.9\textwidth]{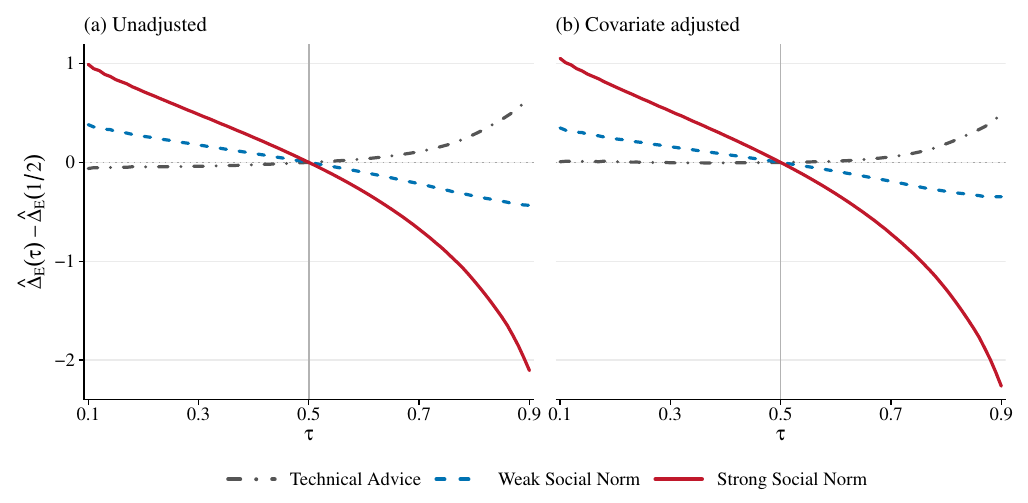}
\caption{Mean-anchored expectile treatment-effect curves. Each curve plots $\widehat\Delta_E(\tau)-\widehat\Delta_E(1/2)$, measured in thousands of gallons, for an intervention relative
to the control group. The left and right panels show unadjusted and covariate-adjusted estimates, respectively.}
\label{fig:water-application}
\end{figure}

\FloatBarrier

\section{Conclusion}
\label{sec:con}

This paper develops tuning-free tests for treatment-effect heterogeneity via the sample expectile process. By centering the difference between the treatment and control expectile curves at the mean, we test whether treatment changes the marginal outcome distribution beyond a common shift. We establish the asymptotic properties of the resulting process, accounting for estimation of the unknown common effect, and construct a fast multiplier bootstrap that requires no density or score estimation, bandwidth selection, or repeated refitting. We also develop pointwise-studentized tests and covariate adjustment that reduces asymptotic variation without altering the population discrepancy. Simulations show generally reliable size control, competitive power, and substantial computational savings. In a large-scale water-conservation experiment, the tests provide strong evidence of heterogeneous responses to personalized social comparisons, with additional evidence for the weaker social-norm intervention emerging after studentization and covariate adjustment.

Several avenues for further research are worth exploring. Extending the proposed tests to observational studies would broaden their applicability beyond randomized experiments, while adapting the framework to panel data and difference-in-differences settings would allow researchers to study distributional policy effects over time. Another promising direction is to develop expectile-based tests for stochastic dominance and related partial orderings of marginal potential-outcome distributions, moving beyond the detection of heterogeneity to assessing the direction of distributional changes. We leave these interesting extensions for future research.

\begingroup \setstretch{1}
\setlength{\bibsep}{0pt}
\putbib
\endgroup
\end{bibunit}

\clearpage
\begin{bibunit}
\appendix
\setstretch{1.2}
\renewcommand{\thesection}{\Alph{section}}
\renewcommand{\thesubsection}{\thesection.\arabic{subsection}}
\counterwithin{equation}{section}
\counterwithin{figure}{section}
\counterwithin{table}{section}

\setcounter{page}{1}
\pagenumbering{arabic}

\begin{center}
{\LARGE Online Appendix for}\\[1ex]
{\Large ``Tuning-Free Tests for Treatment Effect Heterogeneity\\
via Expectile Processes''}\\[3ex]
Haokun Lu \quad Xiaojun Song\\[1ex]
\end{center}

\vspace{1em}

This Online Appendix is organized as follows. Section \ref{app:proofs} provides the proofs of the theoretical results in the main text, while Section \ref{app:additional-monte-carlo} presents additional Monte Carlo results.

\section{Proofs}
\label{app:proofs}
% =====================================================================
\subsection{Proofs for Section~\ref{sec:framework}}
\label{app:framework}

\paragraph{Proof of Proposition~\ref{prop:full-curve}}
\begin{proof}
Suppose first that $H_0$ holds, so that $Y(1)\stackrel{d}{=}Y(0)+\delta$ for some $\delta\in\R$. Translation equivariance of expectiles then gives $e_1(\tau)=e_0(\tau)+\delta$ for every $\tau\in(0,1)$, proving the forward implication.

For the converse, we first show that the full expectile curve identifies the underlying distribution.  Under $\E|Y|<\infty$, $e(\tau)$ is the unique solution of
\[
\E\psi_\tau\{Y,e(\tau)\}=0,
\qquad
\psi_\tau(y,e)=(y-e)+(2\tau-1)|y-e|.
\]
Let $\mu=\E Y=e(1/2)$ and $M_F(x)=\E_F|Y-x|$.  Then
\begin{equation*}
0=\E\psi_\tau\{Y,e(\tau)\} =\mu-e(\tau)+(2\tau-1)M_F\{e(\tau)\},
\end{equation*}
and hence
\begin{equation}\label{eq:expectile-mad-identity}
e(\tau)-\mu
=
(2\tau-1)M_F\{e(\tau)\}.
\end{equation}
Thus the expectile curve determines $M_F$ on its range.

Suppose that $F$ is nondegenerate and write $M(\tau,e)=\E\psi_\tau(Y,e)$.  For each fixed $(y,\tau)$, the map $e\mapsto\psi_\tau(y,e)$ is continuous and piecewise linear, with slopes $-2\tau$ and $-2(1-\tau)$, both strictly negative. Hence, for $e_2>e_1$,
\begin{equation}
\label{eq:app-strong-monotonicity-original}
    -2\max\{\tau,1-\tau\}(e_2-e_1)
\le
\psi_\tau(y,e_2)-\psi_\tau(y,e_1)
\le
-2\min\{\tau,1-\tau\}(e_2-e_1)<0,
\end{equation}
so $e\mapsto M(\tau,e)$ is continuous and strictly decreasing.  If $\tau_2>\tau_1$, then
\begin{align*}
M\{\tau_2,e(\tau_1)\}
&=
M\{\tau_1,e(\tau_1)\}
+
2(\tau_2-\tau_1)M_F\{e(\tau_1)\} \\
&=
2(\tau_2-\tau_1)M_F\{e(\tau_1)\}
>0
=
M\{\tau_2,e(\tau_2)\}.
\end{align*}
Hence
\[
e(\tau_2)>e(\tau_1).
\]
Finally, joint continuity of $M(\tau,e)$ and uniqueness of its zero imply that $\tau\mapsto e(\tau)$ is continuous.

Let $a=\operatorname*{ess\,inf} Y$, $b=\operatorname*{ess\,sup}Y$, where either endpoint may be infinite.  The score equation may equivalently be written as
\[
\tau\,\E\{Y-e(\tau)\}_+
=
(1-\tau)\,\E\{e(\tau)-Y\}_+.
\]
This balance relation implies $e(\tau)\downarrow a$ as $\tau\downarrow0$ and $e(\tau)\uparrow b$ as $\tau\uparrow1$.  Consequently, the continuous and strictly increasing map $\tau\mapsto e(\tau)$ ranges over $(a,b)$.  Hence, for every $x\in(a,b)$, there is a unique $\tau(x)$ such that $e\{\tau(x)\}=x$, and \eqref{eq:expectile-mad-identity} gives
\begin{equation}\label{eq:expectile-recovers-mad}
M_F(x)
=
\frac{x-\mu}{2\tau(x)-1},
\qquad x\ne\mu.
\end{equation}
Since $M_F$ is continuous---indeed, $|M_F(x)-M_F(z)|\le |x-z|$---its value at $x=\mu$ and at any finite endpoint is determined by continuity.

It remains to recover $F$ from $M_F$.  The function $M_F$ is convex, and its one-sided derivatives give
\[
M'_{F,+}(x)=2F(x)-1,
\qquad
M'_{F,-}(x)=2F(x-)-1.
\]
Thus $M_F$ determines both $F(x)$ and $F(x-)$ at every $x$, and hence identifies the entire distribution, including any atoms.  If $F$ is degenerate, say the expectile curve is identically equal to $c$, then $\mu=c$ and \eqref{eq:expectile-mad-identity}, evaluated at any $\tau\ne1/2$, gives $M_F(c)=0$.  Hence $Y=c$ almost surely, so the same identification conclusion holds.

Now suppose that $e_1(\tau)-e_0(\tau)=\delta$ for every $\tau\in(0,1)$. By translation equivariance, $Y(0)+\delta$ has expectile curve $e_0(\tau)+\delta=e_1(\tau)$.  Since the full expectile curve identifies the underlying distribution by the preceding argument,
\[
Y(1)\stackrel{d}{=}Y(0)+\delta,
\]
which is equivalent to $H_0$. This completes the proof.
\end{proof}

% =====================================================================
\subsection{Proofs for Section~\ref{sec:asymptotic}}
\label{app:asymptotic}
% =====================================================================

We first record two useful lemmas used throughout the proofs.  For a single distribution we suppress the arm subscript, let $P$ denote the law of $Y$, and write $\E_n$ for the empirical average.  For any probability law $Q$, let $\|f\|_{Q,2}:=(\int f^2\,dQ)^{1/2}$; for an empirical average, write $\|f\|_{\E_n,2}:=(\E_n f^2)^{1/2}$.  Let $c_{\T}=\min\{\tau_L,1-\tau_U\}>0$.  Constants below may change from line to line but do not depend on $n$ or $\tau\in\T$.

\begin{lemma}\label{lem:appendix-regularity}
Under Assumption~\ref{ass:regularity}, the following statements hold for each arm.
\begin{enumerate}[label=(\roman*)]
\item For every $y$, $\tau\in\T$, and $e_2>e_1$,
\begin{equation}\label{eq:app-strong-monotonicity}
\psi_\tau(y,e_2)-\psi_\tau(y,e_1)
\le -2c_{\T}(e_2-e_1).
\end{equation}

\item The map $\tau\mapsto e(\tau)$ is Lipschitz on $\T$, while
$\tau\mapsto D(\tau)$ is continuous and bounded away from zero.

\item The classes
\[
\Phi=\{\phi_\tau:\tau\in\T\},
\qquad
\Xi=\{\xi_\tau:\tau\in\T\},
\]
are $P$-Donsker, with envelope $C(1+|Y|)$.  Their pairwise product classes
\[
\Phi^{(2)}
=
\{\phi_\tau\phi_s:\tau,s\in\T\},
\qquad
\Xi^{(2)}
=
\{\xi_\tau\xi_s:\tau,s\in\T\},
\]
are $P$-Glivenko--Cantelli.  Moreover, the maps $\tau\mapsto\phi_\tau$ and $\tau\mapsto\xi_\tau$ are continuous under the $L_2(P)$ semimetric.
\end{enumerate}
\end{lemma}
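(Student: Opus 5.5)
For part (i), I would use the piecewise-linear form $\psi_\tau(y,e)=(y-e)+(2\tau-1)|y-e|$. For fixed $(y,\tau)$, the map $e\mapsto\psi_\tau(y,e)$ is continuous with slope $-2(1-\tau)$ on $e<y$ and slope $-2\tau$ on $e>y$. Integrating the slope over $[e_1,e_2]$ gives
\[
\psi_\tau(y,e_2)-\psi_\tau(y,e_1)\le -2\min\{\tau,1-\tau\}(e_2-e_1)\le -2c_{\T}(e_2-e_1)
\]
for $\tau\in\T$. This is the computation already recorded in \eqref{eq:app-strong-monotonicity-original}.

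For part (ii), write $M(\tau,e)=\E\psi_\tau(Y,e)$. Taking expectations in (i) shows that $M(\tau,\cdot)$ is strongly decreasing with modulus $2c_{\T}$. For $\tau,s\in\T$, the identity $M(s,e)-M(\tau,e)=2(s-\tau)M_F(e)$ and $M(s,e(s))=0$ give
\[
2c_{\T}|e(s)-e(\tau)|\le |M(s,e(\tau))-M(s,e(s))|=2|s-\tau|\,M_F\{e(\tau)\}.
\]
Since $e(\T)$ is compact and $M_F(x)\le \E|Y|+|x|$, the right side is $\le C|s-\tau|$, so $e$ is Lipschitz. Next, $D(\tau)=2[\tau+(1-2\tau)F\{e(\tau)\}]$ is a convex combination of $\tau$ and $1-\tau$, hence $D(\tau)\ge 2c_{\T}$. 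Continuity of $D$ follows from continuity of $\tau\mapsto e(\tau)$ combined with continuity of $F$ on a neighborhood of $\mathcal E(\T)$, which is Assumption~\ref{ass:regularity}.

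For part (iii), I would show that $\Phi$ is Lipschitz in the parameter $\tau$, with a square-integrable Lipschitz constant, except for the indicator. I would split $\psi_\tau(y,e(\tau))=(y-e(\tau))+(2\tau-1)|y-e(\tau)|$. Since $e$ is Lipschitz and $|\cdot|$ is 1-Lipschitz,
\[
|\psi_\tau(y,e(\tau))-\psi_s(y,e(s))|\le C(1+|y|)|\tau-s|,
\qquad |\psi_\tau(y,e(\tau))|\le C(1+|y|).
\]
Combined with Lipschitz continuity of $\tau\mapsto 1/D(\tau)$ (which holds after one shows $D$ is bounded below and continuous), this would give Lipschitz in parameter and hence Donsker via bracketing (van der Vaart--Wellner Thm.~2.7.11). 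The catch is that $D$ is only continuous, not Lipschitz, because $F$ need not be Lipschitz. This is the \textbf{main obstacle}. I would handle it by writing $\phi_\tau=\psi_\tau\{\cdot,e(\tau)\}\cdot D(\tau)^{-1}$. The first factor ranges over a Lipschitz-in-parameter class, which is Donsker with envelope $C(1+|Y|)$. The second factor is a deterministic bounded scalar function of $\tau$ that is uniformly continuous on the compact $\T$. A product of a Donsker class with envelope and a uniformly bounded set of constants is Donsker, since $\{a f: |a|\le C, f\in\mathcal F\}$ is Donsker for Donsker $\mathcal F$ with integrable envelope (the convex-hull/Lipschitz-transformation preservation results, VW Thm.~2.10.6 applied to $(a,f)\mapsto af$). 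Then $\Xi=\Phi-\{y-\mu\}$ is Donsker as a sum with a single $L_2$ function, and the envelope $C(1+|Y|)$ is preserved.

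Still within part (iii), I would verify the Glivenko--Cantelli property of the products $\Phi^{(2)}$ and $\Xi^{(2)}$ in the same way: $\psi_\tau\psi_s$ is Lipschitz in $(\tau,s)$ with constant $C(1+|y|)^2\in L_1$, so it has finite $L_1$ bracketing numbers, and multiplying by the bounded continuous scalars $1/(D(\tau)D(s))$ preserves GC (VW Thm.~2.10.5 style preservation). Finally, for $L_2$ continuity,
\[
\|\phi_\tau-\phi_s\|_{P,2}\le C|\tau-s|\,\|1+|Y|\|_{P,2}+C|D(\tau)^{-1}-D(s)^{-1}|,
\]
and both terms vanish as $s\to\tau$ by part (ii); the same bound applies to $\xi_\tau$.
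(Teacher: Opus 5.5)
Your proof is correct and follows essentially the paper's argument: the same slope bound for (i), the same strong-monotonicity comparison $2c_{\T}|e(s)-e(\tau)|\le 2|s-\tau|\,\E|Y-e(\tau)|$ for (ii), and for (iii) the same device of treating $1/D(\tau)$ as a free bounded scalar, so that only Lipschitz-in-parameter bracketing is needed. (The paper also frees $e$, embedding $\Phi$ in $\{a\psi_\tau(\cdot,e):(\tau,e,a)\}$ over a compact set, so it never uses the Lipschitz property of $e$.) Two harmless slips: the slopes in (i) are swapped (the slope is $-2\tau$ for $e<y$ and $-2(1-\tau)$ for $e>y$); and the scalar-multiplication step is better justified by noting that $\{a\psi_\tau(\cdot,e(\tau)):a\in[1/2,1/(2c_{\T})],\tau\in\T\}$ is itself Lipschitz in $(a,\tau)$ with constant $C(1+|y|)$, because the uniform Lipschitz condition of the Lipschitz-transformation theorem you cite fails when $f$ is unbounded.
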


\begin{proof}
Part~(i) is already proved in \eqref{eq:app-strong-monotonicity-original}. Let $M(\tau,e)=\E\psi_\tau(Y,e)$.  If $\tau_2>\tau_1$, strong monotonicity and $M\{\tau_j,e(\tau_j)\}=0$ give
\[
2c_{\T}|e(\tau_2)-e(\tau_1)|
\le 2|\tau_2-\tau_1|\E|Y-e(\tau_1)|.
\]
The right-hand expectation is uniformly bounded on the compact interval $\T$, proving the Lipschitz property.  Continuity of $F$ near the expectile range in Assumption \ref{ass:regularity} then implies continuity of $D(\tau)=2[\tau+(1-2\tau)F\{e(\tau)\}]$, and $D(\tau)\ge2c_{\T}$. This proves part ~(ii).

We next establish part~(iii).  Let $K\subset\R$ be a compact interval containing $\{e(\tau):\tau\in\T\}$.  Since $2c_{\T}\le D(\tau)\le2$, define the compact parameter set
\[
\Theta
=
\T\times K\times
\left[\frac12,\frac{1}{2c_{\T}}\right],
\]
and consider the class
\[
\mathcal H
=
\left\{
h_\theta(y)=a\psi_\tau(y,e):
\theta=(\tau,e,a)\in\Theta
\right\}.
\]
Because
\[
\phi_\tau(y)
=
D(\tau)^{-1}\psi_\tau\{y,e(\tau)\},
\]
the class $\Phi$ is a subclass of $\mathcal H$.

Uniformly over $(\tau,e)\in\T\times K$, the score satisfies $|\psi_\tau(y,e)| \le C(1+|y|)$ and
\[
|\psi_\tau(y,e)-\psi_s(y,e')|
\le
C(1+|y|)
\bigl(|\tau-s|+|e-e'|\bigr).
\]
Since $a$ ranges over a fixed compact interval, it follows that
\begin{equation}\label{eq:app-lipschitz-class}
|h_\theta(y)-h_{\theta'}(y)|
\le
C(1+|y|)\|\theta-\theta'\|,
\qquad
\theta,\theta'\in\Theta.
\end{equation}
Thus $\mathcal H$ is a finite-dimensional Lipschitz-parametric class with square-integrable envelope $C(1+|Y|)$.  Since $\Theta$ is compact, Theorem~2.7.17 of \citet{vdvw2023} gives the required $L_2(P)$ bracketing bound, and the corresponding bracketing entropy integral is finite. Theorem~2.5.6 of \citet{vdvw2023} therefore implies that $\mathcal H$ is $P$-Donsker.  Hence its subclass $\Phi$ is also $P$-Donsker.

The same argument applies to
\[
\mathcal K
=
\left\{
y\mapsto h_\theta(y)-(y-\mu):
\theta\in\Theta
\right\}.
\]
The fixed term $y-\mu$ cancels when differences between two elements of the class are taken, so \eqref{eq:app-lipschitz-class} remains valid, while $\mathcal K$ also has envelope $C(1+|y|)$.  Since $\Xi\subset\mathcal K$, the class $\Xi$ is therefore $P$-Donsker as well.

For the product classes, let $\theta_1,\theta_2,\theta_1',\theta_2'\in \Theta$.  The preceding envelope and Lipschitz bounds imply
$$
|h_{\theta_1}(y)h_{\theta_2}(y) -h_{\theta_1'}(y)h_{\theta_2'}(y)|\le C(1+y^2) \left( \|\theta_1-\theta_1'\| + \|\theta_2-\theta_2'\| \right).
$$
Hence the product class generated by $\mathcal H$ is again a finite-dimensional Lipschitz-parametric class, now with integrable envelope $C(1+Y^2)$.  Theorem~2.7.17 of \citet{vdvw2023}, applied with the $L_1(P)$ norm and parameter set $\Theta\times\Theta$, gives finite $L_1(P)$ bracketing numbers for every positive bracket size. Theorem~2.4.1 of \citet{vdvw2023} therefore implies that $\Phi^{(2)}$ is $P$-Glivenko--Cantelli.  The same argument, with $\mathcal K$ in place of $\mathcal H$, gives the result for $\Xi^{(2)}$.

Finally, by part~(ii), $e(\tau)$ and $D(\tau)$ are continuous in $\tau$, with $D$ bounded away from zero.  Thus $\phi_\tau(y)$ is continuous in $\tau$ for every fixed $y$.  Since $|\phi_\tau(Y)|\le C(1+|Y|)$ uniformly over $\tau\in\T$ and this envelope is square integrable, dominated convergence gives
\[
\|\phi_\tau-\phi_s\|_{P,2}\to0
\qquad\text{as }s\to\tau.
\]
Because $\xi_\tau-\xi_s=\phi_\tau-\phi_s$, the same $L_2(P)$ continuity holds for $\Xi$. This completes the proof.
\end{proof}

\paragraph{Proof of Lemma~\ref{lem:bahadur}}
\begin{proof}
Fix an arm and write $e_\tau=e(\tau)$ and $\widehat e_\tau=\widehat e(\tau)$.  Let
\[
M_n(\tau,e)=\E_n\psi_\tau(Y,e),\qquad
w_\tau(y)=|\tau-\1\{y\le e_\tau\}|,\qquad
D_n(\tau)=2\E_n w_\tau(Y).
\]

Since $\E\phi_\tau=0$ and $M_n(\tau,e_\tau)=D(\tau)n^{-1/2}\mathbb G_n\phi_\tau$, part~(iii) of Lemma~\ref{lem:appendix-regularity} and the boundedness of $D$ imply
\[
\sup_{\tau\in\T}|M_n(\tau,e_\tau)|=O_p(n^{-1/2}).
\]
Averaging \eqref{eq:app-strong-monotonicity} gives, for $e_2>e_1$, $M_n(\tau,e_2)-M_n(\tau,e_1) \le -2c_{\T}(e_2-e_1)$. Since $M_n(\tau,\widehat e_\tau)=0$, applying this inequality to $e_\tau$ and $\widehat e_\tau$ in the appropriate order yields $2c_{\T}|\widehat e_\tau-e_\tau| \le |M_n(\tau,e_\tau)|$. Thus
\begin{equation}\label{eq:app-root-rate}
\sup_{\tau\in\T}|\widehat e_\tau-e_\tau|
=O_p(n^{-1/2}).
\end{equation}

We next linearize the score around $e_\tau$.  Using
\[
\psi_\tau(y,e)
=
2\{\tau+(1-2\tau)\1(y\le e)\}(y-e),
\]
we obtain the decomposition
\begin{equation}\label{eq:app-score-expansion}
\psi_\tau(y,e)
=
\psi_\tau(y,e_\tau)
-2w_\tau(y)(e-e_\tau)
+r_\tau(y,e),
\end{equation}
where $r_\tau(y,e) = 2(1-2\tau)(y-e) \left[ \1\{y\le e\}-\1\{y\le e_\tau\} \right]$. The remainder is nonzero only when $y$ lies between $e$ and $e_\tau$. On this event, $|y-e|\le|e-e_\tau|$, and hence
\begin{equation}\label{eq:app-score-remainder-bound}
|r_\tau(y,e)|
\le
2|e-e_\tau|
\big|
\1\{y\le e\}-\1\{y\le e_\tau\}
\big|.
\end{equation}

Evaluating \eqref{eq:app-score-expansion} at $e=\widehat e_\tau$ and using $M_n(\tau,\widehat e_\tau)=0$ gives
\begin{align}
\widehat e_\tau-e_\tau
&=
\frac{M_n(\tau,e_\tau)}{D_n(\tau)}
+
\frac{\E_n r_\tau(Y,\widehat e_\tau)}{D_n(\tau)}
\notag\\
&=
\E_n\phi_\tau
+
M_n(\tau,e_\tau)
\left\{
\frac{1}{D_n(\tau)}-\frac{1}{D(\tau)}
\right\}
+
\frac{\E_n r_\tau(Y,\widehat e_\tau)}{D_n(\tau)}.
\label{eq:app-score-solve}
\end{align}
It remains to control the last two terms.

First, $w_\tau(y)\ge c_{\T}$ implies $D_n(\tau)\ge2c_{\T}$ for every $n$ and $\tau$.  Moreover,
\[
D_n(\tau)-D(\tau)
=
2(1-2\tau)\{F_n(e_\tau)-F(e_\tau)\},
\]
so the Dvoretzky--Kiefer--Wolfowitz inequality gives $\sup_{\tau\in\T}|D_n(\tau)-D(\tau)|=O_p(n^{-1/2})$. Since $D$ is also uniformly bounded away from zero, the second term in \eqref{eq:app-score-solve} satisfies
\[
\sup_{\tau\in\T}
|M_n(\tau,e_\tau)|
\left|
\frac{1}{D_n(\tau)}-\frac{1}{D(\tau)}
\right|
=
O_p(n^{-1}).
\]

We next control the crossing remainder.  Let $\mathcal N$ be a compact neighborhood of the expectile range $\{e_\tau:\tau\in\T\}$ contained in the continuity neighborhood in Assumption~\ref{ass:regularity}, and define the modulus of continuity
\[
\omega_F(\delta)
=
\sup_{\substack{x,y\in\mathcal N\\ |x-y|\le\delta}}
|F(x)-F(y)|.
\]
Uniform continuity of $F$ on $\mathcal N$ implies $\omega_F(\delta)\to0$ as $\delta\downarrow0$.  By \eqref{eq:app-root-rate}, $\widehat e_\tau\in\mathcal N$ uniformly in $\tau$ with probability tending to one.  On this event, since the two indicator functions index nested half-lines,
\[
\begin{aligned}
\sup_{\tau\in\T}
\E_n
\big|
\1\{Y\le\widehat e_\tau\}
-\1\{Y\le e_\tau\}
\big|
&=
\sup_{\tau\in\T}
|F_n(\widehat e_\tau)-F_n(e_\tau)| \\
&\le
2\|F_n-F\|_\infty
+
\omega_F\!\left(
\sup_{\tau\in\T}|\widehat e_\tau-e_\tau|
\right)
=o_p(1).
\end{aligned}
\]
Combining this with \eqref{eq:app-score-remainder-bound} and \eqref{eq:app-root-rate} yields
\[
\sup_{\tau\in\T}
\big|\E_n r_\tau(Y,\widehat e_\tau)\big|
\le
2\sup_{\tau\in\T}|\widehat e_\tau-e_\tau|
\sup_{\tau\in\T}
\E_n
\big|
\1\{Y\le\widehat e_\tau\}
-\1\{Y\le e_\tau\}
\big| =o_p(n^{-1/2}).
\]

The last two terms in \eqref{eq:app-score-solve} therefore define a remainder $R_n(\tau)$ satisfying
\[
\widehat e(\tau)-e(\tau)
=
\E_n\phi_\tau+R_n(\tau),
\qquad
\sup_{\tau\in\T}|R_n(\tau)|=o_p(n^{-1/2}),
\]
which proves the uniform Bahadur representation.

If $F$ additionally has a bounded density on a neighborhood of the expectile range, then $\omega_F(\delta)\le C\delta$ for sufficiently small $\delta$.  The empirical crossing mass above is then $O_p(n^{-1/2})$, so $\sup_{\tau\in\T}|\E_n r_\tau(Y,\widehat e_\tau)|=O_p(n^{-1})$. Together with the $O_p(n^{-1})$ denominator error, this sharpens the uniform remainder to $O_p(n^{-1})$.

Finally, at $\tau=1/2$, $\widehat e(1/2)-e(1/2)=\bar Y-\mu=\E_n\phi_{1/2}$, so $R_n(1/2)=0$.  The Donsker assertion follows from part~(iii) of Lemma~\ref{lem:appendix-regularity}. This completes the proof.
\end{proof}

\paragraph{Proof of Theorem~\ref{thm:centered-limit}}
\begin{proof}
Apply Lemma~\ref{lem:bahadur} separately to the two arms.  For $d\in\{0,1\}$,
\[
\widehat e_d(\tau)-e_d(\tau)
=
\E_{n,d}\phi_{d,\tau}
+
R_{n,d}(\tau),
\qquad
\sup_{\tau\in\T}|R_{n,d}(\tau)|
=
o_p(n_d^{-1/2}).
\]
Since $\bar Y_d-\mu_d=\E_{n,d}(Y-\mu_d)$ and $\E_d\phi_{d,\tau}=0$, it follows that, uniformly in $\tau$,
\[
\{\widehat e_d(\tau)-e_d(\tau)\}
-
(\bar Y_d-\mu_d)
=
\frac{1}{\sqrt{n_d}}\,
\mathbb G_{n,d}\xi_{d,\tau}
+
R_{n,d}(\tau).
\]
Substituting these expansions into the definition of $S_n$ gives
\[
\begin{aligned}
S_n(\tau)
&=
\sqrt N\,r(\tau)
+
\sqrt{\frac{N}{n_1}}\,
\mathbb G_{n,1}\xi_{1,\tau}
-
\sqrt{\frac{N}{n_0}}\,
\mathbb G_{n,0}\xi_{0,\tau}
+o_p(1) \\
&=
\sqrt N\,r(\tau)
+
\sqrt{1-\lambda_n}\,
\mathbb G_{n,1}\xi_{1,\tau}
-
\sqrt{\lambda_n}\,
\mathbb G_{n,0}\xi_{0,\tau}
+o_p(1),
\end{aligned}
\]
where the remainder is uniform over $\T$.  This proves \eqref{eq:S-linear}.

Now suppose that $H_0$ holds.  Then, for some $\delta$, $Y(1)\stackrel{d}{=}Y(0)+\delta$, so that $e_1(\tau)=e_0(\tau)+\delta$ and $\mu_1=\mu_0+\delta$.  Moreover,
\[
\psi_\tau(y+\delta,e_0(\tau)+\delta)
=
\psi_\tau(y,e_0(\tau)),
\qquad
D_1(\tau)=D_0(\tau),
\]
and hence $\xi_{1,\tau}(y+\delta) = \xi_{0,\tau}(y)$. Thus the two classes $\{\xi_{1,\tau}:\tau\in\T\}$ and $\{\xi_{0,\tau}:\tau\in\T\}$ have the same covariance kernel, $\Gamma(\tau,\tau') = \E_d\{\xi_{d,\tau}\xi_{d,\tau'}\}$, $d\in\{0,1\}$.

By Lemma~\ref{lem:appendix-regularity} and independence of the two samples,
\[
\bigl(
\mathbb G_{n,1}\xi_{1,\cdot},
\mathbb G_{n,0}\xi_{0,\cdot}
\bigr)
\rightsquigarrow
\bigl(
\mathbb G_{\xi,1},
\mathbb G_{\xi,0}
\bigr)
\quad\text{in }\ell^\infty(\T)^2,
\]
where $\mathbb G_{\xi,1}$ and $\mathbb G_{\xi,0}$ are independent centered Gaussian processes, each with covariance kernel $\Gamma$. Under $H_0$, $r(\tau)=0$ for all $\tau$, and $\lambda_n\to\lambda\in(0,1)$.  Therefore,
\[
S_n
\rightsquigarrow
\sqrt{1-\lambda}\,\mathbb G_{\xi,1}
-
\sqrt{\lambda}\,\mathbb G_{\xi,0}.
\]
The process on the right is centered Gaussian with covariance $(1-\lambda)\Gamma(\tau,\tau') + \lambda\Gamma(\tau,\tau') = \Gamma(\tau,\tau')$. It therefore has the same law as $\mathbb G_\xi$ in \eqref{eq:S-null-limit}, proving the asserted weak convergence.

Finally, $\xi_{d,1/2}=0$ identically for both arms, so $\mathbb G_\xi(1/2)=0$. This completes the proof.
\end{proof}

\paragraph{Proof of Corollary~\ref{cor:functionals}}
\begin{proof}
The supremum functional is Lipschitz under the uniform norm.  The quadratic functional is continuous because, for bounded $f$ and $g$,
\[
\left|\int_\T f^2-\int_\T g^2\right|
\le |\T|\,\|f-g\|_\infty(\|f\|_\infty+\|g\|_\infty).
\]
The result follows from Theorem~\ref{thm:centered-limit} and the continuous mapping theorem.
\end{proof}

\paragraph{Proof of Proposition~\ref{prop:fixed}}
\begin{proof}
The stochastic term in \eqref{eq:S-linear} is $O_p(1)$ in $\ell^\infty(\T)$.  Hence $S_n/\sqrt N\xrightarrow{p} r$ uniformly.  Applying the two continuous functionals in the preceding proof gives the stated probability limits.  Since $r$ is continuous, $r\not\equiv0$ implies both $\sup_{\tau\in\T}|r(\tau)|>0$ and $\int_\T r^2(\tau)\,d\tau>0$.
\end{proof}

The next lemma records the regularity properties needed to transfer the fixed-distribution expansion to the local path.  Let $F_s$ denote the law of $T_s(\varepsilon)$, and let $e_s(\tau)$, $D_s(\tau)$, $\phi_{s,\tau}$, and $\xi_{s,\tau}$ denote the corresponding population objects.  At $s=0$ these reduce to $e(\tau)$, $D(\tau)$, $\phi_\tau$, and $\xi_\tau$.

\begin{lemma}\label{lem:appendix-path}
Suppose the baseline law $F$ satisfies Assumption~\ref{ass:regularity}, and let $b:\R\to\R$ be globally Lipschitz with $\E b(\varepsilon)=0$.  Then the following statements hold.

\begin{enumerate}[label=(\roman*)]
\item As $s\to0$,
\begin{equation}\label{eq:app-path-derivative}
\sup_{\tau\in\T}
\left|
e_s(\tau)-e(\tau)-s h_b(\tau)
\right|
=o(|s|).
\end{equation}

\item As $s\to0$,
\begin{equation}\label{eq:app-path-l2}
\sup_{\tau\in\T}
\left\|
\xi_{s,\tau}\{T_s(\varepsilon)\}
-\xi_\tau(\varepsilon)
\right\|_{F,2}
=o(1).
\end{equation}
Moreover, for some $s_0>0$, the class
\[
\left\{
x\mapsto\xi_{s,\tau}\{T_s(x)\}:
|s|\le s_0,\ \tau\in\T
\right\}
\]
is $F$-Donsker with envelope $C(1+|x|)$.

\item For the triangular array generated by any sequence $s_n\to0$, let
$m_n\to\infty$ and suppose $X_{n,1},\ldots,X_{n,m_n}$ are i.i.d. within row with common law $F_{s_n}$.  If $\widehat e_{s_n,m_n}(\tau)$ denotes the corresponding sample expectile, then
\begin{equation}\label{eq:app-path-bahadur}
\widehat e_{s_n,m_n}(\tau)-e_{s_n}(\tau)
=
\frac{1}{m_n}\sum_{i=1}^{m_n}
\phi_{s_n,\tau}(X_{n,i})
+
R_n(\tau),
\end{equation}
where $\sup_{\tau\in\T}|R_n(\tau)| = o_p(m_n^{-1/2})$.
\end{enumerate}
\end{lemma}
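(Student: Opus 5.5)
The plan is to reduce each part to the fixed-distribution arguments already developed, using the coupling $X=T_s(\varepsilon)$, $\varepsilon\sim F$. The key device is the identity $M_s(\tau,e):=\E\psi_\tau\{T_s(\varepsilon),e\}$, which lets every expectation under $F_s$ be written as an $F$-expectation of a function that is Lipschitz in $(s,e)$ with envelope $C(1+|\varepsilon|)$. Since $b$ is globally Lipschitz, $|b(x)|\le C(1+|x|)$, so $|T_s(x)-x|\le |s|C(1+|x|)$. For $|s|\le s_0$ small, $T_s$ is strictly increasing, because $b$ is nondecreasing in the main text and $1+sb'\ge 1-s_0L>0$ in general. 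Hence $F_s(y)=F(T_s^{-1}(y))$ remains continuous near the expectile range.

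\textbf{Part (i).} First I will show $\sup_\tau|e_s(\tau)-e(\tau)|=O(|s|)$. Strong monotonicity \eqref{eq:app-strong-monotonicity} gives $2c_\T|e_s-e|\le |M_s(\tau,e(\tau))-M_0(\tau,e(\tau))|\le C|s|\,\E(1+|\varepsilon|)$. Next I apply the decomposition \eqref{eq:app-score-expansion} at the population level:
\[
0=M_s(\tau,e_s)=M_s(\tau,e)-\E[2w_\tau(T_s\varepsilon)](e_s-e)+\E r_\tau(T_s\varepsilon,e_s),
\]
with $w_\tau$ built from the threshold $e(\tau)$. For the first term, differentiating $y\mapsto\psi_\tau(y,e)$, which is Lipschitz with a.e. derivative $2|\tau-\1\{y\le e\}|$, shows that $M_s(\tau,e)=s\,2\E[|\tau-\1\{\varepsilon\le e\}|b(\varepsilon)]+o(s)$. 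The $o(s)$ is uniform because the set where $\varepsilon$ and $T_s\varepsilon$ lie on opposite sides of $e(\tau)$ has $F$-probability at most $\omega_F(C|s|)$ plus a tail term. The $r_\tau$ remainder is bounded by $2|e_s-e|\,P(\text{crossing})=O(|s|)\,o(1)$ via \eqref{eq:app-score-remainder-bound} and uniform continuity of $F$ on $\mathcal N$. The denominator $\E 2w_\tau(T_s\varepsilon)\to D(\tau)$ uniformly by the same crossing bound. Dividing then yields \eqref{eq:app-path-derivative} with $h_b$ as in \eqref{eq:hb-def}.

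\textbf{Part (ii).} Write $\xi_{s,\tau}(T_s x)=\psi_\tau\{T_sx,e_s(\tau)\}/D_s(\tau)-(T_sx-\E T_s\varepsilon)$ and note that $\E T_s\varepsilon=0$. The map $(s,\tau,e,a)\mapsto a\psi_\tau(T_sx,e)-T_sx$ is Lipschitz in all parameters with constant $C(1+|x|)$. The reason is that $\psi_\tau(\cdot,e)$ is Lipschitz in its first argument and $|\partial_sT_s x|\le C(1+|x|)$. Also, $D_s(\tau)=2[\tau+(1-2\tau)F\{T_s^{-1}e_s(\tau)\}]$ lies in $[2c_\T,2]$. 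So the class sits inside a finite-dimensional Lipschitz-parametric class over a compact parameter set, and it is $F$-Donsker exactly as in Lemma~\ref{lem:appendix-regularity}(iii). For \eqref{eq:app-path-l2}, I will show pointwise convergence and then use dominated convergence. Part (i) gives $e_s\to e$, and $D_s\to D$ uniformly by continuity of $F$ at $T_s^{-1}e_s(\tau)\to e(\tau)$, so the integrand tends to zero for each $x$ uniformly in $\tau$ (via Lipschitz bounds), dominated by $C(1+|x|)^2$.

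\textbf{Part (iii).} I will repeat the proof of Lemma~\ref{lem:bahadur} row by row, representing $X_{n,i}=T_{s_n}(\varepsilon_i)$ with $\varepsilon_i$ i.i.d. $F$. The $\sqrt{m_n}$-rate step needs $\sup_\tau|M_{n}(\tau,e_{s_n}(\tau))|=O_p(m_n^{-1/2})$. This follows from part (ii): the array $\phi_{s_n,\tau}(T_{s_n}\varepsilon_i)$ is indexed by a fixed $F$-Donsker class, so $\sup|\mathbb G_n|$ is tight uniformly in $n$. The denominator error $D_n-D_{s_n}$ is handled by DKW applied to $F_n$ of the $\varepsilon_i$ at the points $T_{s_n}^{-1}e_{s_n}(\tau)$. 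The crossing term needs $\sup_\tau|F_n(T_{s_n}^{-1}\widehat e)-F_n(T_{s_n}^{-1}e_{s_n})|=o_p(1)$. This follows from DKW together with the modulus $\omega_F$, since $T_{s_n}^{-1}$ is uniformly Lipschitz near $\mathcal N$. I expect this last uniformity to be the main obstacle, because both the threshold and the law move with $n$. Pulling everything back to the fixed law $F$ through $T_{s_n}^{-1}$ is what makes it work.
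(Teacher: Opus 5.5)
Your proposal is correct and follows essentially the same route as the paper. The shared steps are:
\begin{itemize}
\item strong monotonicity gives the $O(|s|)$ rate;
\item a linearization whose crossing remainder is controlled by uniform continuity of $F$ after pulling back through $T_s^{-1}$;
\item embedding in a Lipschitz-parametric class over $(s,\tau,e,a)$ gives the Donsker and $L_2$ claims;
\item a row-wise repetition of the proof of Lemma~\ref{lem:bahadur}, using DKW and the modulus $\omega_F$, gives part~(iii).
\end{itemize}

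The only difference is cosmetic and occurs in part~(i). You split the expansion into a first-argument perturbation at the fixed threshold $e(\tau)$, followed by a threshold perturbation via \eqref{eq:app-score-expansion} under $F_s$. This means you must also show $2\E w_\tau\{T_s(\varepsilon)\}\to D(\tau)$. The paper instead uses a single expansion with baseline weight $w_\tau(x)$ and crossing between $e(\tau)$ and $u_s(\tau)=T_s^{-1}\{e_s(\tau)\}$, which gives the denominator $D(\tau)$ exactly.
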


\begin{proof}
Let $L_b<\infty$ satisfy $|b(x)-b(y)|\le L_b|x-y|$ for all $x,y\in\R$.  Since $T_s(x)=x+s b(x)$,
\begin{equation}\label{eq:app-bilip}
(1-|s|L_b)|x-y|
\le
|T_s(x)-T_s(y)|
\le
(1+|s|L_b)|x-y|.
\end{equation}
Thus, for $|s|L_b<1$, $T_s$ is strictly increasing and invertible on $\R$, with Lipschitz inverse of constant $(1-|s|L_b)^{-1}$.

We first prove part~(i).  Define $M_s(\tau,e)=\E\psi_\tau\{T_s(\varepsilon),e\}$. For every $s$ and $\tau\in\T$, the map $e\mapsto M_s(\tau,e)$ is strictly decreasing with constant $2c_{\T}$, by the same argument as in Lemma~\ref{lem:appendix-regularity}.  Since $M_s\{\tau,e_s(\tau)\}=M_0\{\tau,e(\tau)\}=0$, we have
\[
2c_{\T}|e_s(\tau)-e(\tau)|
\le
\big|
M_s\{\tau,e(\tau)\}
-
M_s\{\tau,e_s(\tau)\}
\big|
=
\big|M_s\{\tau,e(\tau)\}-M_0\{\tau,e(\tau)\}\big|.
\]
For fixed $e$, the map $y\mapsto\psi_\tau(y,e)$ is Lipschitz with constant at most $2$, uniformly in $\tau$.  Hence
\[
\begin{aligned}
\big|
M_s\{\tau,e(\tau)\}
-
M_0\{\tau,e(\tau)\}
\big|
&\le
\E\left|
\psi_\tau\{T_s(\varepsilon),e(\tau)\}
-
\psi_\tau\{\varepsilon,e(\tau)\}
\right| \\
&\le
2\E|T_s(\varepsilon)-\varepsilon|
=
2|s|\E|b(\varepsilon)|.
\end{aligned}
\]
Since the bound does not depend on $\tau$, we obtain
\begin{equation}\label{eq:app-path-rate}
\sup_{\tau\in\T}|e_s(\tau)-e(\tau)|
=
O(|s|).
\end{equation}
To compare the perturbed and baseline score thresholds on the same distribution, define $u_s(\tau)=T_s^{-1}\{e_s(\tau)\}$. Since $T_s$ is increasing, $\1\{T_s(x)\le e_s(\tau)\}=\1\{x\le u_s(\tau)\}$. Moreover, by \eqref{eq:app-bilip},
\[
|u_s(\tau)-e(\tau)|
\le
\frac{|e_s(\tau)-T_s\{e(\tau)\}|}{1-|s|L_b}.
\]
Since $T_s\{e(\tau)\}=e(\tau)+s b\{e(\tau)\}$, \eqref{eq:app-path-rate} and compactness of the expectile range imply
\begin{equation}\label{eq:app-us-rate}
\sup_{\tau\in\T}|u_s(\tau)-e(\tau)|
=
O(|s|).
\end{equation}
Recall $w_\tau(x)=|\tau-\1\{x\le e(\tau)\}|$. A direct expansion of the score therefore gives
\begin{equation}
\psi_\tau\{T_s(x),e_s(\tau)\}
-
\psi_\tau\{x,e(\tau)\}
=
2w_\tau(x)
\left[
s b(x)-\{e_s(\tau)-e(\tau)\}
\right]
+
R_{s,\tau}(x),
\label{eq:app-path-score-expansion}
\end{equation}
where 
\[
R_{s,\tau}(x)
=
2(1-2\tau)
\{T_s(x)-e_s(\tau)\}
\left[
\1\{x\le u_s(\tau)\}
-
\1\{x\le e(\tau)\}
\right].
\]
As in the proof of Lemma~\ref{lem:bahadur}, this remainder is nonzero only when $x$ lies between $e(\tau)$ and $u_s(\tau)$.On this event,
\[
|T_s(x)-e_s(\tau)|
=
|T_s(x)-T_s\{u_s(\tau)\}|
\le
C|u_s(\tau)-e(\tau)|
\le C|s|,
\]
where the last inequality follows from \eqref{eq:app-us-rate}.  Hence
\[
|R_{s,\tau}(x)|
\le
C|s|
\big|
\1\{x\le u_s(\tau)\}
-
\1\{x\le e(\tau)\}
\big|.
\]
Since $u_s(\tau)\to e(\tau)$ uniformly and $F$ is uniformly continuous on a neighborhood of the expectile range,
\[
\sup_{\tau\in\T}
|F\{u_s(\tau)\}-F\{e(\tau)\}|
=o(1).
\]
Therefore,
\[
\sup_{\tau\in\T}
\E|R_{s,\tau}(\varepsilon)|
\le
C|s|
\sup_{\tau\in\T}
|F\{u_s(\tau)\}-F\{e(\tau)\}|
=
o(|s|).
\]
Taking expectations in \eqref{eq:app-path-score-expansion}, the left-hand side vanishes because
\[
\E\psi_\tau\{T_s(\varepsilon),e_s(\tau)\}
=
\E\psi_\tau\{\varepsilon,e(\tau)\}
=
0.
\]
Hence
\[
0
=
2s\,\E\{w_\tau(\varepsilon)b(\varepsilon)\}
-
2\E w_\tau(\varepsilon)\{e_s(\tau)-e(\tau)\}
+
\E R_{s,\tau}(\varepsilon).
\]
Since $D(\tau)=2\E w_\tau(\varepsilon)$ and $\sup_{\tau\in\T}\E|R_{s,\tau}(\varepsilon)|=o(|s|)$, it follows, uniformly in $\tau$, that
\[
D(\tau)\{e_s(\tau)-e(\tau)\}
=
2s\,\E\{w_\tau(\varepsilon)b(\varepsilon)\}
+
o(|s|).
\]
Because $D$ is uniformly bounded away from zero,
\[
e_s(\tau)-e(\tau)
=
s\,\frac{2\E\{w_\tau(\varepsilon)b(\varepsilon)\}}{D(\tau)}
+
o(|s|)
=
s h_b(\tau)+o(|s|)
\]
uniformly over $\tau\in\T$, which proves \eqref{eq:app-path-derivative}.

We next prove the $L_2$ stability in part~(ii).  Since $T_s$ is increasing and $u_s(\tau)=T_s^{-1}\{e_s(\tau)\}$,
\[
F_s\{e_s(\tau)\}
=
\Pr\{T_s(\varepsilon)\le e_s(\tau)\}
=
F\{u_s(\tau)\}.
\]
Hence
\[
\begin{aligned}
|D_s(\tau)-D(\tau)|
&=
2|1-2\tau|
\left|
F\{u_s(\tau)\}-F\{e(\tau)\}
\right| \\
&\le
2\left|
F\{u_s(\tau)\}-F\{e(\tau)\}
\right|.
\end{aligned}
\]
By \eqref{eq:app-us-rate} and uniform continuity of $F$ near the expectile range,
\begin{equation}\label{eq:app-Ds-convergence}
\sup_{\tau\in\T}|D_s(\tau)-D(\tau)|=o(1).
\end{equation}

The score expansion \eqref{eq:app-path-score-expansion}, together with \eqref{eq:app-path-rate} and the bound on $R_{s,\tau}$, also gives
\[
\sup_{\tau\in\T}
\left|
\psi_\tau\{T_s(x),e_s(\tau)\}
-
\psi_\tau\{x,e(\tau)\}
\right|
\le
2|s||b(x)|+C|s|  
\le
C|s|(1+|x|),
\]
where the last inequality uses the linear growth of the Lipschitz function $b$.

Moreover, $\E T_s(\varepsilon) =\E\varepsilon+s\E b(\varepsilon)=0$, so the mean under $F_s$ remains zero.  Therefore,
\[
\xi_{s,\tau}\{T_s(x)\}
=
\frac{\psi_\tau\{T_s(x),e_s(\tau)\}}{D_s(\tau)}
-
T_s(x),
\qquad
\xi_\tau(x)
=
\frac{\psi_\tau\{x,e(\tau)\}}{D(\tau)}
-x.
\]
Their difference can be written as
\[
\begin{aligned}
\xi_{s,\tau}\{T_s(x)\}-\xi_\tau(x)
={}&
\frac{
\psi_\tau\{T_s(x),e_s(\tau)\}
-
\psi_\tau\{x,e(\tau)\}
}{D_s(\tau)}
\\
&+
\psi_\tau\{x,e(\tau)\}
\left\{
\frac{1}{D_s(\tau)}-\frac{1}{D(\tau)}
\right\}
-sb(x).
\end{aligned}
\]
Since $D_s(\tau)$ and $D(\tau)$ are uniformly bounded away from zero, \eqref{eq:app-Ds-convergence} and the preceding bounds imply, for some $a_s\to0$,
\[
\sup_{\tau\in\T}
\left|
\xi_{s,\tau}\{T_s(x)\}-\xi_\tau(x)
\right|
\le
a_s(1+|x|).
\]
Because $\E\varepsilon^2<\infty$,
\[
\sup_{\tau\in\T}
\left\|
\xi_{s,\tau}\{T_s(\varepsilon)\}-\xi_\tau(\varepsilon)
\right\|_{F,2}
\le
a_s\|1+|\varepsilon|\|_{F,2}
=o(1),
\]
which proves \eqref{eq:app-path-l2}.

We next establish the Donsker property in part~(ii).  The class of interest is
\[
\mathcal F_\xi
=
\left\{
x\mapsto\xi_{s,\tau}\{T_s(x)\}:
|s|\le s_0,\ \tau\in\T
\right\}.
\]
Rather than analyze directly the dependence of this class on the implicitly defined quantities $e_s(\tau)$ and $D_s(\tau)$, we embed it in a larger finite-dimensional parametric class.  Since \eqref{eq:app-path-rate} hold, we may choose $s_0>0$ sufficiently small and a compact interval $\mathcal K$ containing
\[
\{e_s(\tau):|s|\le s_0,\ \tau\in\T\}.
\]
Since $2c_{\T}\le D_s(\tau)\le2$, we also have $D_s(\tau)^{-1}\in[1/2,(2c_{\T})^{-1}]$.  Consider
\[
\mathcal H
=
\left\{
x\mapsto
a\,\psi_\tau\{T_s(x),e\}
-x-sb(x):
|s|\le s_0,\ \tau\in\T,\ e\in\mathcal K,\
a\in
\left[\frac12,\frac{1}{2c_{\T}}\right]
\right\}.
\]
For the particular choices $e=e_s(\tau)$ and $a=D_s(\tau)^{-1}$,
\[
a\,\psi_\tau\{T_s(x),e\}-x-sb(x)
=
\xi_{s,\tau}\{T_s(x)\},
\]
so $\mathcal F_\xi\subset\mathcal H$.

It therefore suffices to show that $\mathcal H$ is $F$-Donsker. Writing $\theta=(s,\tau,e,a)$ and denoting the corresponding function by $h_\theta$, the Lipschitz properties of the score and of $b$ imply, for $\theta,\theta'$ in the compact parameter set,
\[
|h_\theta(x)-h_{\theta'}(x)|
\le
C(1+|x|)\|\theta-\theta'\|.
\]
Moreover, since $b$ has at most linear growth and $e$ and $a$ range over compact sets,
\[
\sup_\theta |h_\theta(x)|
\le
C(1+|x|).
\]
Thus $\mathcal H$ is a finite-dimensional Lipschitz-parametric class with square-integrable envelope $C(1+|x|)$.  The same bracketing argument as in Lemma~\ref{lem:appendix-regularity} therefore shows that $\mathcal H$ is $F$-Donsker.  Since $\mathcal F_\xi$ is a subclass of $\mathcal H$, the desired Donsker property follows.

It remains to establish the triangular-array expansion in part~(iii). Represent the $n$th row as $X_{n,i}=T_{s_n}(\varepsilon_i)$, where $\varepsilon_i\stackrel{\mathrm{i.i.d.}}{\sim}F$. By the same finite-dimensional bracketing argument as in part~(iii) of Lemma~\ref{lem:appendix-regularity}, the class
\[
\left\{
x\mapsto\phi_{s,\tau}\{T_s(x)\}:
|s|\le s_0,\ \tau\in\T
\right\}
\]
is contained in a fixed $F$-Donsker class with envelope $C(1+|x|)$.  Since $\E\phi_{s,\tau}\{T_s(\varepsilon)\}=0$ and $D_s(\tau)$ is uniformly bounded, the empirical score evaluated at $e_{s_n}(\tau)$ is therefore uniformly $O_p(m_n^{-1/2})$.

Moreover, averaging \eqref{eq:app-strong-monotonicity} over the $n$th row gives the same uniform slope bound for the empirical score. Since $\widehat e_{s_n,m_n}(\tau)$ is its zero, the argument used to prove \eqref{eq:app-root-rate} in Lemma~\ref{lem:bahadur} then gives
\[
\sup_{\tau\in\T}
|\widehat e_{s_n,m_n}(\tau)-e_{s_n}(\tau)|
=
O_p(m_n^{-1/2}).
\]

It remains to verify that the crossing argument in Lemma~\ref{lem:bahadur} remains valid along the path.  Let $F_s$ denote the CDF of $T_s(\varepsilon)$.  Since $T_s$ is increasing, $F_s(y)=F\{T_s^{-1}(y)\}$.  Moreover, by \eqref{eq:app-bilip}, for all sufficiently small $s$,
\[
\begin{aligned}
|F_s(y)-F_s(z)|
&=
\left|
F\{T_s^{-1}(y)\}-F\{T_s^{-1}(z)\}
\right| \\
&\le
\omega_F\!\left(
|T_s^{-1}(y)-T_s^{-1}(z)|
\right)
\le
\omega_F(C|y-z|),
\end{aligned}
\]
where $\omega_F$ is a modulus of continuity of $F$ near the baseline expectile range and $C$ does not depend on $s$.

If $\widehat F_{n,s_n}$ denotes the empirical CDF of the $n$th row, then
\[
\begin{aligned}
\sup_{\tau\in\T}
\big|
\widehat F_{n,s_n}\{\widehat e_{s_n,m_n}(\tau)\}
-
\widehat F_{n,s_n}\{e_{s_n}(\tau)\}
\big|
&\le
2\|\widehat F_{n,s_n}-F_{s_n}\|_\infty \\
&\quad+
\omega_F\!\left(
C\sup_{\tau\in\T}
|\widehat e_{s_n,m_n}(\tau)-e_{s_n}(\tau)|
\right)
=o_p(1),
\end{aligned}
\]
where the first term is $O_p(m_n^{-1/2})$ by the Dvoretzky--Kiefer--Wolfowitz inequality.  Hence the crossing remainder is $o_p(m_n^{-1/2})$ by the same calculation as in Lemma~\ref{lem:bahadur}.  Together with $D_{n,s_n}(\tau)\ge2c_{\T}$ and the corresponding $O_p(m_n^{-1/2})$ denominator approximation, this yields \eqref{eq:app-path-bahadur} and completes the proof.
\end{proof}

\paragraph{Proof of Theorem~\ref{thm:local}}
\begin{proof}
By translation equivariance, it suffices to work with the innovation distributions.  Since $\E T_s(\varepsilon)=\E\varepsilon+s\E b(\varepsilon)=0$, the mean is unchanged along the path, and therefore
\[
r_n(\tau)=e_{s_n}(\tau)-e(\tau).
\]
Applying Lemma \ref{lem:appendix-path} (i) with $s_n=\kappa/\sqrt N$ gives
\[
\sup_{\tau\in\T}
\left|
\sqrt N\,r_n(\tau)-\kappa h_b(\tau)
\right|
\to0,
\]
which proves \eqref{eq:path-drift}.

For the stochastic part, combine the triangular-array Bahadur representation in Lemma~\ref{lem:appendix-path} for the treatment arm with Lemma~\ref{lem:bahadur} for the control arm, and subtract the sample-mean expansions.  Uniformly over $\tau\in\T$,
\[
\begin{aligned}
S_n(\tau)
={}&
\sqrt N\,r_n(\tau)
+
\sqrt{1-\lambda_n}\,
\mathbb G_{n,1}
\big[
\xi_{s_n,\tau}\{T_{s_n}(\cdot)\}
\big] \\
&\quad-
\sqrt{\lambda_n}\,
\mathbb G_{n,0}\xi_\tau
+o_p(1),
\end{aligned}
\]
where $\mathbb G_{n,1}$ and $\mathbb G_{n,0}$ are the empirical processes generated by two independent samples from the baseline law $F$.  By \eqref{eq:app-path-l2} and the Donsker property in Lemma~\ref{lem:appendix-path}, asymptotic equicontinuity gives
\[
\sup_{\tau\in\T}
\left|
\mathbb G_{n,1}
\left[
\xi_{s_n,\tau}\{T_{s_n}(\cdot)\}-\xi_\tau
\right]
\right|
=o_p(1).
\]
Consequently,
\[
S_n(\tau)
=
\sqrt N\,r_n(\tau)
+
\sqrt{1-\lambda_n}\,
\mathbb G_{n,1}\xi_\tau
-
\sqrt{\lambda_n}\,
\mathbb G_{n,0}\xi_\tau
+o_p(1)
\]
uniformly over $\T$.  The two empirical processes are independent and converge to independent copies of the same centered Gaussian process with covariance kernel $\E\{\xi_\tau(\varepsilon)\xi_{\tau'}(\varepsilon)\}$. Since $\lambda_n\to\lambda\in(0,1)$, their weighted difference has the same limiting covariance kernel and hence converges to the null Gaussian process $\mathbb G_\xi$ in Theorem~\ref{thm:centered-limit}.  Combining this with \eqref{eq:path-drift} gives
\[
S_n\rightsquigarrow\mathbb G_\xi+\kappa h_b
\qquad\text{in }\ell^\infty(\T),
\]
which proves \eqref{eq:S-local}.

For the sign claim, let $B_b(z)=\E[b(\varepsilon)\1\{\varepsilon\le z\}]$. For $z=e(\tau)$, both sides of the threshold have positive probability. Since $b$ is nondecreasing and $\Var\{b(\varepsilon)\}>0$, its conditional mean below $z$ is strictly smaller than that above $z$.  Together with $\E b(\varepsilon)=0$, this implies $B_b\{e(\tau)\}<0$.  Moreover,
\[
\E\{w_\tau(\varepsilon)b(\varepsilon)\}
=
(1-\tau)B_b\{e(\tau)\}
+\tau\big[\E b(\varepsilon)-B_b\{e(\tau)\}\big]
=
(1-2\tau)B_b\{e(\tau)\}.
\]
Hence
\[
h_b(\tau)
=
\frac{2(1-2\tau)B_b\{e(\tau)\}}{D(\tau)}.
\]
Since $D(\tau)>0$, $h_b(\tau)<0$ for $\tau<1/2$, $h_b(\tau)>0$ for $\tau>1/2$, and $h_b(1/2)=0$, completing the proof.
\end{proof}

% =====================================================================
\subsection{Proofs for Section~\ref{sec:inference}}
\label{app:inference}

\paragraph{Proof of Theorem~\ref{thm:multiplier-fclt}}
\begin{proof}
We first establish uniform equivalence between the feasible and oracle multiplier processes and then identify the oracle limits.  Let $\E_V$ and $P^*$ denote expectation and probability over the multipliers conditional on the data.

\emph{Step 1: Feasible-to-oracle approximation.} By definition,
\begin{align*}
\widehat\xi_{d,\tau}(y)-\xi_{d,\tau}(y)
={}&
\frac{
\psi_\tau\{y,\widehat e_d(\tau)\}
-
\psi_\tau\{y,e_d(\tau)\}
}{
\widehat D_d(\tau)
} \\
&+
\psi_\tau\{y,e_d(\tau)\}
\left\{
\frac{1}{\widehat D_d(\tau)}
-
\frac{1}{D_d(\tau)}
\right\}
+
(\bar Y_d-\mu_d).
\end{align*}
Using $|\psi_\tau(y,e)-\psi_\tau(y,e')|\le2|e-e'|$, $\sup_{\tau\in\T}|\psi_\tau\{y,e_d(\tau)\}|\le C(1+|y|)$, and $2c_{\T}\le\widehat D_d(\tau),D_d(\tau)\le2$, we obtain
\begin{align*}
\sup_{\tau\in\T}
|\widehat\xi_{d,\tau}(y)-\xi_{d,\tau}(y)|
\le C\bigg\{&
\sup_{\tau\in\T}
|\widehat e_d(\tau)-e_d(\tau)| \\
&+
(1+|y|)
\sup_{\tau\in\T}
|\widehat D_d(\tau)-D_d(\tau)|
+
|\bar Y_d-\mu_d|
\bigg\}.
\end{align*}
For fixed arm distributions, Lemma~\ref{lem:bahadur}, continuity of the CDFs near the expectile ranges, and the Dvoretzky--Kiefer--Wolfowitz inequality give
\[
\sup_{\tau\in\T}|\widehat e_d(\tau)-e_d(\tau)|
=O_p(n_d^{-1/2}),
\qquad
\sup_{\tau\in\T}|\widehat D_d(\tau)-D_d(\tau)|=o_p(1),
\]
and $\bar Y_d-\mu_d=O_p(n_d^{-1/2})$, the right-hand side is bounded by $a_{d,n}(1+|y|)$ for some $a_{d,n}=o_p(1)$.  Therefore, using $\E_{n,d}(1+|Y|)^2=O_p(1)$,
\begin{equation}\label{eq:app-xi-l2}
\delta_{d,n}
:=
\sup_{\tau\in\T}
\|\widehat\xi_{d,\tau}-\xi_{d,\tau}\|_{\E_{n,d},2}
=
o_p(1).
\end{equation}

Since $\tau\mapsto e_d(\tau)$ is continuous on the compact set $\T$, its range is compact.  Hence we may choose fixed compact intervals $\mathcal K_d$ and $\mathcal M_d$ such that
\[
\{e_d(\tau):\tau\in\T\}\subset\operatorname{int}(\mathcal K_d),
\quad
\mu_d\in\operatorname{int}(\mathcal M_d), \text{ and let }\mathcal A=\left[\frac12,\frac1{2c_{\T}}\right].
\]
Define the function class
\[
\mathcal H_d
=
\left\{
y\mapsto a\psi_\tau(y,e)-y+m:
(\tau,e,a,m)\in
\T\times\mathcal K_d\times\mathcal A\times\mathcal M_d
\right\}.
\]
Since
\[
\xi_{d,\tau}(y)
=
D_d(\tau)^{-1}\psi_\tau\{y,e_d(\tau)\}-y+\mu_d,
\]
with $e_d(\tau)\in\mathcal K_d$, $D_d(\tau)^{-1}\in\mathcal A$, $\mu_d\in\mathcal M_d$, we have $\{\xi_{d,\tau}:\tau\in\T\}\subset\mathcal H_d$. Moreover,
\[
\widehat\xi_{d,\tau}(y)
=
\widehat D_d(\tau)^{-1}
\psi_\tau\{y,\widehat e_d(\tau)\}
-y+\bar Y_d.
\]
By $\sup_{\tau\in\T}|\widehat e_d(\tau)-e_d(\tau)|=o_p(1)$, $\bar Y_d-\mu_d=o_p(1)$  and $\widehat D_d(\tau)^{-1}\in\mathcal A$ for all $\tau\in\T$, it follows that
\[
P\left(
\{\widehat\xi_{d,\tau}:\tau\in\T\}\subset\mathcal H_d
\right)\to1.
\]
As in Lemma~\ref{lem:appendix-regularity}, $\mathcal H_d$ is a finite-dimensional Lipschitz-parametric class with envelope $L_d(y)=C(1+|y|)$.  By \citet[Theorem~2.7.17]{vdvw2023} and the polynomial covering numbers of its compact finite-dimensional parameter space, $\mathcal H_d-\mathcal H_d$ has polynomial packing numbers under $L_2(\E_{n,d})$ after normalization by $\|L_d\|_{\E_{n,d},2}$, uniformly in $n_d$ and the realized sample.

Let
\[
\mathcal G_{d,n}
=
\left\{
g\in\mathcal H_d-\mathcal H_d:
\|g\|_{\E_{n,d},2}\le\delta_{d,n}
\right\}.
\]
On the event of the preceding class inclusion,
\[
\widehat\xi_{d,\tau}-\xi_{d,\tau}\in\mathcal G_{d,n}
\qquad\text{for every }\tau\in\T,
\]
and $0\in\mathcal G_{d,n}$.  Conditional on the data, define
\[
\mathbb M_{d,n}^*g
=
\frac1{\sqrt{n_d}}
\sum_{i=1}^{n_d}V_i^{(d)}g(Y_i^{(d)}),
\qquad
g\in\mathcal G_{d,n}.
\]
Since the multipliers are centered and bounded, $\E_V V_i^{(d)}=0$ and $|V_i^{(d)}|\le C$.  Hence, for every $g,h\in\mathcal G_{d,n}$ and $t\in\mathbb R$, Hoeffding's lemma gives
\begin{align*}
\E_V\exp\left[
t\{\mathbb M_{d,n}^*g-\mathbb M_{d,n}^*h\}
\right]
&=
\prod_{i=1}^{n_d}
\E_V\exp\left[
\frac{tV_i^{(d)}}{\sqrt{n_d}}
\{g-h\}(Y_i^{(d)})
\right] \\
&\le
\exp\left\{
\frac{Ct^2}{n_d}
\sum_{i=1}^{n_d}
\{g-h\}^2(Y_i^{(d)})
\right\} \\
&=
\exp\left\{
Ct^2
\|g-h\|_{\E_{n,d},2}^2
\right\}.
\end{align*}
Thus $\{\mathbb M_{d,n}^*g:g\in\mathcal G_{d,n}\}$ is sub-Gaussian with respect to a constant multiple of the $L_2(\E_{n,d})$ semimetric.  Since $\|g\|_{\E_{n,d},2}\le\delta_{d,n}$ for every $g\in\mathcal G_{d,n}$ and $0\in\mathcal G_{d,n}$, \citet[Corollary~2.2.9]{vdvw2023} yields
\begin{align*}
&\E_V\sup_{\tau\in\T}
\left|
\frac1{\sqrt{n_d}}\sum_{i=1}^{n_d}V_i^{(d)}
\left\{
\widehat\xi_{d,i,\tau}
-\xi_{d,\tau}(Y_i^{(d)})
\right\}
\right| \\
&\quad\le
\E_V\sup_{g\in\mathcal G_{d,n}}
|\mathbb M_{d,n}^*g| \\
&\quad\le
C\int_0^{C\delta_{d,n}}
\sqrt{
\log D\!\left(
u,\mathcal G_{d,n},L_2(\E_{n,d})
\right)
}\,du .
\end{align*}
The polynomial packing bound for $\mathcal H_d-\mathcal H_d$ implies, for some fixed $q<\infty$,
\[
D\!\left(
u,\mathcal G_{d,n},L_2(\E_{n,d})
\right)
\le
C\left(
1+\frac{C\|L_d\|_{\E_{n,d},2}}{u}
\right)^q .
\]
Consequently,
\begin{equation*}
\E_V\sup_{\tau\in\T}
\left|
\frac1{\sqrt{n_d}}\sum_{i=1}^{n_d}V_i^{(d)}
\left\{
\widehat\xi_{d,i,\tau}
-\xi_{d,\tau}(Y_i^{(d)})
\right\}
\right| 
\le
C\delta_{d,n}
\left[
1+
\sqrt{
\log\left(
1+\frac{\|L_d\|_{\E_{n,d},2}}{\delta_{d,n}}
\right)}
\right]
=o_p(1),
\end{equation*}
where the bound is interpreted as zero when $\delta_{d,n}=0$. Indeed, $\delta_{d,n}=o_p(1)$, $\|L_d\|_{\E_{n,d},2}=O_p(1)$, and $x\sqrt{\log(1+M/x)}\to0$ as $x\downarrow0$ for every $M<\infty$. Conditional Markov's inequality yields
\begin{equation}\label{eq:app-multiplier-negligible}
\sup_{\tau\in\T}
\left|
\frac1{\sqrt{n_d}}\sum_{i=1}^{n_d}V_i^{(d)}
\{\widehat\xi_{d,i,\tau}-\xi_{d,\tau}(Y_i^{(d)})\}
\right|
=o_{P^*}(1)
\end{equation}
in probability.

The preceding feasible-to-oracle argument applies to any fixed arm distribution satisfying the stated regularity conditions and hence does not depend on $H_0$.  Along $H_{1n}(b,\kappa)$, the treatment distribution varies with $n$.  Parts~(ii)--(iii) of Lemma~\ref{lem:appendix-path}, together with the uniform bounds established there, ensure that the same empirical-$L_2$ approximation and packing bounds hold uniformly along $s_n\to0$.  Hence \eqref{eq:app-multiplier-negligible} continues to hold under the local alternatives.

\emph{Step 2: Oracle limits.} For fixed arm distributions, $\{\xi_{d,\tau}:\tau\in\T\}$ is $F_d$-Donsker by Lemma~\ref{lem:appendix-regularity}, with $\E_d\xi_{d,\tau}=0$.  Since the multipliers are independent of the data, centered, have unit variance, and are bounded, they satisfy the $L_{2,1}$ condition of \citet[Theorem~2.9.6]{vdvw2023}.  Hence, for each arm,
\[
\frac1{\sqrt{n_d}}\sum_{i=1}^{n_d}
V_i^{(d)}\xi_{d,\cdot}(Y_i^{(d)})
\rightsquigarrow^*
\mathbb G_{\xi,d}
\qquad
\text{in probability in }\ell^\infty(\T),
\]
where $\mathbb G_{\xi,d}$ is centered Gaussian with covariance $\E_d\{\xi_{d,\tau}\xi_{d,\tau'}\}$.

Since the two multiplier sequences are conditionally independent, $\lambda_n\to\lambda$, and \eqref{eq:app-multiplier-negligible} holds, conditional Slutsky's theorem gives the feasible two-sample multiplier process the centered Gaussian limit with covariance kernel
\[
(1-\lambda)\E_1\{\xi_{1,\tau}\xi_{1,\tau'}\}
+
\lambda\E_0\{\xi_{0,\tau}\xi_{0,\tau'}\}.
\]
Under $H_0$, translation equivariance makes the two arm-specific kernels equal, so this reduces to $\Gamma(\tau,\tau')$; under a fixed alternative, it equals $\Gamma_A(\tau,\tau')$.

Along $H_{1n}(b,\kappa)$, let $f_{s,\tau}(x)=\xi_{s,\tau}\{T_s(x)\}$.  By Lemma~\ref{lem:appendix-path}, $\{f_{s,\tau}:|s|\le s_0,\tau\in\T\}$ is contained in a common $F$-Donsker class, $\E_F f_{s,\tau}=0$, and
\[
\sup_{\tau\in\T}
\|f_{s_n,\tau}-\xi_\tau\|_{F,2}
=o(1).
\]
The conditional asymptotic equicontinuity implied by \citet[Theorem~2.9.6]{vdvw2023} therefore gives
\[
\sup_{\tau\in\T}
\left|
\frac1{\sqrt{n_1}}
\sum_{i=1}^{n_1}V_i^{(1)}
\{f_{s_n,\tau}(\varepsilon_i^{(1)})
-\xi_\tau(\varepsilon_i^{(1)})\}
\right|
=o_{P^*}(1)
\]
in probability.  Hence the local oracle treatment process may be replaced by its baseline counterpart; together with the unchanged control process, the oracle two-sample process converges to $\mathbb G_\xi$.

Combining Steps~1 and~2,
\[
S_n^*
=
S_n^{o,*}+o_{P^*}(1)
\rightsquigarrow^*
\begin{cases}
\mathbb G_\xi,
& \text{under } H_0 \text{ and } H_{1n}(b,\kappa),\\[2mm]
\mathbb G_\xi^A,
& \text{under a fixed alternative},
\end{cases}
\qquad
\text{in probability in }\ell^\infty(\T),
\]
where $S_n^{o,*}$ denotes the corresponding oracle two-sample multiplier process.  This completes the proof.
\end{proof}

\paragraph{Proof of Proposition~\ref{prop:factorization}}
\begin{proof}
Suppress the arm subscript throughout.  From
\[
\psi_\tau(y,e)
=
(y-e)+c_\tau|y-e|,
\]
the population and sample score equations give
\[
0=\mu-e(\tau)+c_\tau m(\tau),
\qquad
0=\bar Y-\widehat e(\tau)+c_\tau\widehat m(\tau),
\]
and hence
\begin{equation*}
e(\tau)-\mu=c_\tau m(\tau),
\qquad
\widehat e(\tau)-\bar Y=c_\tau\widehat m(\tau).
\end{equation*}

We next linearize $\widehat m(\tau)$.  The identity
\[
\begin{aligned}
|y-\widehat e(\tau)|-|y-e(\tau)|
={}&
\{\widehat e(\tau)-e(\tau)\}
\{2\1(y\le e(\tau))-1\} \\
&+
2\{\widehat e(\tau)-y\}
\{\1(y\le\widehat e(\tau))-\1(y\le e(\tau))\}
\end{aligned}
\]
and the crossing-mass argument in Lemma~\ref{lem:bahadur} imply, uniformly over $\tau\in\T$,
\begin{align*}
\widehat m(\tau)-m(\tau)
={}&
(\E_n-\E)|Y-e(\tau)| \\
&+
\{\widehat e(\tau)-e(\tau)\}
\{2\E_n\1(Y\le e(\tau))-1\}
+o_p(n^{-1/2}) \\
={}&
(\E_n-\E)|Y-e(\tau)|
-a(\tau)\{\widehat e(\tau)-e(\tau)\}
+o_p(n^{-1/2}),
\end{align*}
where the second equality follows from the Dvoretzky--Kiefer--Wolfowitz inequality, $2F\{e(\tau)\}-1=-a(\tau)$, and $\sup_{\tau\in\T}|\widehat e(\tau)-e(\tau)|=O_p(n^{-1/2})$. Lemma~\ref{lem:bahadur} therefore gives
\begin{align*}
\widehat m(\tau)-m(\tau)
&=
\E_n
\left[
|Y-e(\tau)|-m(\tau)-a(\tau)\phi_\tau(Y)
\right]
+o_p(n^{-1/2}) \\
&=
\E_n\chi_\tau+o_p(n^{-1/2}),
\end{align*}
because $D(\tau)=1+c_\tau a(\tau)$ and the score-equation identity $e(\tau)-\mu=c_\tau m(\tau)$ imply
\[
|y-e(\tau)|-m(\tau)-a(\tau)\phi_\tau(y)
=
\frac{
|y-e(\tau)|-m(\tau)-a(\tau)(y-\mu)
}{
D(\tau)
}
=
\chi_\tau(y).
\]
This proves \eqref{eq:m-linearization}.

Finally,
\begin{align*}
\xi_\tau(y)
&=
\frac{y-e(\tau)+c_\tau|y-e(\tau)|}{D(\tau)}
-(y-\mu) \\
&=
\frac{
c_\tau\{|y-e(\tau)|-m(\tau)-a(\tau)(y-\mu)\}
}{
D(\tau)
}
=
c_\tau\chi_\tau(y),
\end{align*}
where the second equality follows from the same score-equation identity and $D(\tau)=1+c_\tau a(\tau)$.  Likewise,
\[
\widehat e(\tau)-\bar Y=c_\tau\widehat m(\tau),
\qquad
\widehat D(\tau)=1+c_\tau\widehat a(\tau)
\]
give $\widehat\xi_{i,\tau}
=
c_\tau\widehat\chi_{i,\tau}$. Moreover,
\[
\E_n\widehat\chi_\tau
=
\frac{
\E_n|Y-\widehat e(\tau)|
-\widehat m(\tau)
-\widehat a(\tau)\E_n(Y-\bar Y)
}{
\widehat D(\tau)
}
=0.
\]
Applying the two identities arm by arm yields
\[
r(\tau)=c_\tau q(\tau),
\qquad
S_n(\tau)=c_\tau A_n(\tau).
\]
\end{proof}

\paragraph{Proof of Corollary~\ref{cor:studentized}}
\begin{proof}
By the same finite-dimensional Lipschitz-parametric argument as in Lemma~\ref{lem:appendix-regularity}, $\{\chi_{d,\tau}:\tau\in\T\}$ is $F_d$-Donsker with envelope $C(1+|Y|)$.  Applying \eqref{eq:m-linearization} to the two arms gives, uniformly over $\T$,
\begin{equation}\label{eq:app-A-linearization}
A_n(\tau)
=
\sqrt N\,q(\tau)
+
\sqrt{1-\lambda_n}\,\mathbb G_{n,1}\chi_{1,\tau}
-
\sqrt{\lambda_n}\,\mathbb G_{n,0}\chi_{0,\tau}
+
o_p(1).
\end{equation}
Under $H_0$, $q(\tau)=0$ and translation equivariance gives $\chi_{1,\tau}(Y(1)) \stackrel{d}{=} \chi_{0,\tau}(Y(0))$. Hence, by independence of the two samples and $\lambda_n\to\lambda$,
\[
A_n\rightsquigarrow\mathbb G_\chi
\qquad\text{in }\ell^\infty(\T),
\]
where $\mathbb G_\chi$ has covariance kernel $K(\tau,\tau')$.

For studentization, uniform consistency of $\widehat e_d$, $\widehat m_d$, $\widehat a_d$, and $\widehat D_d$, together with $\inf_{\tau\in\T}D_d(\tau)>0$, gives
\begin{equation}\label{eq:app-chi-l2}
\sup_{\tau\in\T}
\|\widehat\chi_{d,\tau}-\chi_{d,\tau}\|_{\E_{n,d},2}
=o_p(1).
\end{equation}
Moreover, $\{\chi_{d,\tau}^2:\tau\in\T\}$ is Glivenko--Cantelli with envelope $C(1+Y^2)$.  Therefore,
\[
\sup_{\tau\in\T}
|\widehat v_{A,n}^2(\tau)-v_A^2(\tau)|
=o_p(1),
\]
by \eqref{eq:app-chi-l2} and Cauchy--Schwarz.  Since $\inf_{\tau\in\T}v_A^2(\tau)>0$, functional Slutsky yields
\[
U_n
\rightsquigarrow
\frac{\mathbb G_\chi}{v_A}
=
\mathbb U
\qquad\text{in }\ell^\infty(\T).
\]
The limits of $KS_n^{\mathrm{stud}}$ and $CvM_n^{\mathrm{stud}}$ then follow by the continuous mapping theorem.
\end{proof}

\paragraph{Local limit of the studentized process.}
We verify the local-limit statement following Corollary~\ref{cor:studentized}.  Applying the absolute-value expansion from Proposition~\ref{prop:factorization} along $T_s$ gives, uniformly over $\T$,
\[
m_s(\tau)
=
m(\tau)+s g_b(\tau)+o(|s|),
\]
where, using $\E b(\varepsilon)=0$, \eqref{eq:hb-def}, and $D(\tau)=1+c_\tau a(\tau)$,
\begin{align*}
g_b(\tau)
&=
-2\E\!\left[
b(\varepsilon)\1\{\varepsilon\le e(\tau)\}
\right]
-a(\tau)h_b(\tau) \\
&=
-\frac{
2\E\!\left[
b(\varepsilon)\1\{\varepsilon\le e(\tau)\}
\right]
}{
D(\tau)
}.
\end{align*}
In particular, $h_b(\tau)=c_\tau g_b(\tau)$, as also follows by differentiating $e_s(\tau)-\E T_s(\varepsilon)=c_\tau m_s(\tau)$.

The same pathwise argument gives
\[
\sup_{\tau\in\T}
\left\|
\chi_{s_n,\tau}\{T_{s_n}(\varepsilon)\}
-\chi_\tau(\varepsilon)
\right\|_{F,2}
=o(1),
\]
with all these functions contained in a common $F$-Donsker class. Hence, for $s_n=\kappa/\sqrt N$,
\[
\sup_{\tau\in\T}
\left|
\sqrt N\,q_n(\tau)-\kappa g_b(\tau)
\right|
\to0,
\qquad
A_n
\rightsquigarrow
\mathbb G_\chi+\kappa g_b
\quad\text{in }\ell^\infty(\T).
\]
The variance-consistency argument in \eqref{eq:app-chi-l2} applies uniformly along $s_n\to0$, so
\[
\sup_{\tau\in\T}
|\widehat v_{A,n}(\tau)-v_A(\tau)|
=o_p(1).
\]
Functional Slutsky therefore yields
\[
U_n
\rightsquigarrow
\frac{\mathbb G_\chi+\kappa g_b}{v_A}
\qquad
\text{in }\ell^\infty(\T).
\]

\paragraph{Proof of Theorem~\ref{thm:covariate-adjustment}}
\begin{proof}
Write $\widetilde Z=Z-\E Z$.  Since $B_n=\sqrt N(\bar Z_1-\bar Z_0)$,
\begin{equation}\label{eq:app-B-linearization}
B_n
=
\sqrt{1-\lambda_n}\,\mathbb G_{n,1}\widetilde Z
-
\sqrt{\lambda_n}\,\mathbb G_{n,0}\widetilde Z.
\end{equation}
Hence
\[
B_n=O_p(1),
\qquad
\Var(B_n)\to\Sigma_Z.
\]

We first show that $\widehat\gamma$ is uniformly consistent.  By the law of large numbers, $\widehat\Sigma_Z\xrightarrow{p}\Sigma_Z$. Moreover, $\E_{n,d}\widehat\xi_{d,\tau}=0$, so
\[
\E_{n,d}\{(Z-\bar Z_d)\widehat\xi_{d,\tau}\}
=
\E_{n,d}\{\widetilde Z\,\widehat\xi_{d,\tau}\}.
\]
For each arm,
\begin{align*}
&\sup_{\tau\in\T}
\left\|
\E_{n,d}\{\widetilde Z\,\widehat\xi_{d,\tau}\}
-
\E_d\{\widetilde Z\,\xi_{d,\tau}\}
\right\| \\
&\qquad\le
\sup_{\tau\in\T}
\left\|
\E_{n,d}\{\widetilde Z
(\widehat\xi_{d,\tau}-\xi_{d,\tau})\}
\right\|
+
\sup_{\tau\in\T}
\left\|
(\E_{n,d}-\E_d)
\{\widetilde Z\,\xi_{d,\tau}\}
\right\|.
\end{align*}
The first term is $o_p(1)$ by \eqref{eq:app-xi-l2}, Cauchy--Schwarz, and $\E_{n,d}\|\widetilde Z\|^2=O_p(1)$.  The second is $o_p(1)$ because $\{\widetilde Z\,\xi_{d,\tau}:\tau\in\T\}$ is Glivenko--Cantelli with integrable envelope $C\|\widetilde Z\|(1+|Y|)$.  Since $\lambda_n\to\lambda$,
\begin{equation}\label{eq:app-gamma-consistency}
\sup_{\tau\in\T}
\|\widehat C_\xi(\tau)-C_\xi(\tau)\|
=o_p(1),
\qquad
\sup_{\tau\in\T}
\|\widehat\gamma(\tau)-\gamma(\tau)\|
=o_p(1).
\end{equation}
Here the second conclusion follows from $\widehat\Sigma_Z\xrightarrow{p}\Sigma_Z$ and the positive definiteness of $\Sigma_Z$, which also implies that $\widehat\Sigma_Z$ is invertible with probability tending to one.

Using \eqref{eq:S-linear}, \eqref{eq:app-B-linearization}, and \eqref{eq:app-gamma-consistency},
\begin{align*}
S_n^{\mathrm{adj}}(\tau)
={}&
S_n(\tau)-\gamma(\tau)^\top B_n
-
\{\widehat\gamma(\tau)-\gamma(\tau)\}^\top B_n \\
={}&
\sqrt N\,r(\tau)
+
\sqrt{1-\lambda_n}\,
\mathbb G_{n,1}
\{\xi_{1,\tau}-\gamma(\tau)^\top\widetilde Z\} \\
&-
\sqrt{\lambda_n}\,
\mathbb G_{n,0}
\{\xi_{0,\tau}-\gamma(\tau)^\top\widetilde Z\}
+
o_p(1),
\end{align*}
uniformly over $\tau\in\T$, since $B_n=O_p(1)$.  This proves \eqref{eq:ca-linearization}.

The classes
\[
\left\{
\xi_{d,\tau}-\gamma(\tau)^\top\widetilde Z:
\tau\in\T
\right\}
\]
are Donsker under the joint arm-$d$ law of $(Y(d),Z)$: $\{\xi_{d,\tau}\}$ is Donsker by Lemma~\ref{lem:appendix-regularity}, while $\{\gamma(\tau)^\top\widetilde Z:\tau\in\T\}$ is contained in the finite-dimensional linear span of $\widetilde Z$ with bounded coefficients.  The Donsker property and independence of the two samples therefore yield the asserted Gaussian limit.

Under $H_0$, the covariance of the unadjusted Gaussian term is $\Gamma(\tau,s)$.  Writing
\[
C_\xi(\tau)
=
(1-\lambda)\E_1\{\widetilde Z\,\xi_{1,\tau}\}
+
\lambda\E_0\{\widetilde Z\,\xi_{0,\tau}\},
\]
and using $\gamma(\tau)=\Sigma_Z^{-1}C_\xi(\tau)$, we obtain
\begin{align*}
\Gamma_{\mathrm{adj}}(\tau,s)
&=
\Gamma(\tau,s)
-\gamma(\tau)^\top C_\xi(s)
-C_\xi(\tau)^\top\gamma(s)
+\gamma(\tau)^\top\Sigma_Z\gamma(s) \\
&=
\Gamma(\tau,s)
-
C_\xi(\tau)^\top\Sigma_Z^{-1}C_\xi(s),
\end{align*}
which proves \eqref{eq:ca-kernel}.  Finally, for any $\tau_1,\ldots,\tau_J\in\T$ and $a_1,\ldots,a_J\in\mathbb R$,
\[
\sum_{j,k=1}^J
a_ja_k
\{\Gamma-\Gamma_{\mathrm{adj}}\}(\tau_j,\tau_k)
=
\left\|
\Sigma_Z^{-1/2}
\sum_{j=1}^J a_j C_\xi(\tau_j)
\right\|^2
\ge0.
\]
Hence $\Gamma_{\mathrm{adj}}\preceq\Gamma$.
\end{proof}

\clearpage
\section{Additional Monte Carlo Results}
\label{app:additional-monte-carlo}

The main text reports results under balanced assignment. This section examines moderate treatment-control imbalance using $(n_1,n_0)\in\{(50,100),(100,200),(200,400)\}$, corresponding to treatment share $n_1/n=1/3$. The qualitative size conclusions are similar to those under balanced assignment: eMBT remains close to the nominal level in the three symmetric designs, while the smaller lognormal samples are more demanding. The principal power patterns are likewise qualitatively similar to the balanced case across the scale and Normal-to-$t_5$ alternatives.

\begin{table}[htbp]
\centering
\begin{threeparttable}
\caption{Empirical size under moderately unbalanced assignment}
\label{tab:unbalanced-size}
\small
\setlength{\tabcolsep}{1.6pt}
\renewcommand{\arraystretch}{1.04}
\begin{tabular}{llrrrrrrrr}
\toprule
&&\multicolumn{4}{c}{eMBT}&\multicolumn{4}{c}{Comparison tests}\\
\cmidrule(lr){3-6}\cmidrule(lr){7-10}
Innovation & $(n_1,n_0)$ & KS & CvM & S--KS & S--CvM & FRT-CI & RS &
D-mtPT & Q-mtPT\\
\midrule
\multirow{3}{*}{Normal}
&(50,100)&0.056&0.064&0.068&0.063&0.030&0.067&0.041&0.038\\
&(100,200)&0.044&0.048&0.056&0.055&0.035&0.069&0.046&0.048\\
&(200,400)&0.038&0.044&0.044&0.049&0.038&0.082&0.047&0.042\\
\addlinespace
\multirow{3}{*}{$t_5$}
&(50,100)&0.045&0.047&0.058&0.050&0.019&0.037&0.042&0.050\\
&(100,200)&0.051&0.058&0.060&0.055&0.031&0.034&0.034&0.047\\
&(200,400)&0.039&0.041&0.043&0.047&0.033&0.023&0.036&0.026\\
\addlinespace
\multirow{3}{*}{Lognormal}
&(50,100)&0.073&0.068&0.078&0.064&0.039&0.029&0.136&0.164\\
&(100,200)&0.085&0.073&0.068&0.065&0.036&0.020&0.096&0.102\\
&(200,400)&0.055&0.054&0.045&0.047&0.030&0.030&0.102&0.086\\
\addlinespace
\multirow{3}{*}{Product-normal}
&(50,100)&0.060&0.054&0.082&0.058&0.032&0.048&0.053&0.069\\
&(100,200)&0.063&0.059&0.068&0.064&0.043&0.030&0.056&0.057\\
&(200,400)&0.056&0.057&0.060&0.058&0.032&0.034&0.052&0.051\\
\bottomrule
\end{tabular}
\begin{tablenotes}[flushleft]
\footnotesize
\item Notes: Entries are rejection frequencies at the $5\%$ nominal level
based on 1,000 Monte Carlo replications of \eqref{eq:simulation-dgp} with
$\sigma_\delta=0$ and $n_1/n=1/3$.  S denotes pointwise studentization;
RS denotes recentered subsampling; D-mtPT and Q-mtPT denote the CDF- and
quantile-based martingale-transformed permutation tests, respectively.  The
latter two are implemented using \texttt{RATest} 0.1.12.
\end{tablenotes}
\end{threeparttable}
\end{table}
\FloatBarrier

\begin{table}[p]
\centering
\begin{threeparttable}
\caption{Empirical power under moderately unbalanced assignment}
\label{tab:unbalanced-power}
\scriptsize
\setlength{\tabcolsep}{1.6pt}
\renewcommand{\arraystretch}{1.00}
\begin{tabular}{llrrrrrrrr}
\toprule
&&\multicolumn{4}{c}{eMBT}&\multicolumn{4}{c}{Comparison tests}\\
\cmidrule(lr){3-6}\cmidrule(lr){7-10}
Innovation & $(n_1,n_0)$ & KS & CvM & S--KS & S--CvM & FRT-CI & RS &
D-mtPT & Q-mtPT\\
\midrule
\multicolumn{10}{l}{\textit{Panel A: $\sigma_\delta=0.2$}}\\[2pt]
\multirow{3}{*}{Normal}
&(50,100)&0.217&0.264&0.242&0.259&0.072&0.168&0.122&0.172\\
&(100,200)&0.457&0.510&0.471&0.506&0.151&0.389&0.221&0.258\\
&(200,400)&0.754&0.806&0.761&0.785&0.342&0.633&0.451&0.460\\
\addlinespace
\multirow{3}{*}{$t_5$}
&(50,100)&0.125&0.157&0.171&0.172&0.058&0.097&0.117&0.137\\
&(100,200)&0.226&0.310&0.294&0.333&0.115&0.141&0.180&0.233\\
&(200,400)&0.424&0.525&0.525&0.579&0.215&0.275&0.273&0.432\\
\addlinespace
\multirow{3}{*}{Lognormal}
&(50,100)&0.079&0.077&0.078&0.068&0.087&0.048&0.270&0.246\\
&(100,200)&0.080&0.076&0.103&0.087&0.127&0.040&0.325&0.271\\
&(200,400)&0.093&0.122&0.171&0.150&0.204&0.088&0.402&0.356\\
\addlinespace
\multirow{3}{*}{Product-normal}
&(50,100)&0.085&0.095&0.114&0.108&0.040&0.079&0.065&0.129\\
&(100,200)&0.151&0.197&0.193&0.218&0.046&0.097&0.056&0.154\\
&(200,400)&0.259&0.345&0.311&0.355&0.054&0.177&0.060&0.222\\
\midrule
\multicolumn{10}{l}{\textit{Panel B: $\sigma_\delta=0.5$}}\\[2pt]
\multirow{3}{*}{Normal}
&(50,100)&0.773&0.831&0.809&0.824&0.350&0.619&0.299&0.519\\
&(100,200)&0.982&0.989&0.985&0.988&0.752&0.923&0.643&0.842\\
&(200,400)&1.000&1.000&1.000&1.000&0.978&0.997&0.955&0.983\\
\addlinespace
\multirow{3}{*}{$t_5$}
&(50,100)&0.469&0.637&0.653&0.671&0.210&0.369&0.280&0.454\\
&(100,200)&0.788&0.909&0.909&0.927&0.498&0.579&0.517&0.747\\
&(200,400)&0.984&0.999&0.999&0.999&0.888&0.906&0.854&0.975\\
\addlinespace
\multirow{3}{*}{Lognormal}
&(50,100)&0.134&0.143&0.191&0.173&0.220&0.110&0.485&0.547\\
&(100,200)&0.204&0.290&0.425&0.372&0.470&0.201&0.698&0.687\\
&(200,400)&0.358&0.492&0.750&0.655&0.790&0.543&0.918&0.887\\
\addlinespace
\multirow{3}{*}{Product-normal}
&(50,100)&0.275&0.348&0.387&0.390&0.054&0.254&0.109&0.271\\
&(100,200)&0.532&0.673&0.660&0.711&0.049&0.316&0.097&0.476\\
&(200,400)&0.861&0.936&0.936&0.955&0.073&0.604&0.166&0.723\\
\midrule
\multicolumn{10}{l}{\textit{Panel C: variance-matched normal-to-$t_5$ alternative}}\\[2pt]
\multirow{3}{*}{Normal/$t_5$}
&(50,100)&0.098&0.107&0.129&0.119&0.061&0.094&0.037&0.113\\
&(100,200)&0.096&0.124&0.150&0.146&0.129&0.160&0.041&0.215\\
&(200,400)&0.108&0.173&0.216&0.231&0.286&0.285&0.087&0.400\\
\midrule
\multicolumn{10}{l}{\textit{Panel D: unstandardized normal-to-$t_5$ alternative}}\\[2pt]
\multirow{3}{*}{Normal/$t_5$}
&(50,100)&0.169&0.182&0.189&0.184&0.038&0.125&0.067&0.091\\
&(100,200)&0.352&0.386&0.342&0.369&0.058&0.202&0.098&0.107\\
&(200,400)&0.654&0.706&0.649&0.656&0.107&0.367&0.203&0.168\\
\bottomrule
\end{tabular}
\begin{tablenotes}[flushleft]
\footnotesize
\item Notes: Entries are rejection frequencies at the $5\%$ nominal level
based on 1,000 Monte Carlo replications of \eqref{eq:simulation-dgp} with
$n_1/n=1/3$ in Panels A and B and of
\eqref{eq:simulation-shape-dgp} in Panels C and D.  They are not adjusted for
differences in empirical size.  S denotes pointwise studentization; RS denotes
recentered subsampling; D-mtPT and Q-mtPT denote the CDF- and quantile-based
martingale-transformed permutation tests, respectively.  The latter two are
implemented using \texttt{RATest} 0.1.12.
\end{tablenotes}
\end{threeparttable}
\end{table}
\FloatBarrier

\FloatBarrier
\clearpage
\putbib
\end{bibunit}

\end{document}